\def\RSthmtxt{theorem~}\newref{thm}{name = \RSthmtxt}}
\def\RSlemtxt{lemma~}\newref{lem}{name = \RSlemtxt}}
\numberwithin{equation}{section}
\numberwithin{figure}{section}
\numberwithin{table}{section}
\theoremstyle{remark}
\newtheorem*{notation*}{\protect\notationname}
\theoremstyle{plain}
\newtheorem{lem}{\protect\lemmaname}[section]
\theoremstyle{remark}
\newtheorem{rem}{\protect\remarkname}[section]
\theoremstyle{definition}
\newtheorem{defn}{\protect\definitionname}[section]
\theoremstyle{plain}
\newtheorem{assumption}{\protect\assumptionname}
\theoremstyle{plain}
\newtheorem{prop}{\protect\propositionname}[section]
\theoremstyle{plain}
\newtheorem{thm}{\protect\theoremname}[section]
\theoremstyle{plain}
\newtheorem{cor}{\protect\corollaryname}[section]
\theoremstyle{definition}
\newtheorem{example}{\protect\examplename}[section]
\setlist[enumerate,1]{label=\upshape{(\roman*)}, ref=(\roman*)}
\setlist[enumerate,2]{label=\upshape{(\alph*)}, ref=(\alph*)}
\setlist[enumerate,3]{label=\upshape{\roman*.}, ref=\roman*}
\date{}
\newcommand\smaller[2][0.85]{{\scalefont{#1}#2}}
\newcommand{\ass}[1]{{\upshape{\smaller[0.76]{#1}}}}
\newcommand{\assumpname}[1]{%
  \renewcommand{\theassumption}{\ass{#1}}%
}
\newsavebox{\@brx}
\newcommand{\dbllangle}[1][]{\savebox{\@brx}{\(\m@th{#1\langle}\)}%
  \mathopen{\copy\@brx\kern-0.5\wd\@brx\usebox{\@brx}}}
\newcommand{\dblrangle}[1][]{\savebox{\@brx}{\(\m@th{#1\rangle}\)}%
  \mathclose{\copy\@brx\kern-0.5\wd\@brx\usebox{\@brx}}}
\theoremstyle{definition}
\renewenvironment{example}
  {\pushQED{\qed}\examplex}
  {\popQED\endexamplex}
\numberwithin{examplex}{section}
\newcounter{subremark}[rem]
\renewcommand{\thesubremark}{(\roman{subremark})}
\newcommand{\subremark}{%
  \refstepcounter{subremark}%
  \thesubremark{}.%
}
\newcounter{savedexnumber}
\newcommand{\authaffil}[2]{\Shortunderstack{#1\\\small{#2}}}
\providecommand{\assumptionname}{Assumption}
\providecommand{\corollaryname}{Corollary}
\providecommand{\definitionname}{Definition}
\providecommand{\examplename}{Example}
\providecommand{\lemmaname}{Lemma}
\providecommand{\notationname}{Notation}
\providecommand{\propositionname}{Proposition}
\providecommand{\remarkname}{Remark}
\providecommand{\theoremname}{Theorem}
\begin{document}


\global\long\def\uwrite#1#2{\underset{#2}{\underbrace{#1}} }%

\global\long\def\blw#1{\ensuremath{\underline{#1}}}%

\global\long\def\abv#1{\ensuremath{\overline{#1}}}%

\global\long\def\vect#1{\mathbf{#1}}%


\global\long\def\smlseq#1{\{#1\} }%

\global\long\def\seq#1{\left\{  #1\right\}  }%

\global\long\def\smlsetof#1#2{\{#1\mid#2\} }%

\global\long\def\setof#1#2{\left\{  #1\mid#2\right\}  }%


\global\long\def\goesto{\ensuremath{\rightarrow}}%

\global\long\def\ngoesto{\ensuremath{\nrightarrow}}%

\global\long\def\uto{\ensuremath{\uparrow}}%

\global\long\def\dto{\ensuremath{\downarrow}}%

\global\long\def\uuto{\ensuremath{\upuparrows}}%

\global\long\def\ddto{\ensuremath{\downdownarrows}}%

\global\long\def\ulrto{\ensuremath{\nearrow}}%

\global\long\def\dlrto{\ensuremath{\searrow}}%


\global\long\def\setmap{\ensuremath{\rightarrow}}%

\global\long\def\elmap{\ensuremath{\mapsto}}%

\global\long\def\compose{\ensuremath{\circ}}%

\global\long\def\cont{C}%

\global\long\def\cadlag{D}%

\global\long\def\Ellp#1{\ensuremath{\mathcal{L}^{#1}}}%


\global\long\def\naturals{\ensuremath{\mathbb{N}}}%

\global\long\def\reals{\mathbb{R}}%

\global\long\def\complex{\mathbb{C}}%

\global\long\def\rationals{\mathbb{Q}}%

\global\long\def\integers{\mathbb{Z}}%


\global\long\def\abs#1{\ensuremath{\left|#1\right|}}%

\global\long\def\smlabs#1{\ensuremath{\lvert#1\rvert}}%
 
\global\long\def\bigabs#1{\ensuremath{\bigl|#1\bigr|}}%
 
\global\long\def\Bigabs#1{\ensuremath{\Bigl|#1\Bigr|}}%
 
\global\long\def\biggabs#1{\ensuremath{\biggl|#1\biggr|}}%

\global\long\def\norm#1{\ensuremath{\left\Vert #1\right\Vert }}%

\global\long\def\smlnorm#1{\ensuremath{\lVert#1\rVert}}%
 
\global\long\def\bignorm#1{\ensuremath{\bigl\|#1\bigr\|}}%
 
\global\long\def\Bignorm#1{\ensuremath{\Bigl\|#1\Bigr\|}}%
 
\global\long\def\biggnorm#1{\ensuremath{\biggl\|#1\biggr\|}}%

\global\long\def\floor#1{\left\lfloor #1\right\rfloor }%
\global\long\def\smlfloor#1{\lfloor#1\rfloor}%

\global\long\def\ceil#1{\left\lceil #1\right\rceil }%
\global\long\def\smlceil#1{\lceil#1\rceil}%


\global\long\def\Union{\ensuremath{\bigcup}}%

\global\long\def\Intsect{\ensuremath{\bigcap}}%

\global\long\def\union{\ensuremath{\cup}}%

\global\long\def\intsect{\ensuremath{\cap}}%

\global\long\def\pset{\ensuremath{\mathcal{P}}}%

\global\long\def\clsr#1{\ensuremath{\overline{#1}}}%

\global\long\def\symd{\ensuremath{\Delta}}%

\global\long\def\intr{\operatorname{int}}%

\global\long\def\cprod{\otimes}%

\global\long\def\Cprod{\bigotimes}%


\global\long\def\smlinprd#1#2{\ensuremath{\langle#1,#2\rangle}}%

\global\long\def\inprd#1#2{\ensuremath{\left\langle #1,#2\right\rangle }}%

\global\long\def\orthog{\ensuremath{\perp}}%

\global\long\def\dirsum{\ensuremath{\oplus}}%


\global\long\def\spn{\operatorname{sp}}%

\global\long\def\rank{\operatorname{rk}}%

\global\long\def\proj{\operatorname{proj}}%

\global\long\def\tr{\operatorname{tr}}%

\global\long\def\vek{\operatorname{vec}}%

\global\long\def\diag{\operatorname{diag}}%

\global\long\def\col{\operatorname{col}}%


\global\long\def\smpl{\ensuremath{\Omega}}%

\global\long\def\elsmp{\ensuremath{\omega}}%

\global\long\def\sigf#1{\mathcal{#1}}%

\global\long\def\sigfield{\ensuremath{\mathcal{F}}}%
\global\long\def\sigfieldg{\ensuremath{\mathcal{G}}}%

\global\long\def\flt#1{\mathcal{#1}}%

\global\long\def\filt{\mathcal{F}}%
\global\long\def\filtg{\mathcal{G}}%

\global\long\def\Borel{\ensuremath{\mathcal{B}}}%

\global\long\def\cyl{\ensuremath{\mathcal{C}}}%

\global\long\def\nulls{\ensuremath{\mathcal{N}}}%

\global\long\def\gauss{\mathfrak{g}}%

\global\long\def\leb{\mathfrak{m}}%


\global\long\def\prob{\ensuremath{\mathbb{P}}}%

\global\long\def\Prob{\ensuremath{\mathbb{P}}}%

\global\long\def\Probs{\mathcal{P}}%

\global\long\def\PROBS{\mathcal{M}}%

\global\long\def\expect{\ensuremath{\mathbb{E}}}%

\global\long\def\Expect{\ensuremath{\mathbb{E}}}%

\global\long\def\probspc{\ensuremath{(\smpl,\filt,\Prob)}}%


\global\long\def\iid{\ensuremath{\textnormal{i.i.d.}}}%

\global\long\def\as{\ensuremath{\textnormal{a.s.}}}%

\global\long\def\asp{\ensuremath{\textnormal{a.s.p.}}}%

\global\long\def\io{\ensuremath{\ensuremath{\textnormal{i.o.}}}}%

\newcommand\independent{\protect\mathpalette{\protect\independenT}{\perp}}
\def\independenT#1#2{\mathrel{\rlap{$#1#2$}\mkern2mu{#1#2}}}

\global\long\def\indep{\independent}%

\global\long\def\distrib{\ensuremath{\sim}}%

\global\long\def\distiid{\ensuremath{\sim_{\iid}}}%

\global\long\def\asydist{\ensuremath{\overset{a}{\distrib}}}%

\global\long\def\inprob{\ensuremath{\overset{p}{\goesto}}}%

\global\long\def\inprobu#1{\ensuremath{\overset{#1}{\goesto}}}%

\global\long\def\inas{\ensuremath{\overset{\as}{\goesto}}}%

\global\long\def\eqas{=_{\as}}%

\global\long\def\inLp#1{\ensuremath{\overset{\Ellp{#1}}{\goesto}}}%

\global\long\def\indist{\ensuremath{\overset{d}{\goesto}}}%

\global\long\def\eqdist{=_{d}}%

\global\long\def\wkc{\ensuremath{\rightsquigarrow}}%

\global\long\def\wkcu#1{\overset{#1}{\ensuremath{\rightsquigarrow}}}%

\global\long\def\plim{\operatorname*{plim}}%


\global\long\def\var{\operatorname{var}}%

\global\long\def\lrvar{\operatorname{lrvar}}%

\global\long\def\cov{\operatorname{cov}}%

\global\long\def\corr{\operatorname{corr}}%

\global\long\def\bias{\operatorname{bias}}%

\global\long\def\MSE{\operatorname{MSE}}%

\global\long\def\med{\operatorname{med}}%

\global\long\def\avar{\operatorname{avar}}%

\global\long\def\se{\operatorname{se}}%

\global\long\def\sd{\operatorname{sd}}%


\global\long\def\nullhyp{H_{0}}%

\global\long\def\althyp{H_{1}}%

\global\long\def\ci{\mathcal{C}}%


\global\long\def\simple{\mathcal{R}}%

\global\long\def\sring{\mathcal{A}}%

\global\long\def\sproc{\mathcal{H}}%

\global\long\def\Wiener{\ensuremath{\mathbb{W}}}%

\global\long\def\sint{\bullet}%

\global\long\def\cv#1{\left\langle #1\right\rangle }%

\global\long\def\smlcv#1{\langle#1\rangle}%

\global\long\def\qv#1{\left[#1\right]}%

\global\long\def\smlqv#1{[#1]}%


\global\long\def\trans{\mathsf{T}}%

\global\long\def\indic{\ensuremath{\mathbf{1}}}%

\global\long\def\Lagr{\mathcal{L}}%

\global\long\def\grad{\nabla}%

\global\long\def\pmin{\ensuremath{\wedge}}%
\global\long\def\Pmin{\ensuremath{\bigwedge}}%

\global\long\def\pmax{\ensuremath{\vee}}%
\global\long\def\Pmax{\ensuremath{\bigvee}}%

\global\long\def\sgn{\operatorname{sgn}}%

\global\long\def\argmin{\operatorname*{argmin}}%

\global\long\def\argmax{\operatorname*{argmax}}%

\global\long\def\Rp{\operatorname{Re}}%

\global\long\def\Ip{\operatorname{Im}}%

\global\long\def\deriv{\ensuremath{\mathrm{d}}}%

\global\long\def\diffnspc{\ensuremath{\deriv}}%

\global\long\def\diff{\ensuremath{\,\deriv}}%

\global\long\def\i{\ensuremath{\mathrm{i}}}%

\global\long\def\e{\mathrm{e}}%

\global\long\def\sep{,\ }%

\global\long\def\defeq{\coloneqq}%

\global\long\def\eqdef{\eqqcolon}%

\global\long\def\err{\varepsilon}%

\global\long\def\mset#1{\mathcal{#1}}%

\global\long\def\largedec#1{\mathbf{#1}}%

\global\long\def\z{\largedec z}%

\newcommandx\Ican[1][usedefault, addprefix=\global, 1=]{I_{#1}^{\ast}}%

\global\long\def\jsr{{\scriptstyle \mathrm{JSR}}}%

\global\long\def\cjsr{{\scriptstyle \mathrm{CJSR}}}%

\global\long\def\rsr{{\scriptstyle \mathrm{RJSR}}}%

\global\long\def\ctspc{\mathscr{M}}%

\global\long\def\b#1{\boldsymbol{#1}}%

\global\long\def\pseudy{\text{\ensuremath{\abv y}}}%

\global\long\def\regcoef{\kappa}%

\global\long\def\adj{\operatorname{adj}}%

\global\long\def\llangle{\dbllangle}%

\global\long\def\rrangle{\dblrangle}%

\global\long\def\smldblangle#1{\ensuremath{\llangle#1\rrangle}}%

\newcommand{\casecens}{\upshape{(i)}}

\newcommand{\caseclas}{\upshape{(ii)}}

\newcommand{\casestat}{\upshape{(iii)}}

\global\long\def\delcens{\mathrm{(i)}}%

\global\long\def\delclas{\mathrm{(ii)}}%

\global\long\def\smlf{f}%

\global\long\def\bigf{\b f}%

\global\long\def\fe{f}%

\global\long\def\ga{g}%

\global\long\def\co{\chi}%

\global\long\def\CO{\mathrm{X}}%

\global\long\def\EQ{\Theta}%

\global\long\def\eq{\theta}%

\global\long\def\ct{\psi}%

\global\long\def\cn{\mu}%

\global\long\def\cvar{{\cal M}}%

\global\long\def\codf{\chi}%

\global\long\def\eqdf{\chi_{2}}%

\global\long\def\ctdf{\chi_{1}}%

\global\long\def\set#1{\mathscr{#1}}%

\global\long\def\srp{\mathrm{H}}%

\global\long\def\srfn{h}%

\global\long\def\ch{\operatorname{co}}%

\global\long\def\im{\operatorname{im}}%

\global\long\def\lin{\mathrm{lin}}%

\global\long\def\state{\mathfrak{z}}%

\global\long\def\fespc{\mathscr{F}}%

\global\long\def\denspc{\mathscr{R}}%

\global\long\def\den{\varrho}%

\global\long\def\cden{\rho}%

\title{Common Trends and Long-Run Identification \\ in Nonlinear Structural
VARs}
\author{\authaffil{James A.\ Duffy\footnotemark[1]{}}{University of Oxford}\hspace{3cm}
\authaffil{Sophocles Mavroeidis\footnotemark[2]{}}{University
of Oxford}}
\date{\vspace*{0.3cm}September 2024}

\maketitle
\renewcommand*{\thefootnote}{\fnsymbol{footnote}}

\footnotetext[1]{Department\ of Economics and Corpus Christi College;
\texttt{james.duffy@economics.ox.ac.uk}}

\footnotetext[2]{Department of Economics and University College;
\texttt{sophocles.mavroeidis@economics.ox.ac.uk}}

\renewcommand*{\thefootnote}{\arabic{footnote}}

\setcounter{footnote}{0}
\begin{abstract}
\noindent While it is widely recognised that linear (structural) VARs
may fail to capture important aspects of economic time series, the
use of nonlinear SVARs has to date been almost entirely confined to
the modelling of stationary time series, because of a lack of understanding
as to how common stochastic trends may be accommodated within nonlinear
models. This has unfortunately circumscribed the range of series to
which such models can be applied -- and/or required that these series
be first transformed to stationarity, a potential source of misspecification
-- and prevented the use of long-run identifying restrictions in
these models. To address these problems, we develop a flexible class
of additively time-separable nonlinear SVARs, which subsume models
with threshold-type endogenous regime switching, both of the piecewise
linear and smooth transition varieties. We extend the Granger--Johansen
representation theorem to this class of models, obtaining conditions
that specialise exactly to the usual ones when the model is linear.
We further show that, as a corollary, these models are capable of
supporting the same kinds of long-run identifying restrictions as
are available in linearly cointegrated SVARs.
\end{abstract}
\vfill

\noindent First version: April 2024. We thank B.~Beare, J.~Dolado,
A.\ Escribano, S.~Johansen, Y.~Lu, B.\ Nielsen, B.~Rossi, and
seminar participants at UC3 Madrid, Oxford, Sydney, the 2024 BSE Summer
Forum and the 26th Dynamic Econometrics Conference, for their helpful
comments on earlier drafts of this paper.

\thispagestyle{plain}

\pagenumbering{roman}

\newpage{}

\thispagestyle{plain}

\setcounter{tocdepth}{2}

{\singlespacing

\tableofcontents{}

}

\newpage{}

\newpage{}

\pagenumbering{arabic}

\section{Introduction}

For more than four decades, the linear structural VAR (SVAR) model,
\begin{align}
\Phi_{0}z_{t} & =c+\sum_{i=1}^{k}\Phi_{i}z_{t-i}+u_{t} & u_{t} & =\Upsilon\err_{t}\label{eq:svar-intro}
\end{align}
has played a central role in empirical macroeconomics. Structural
VARs provide a satisfyingly coherent framework within which the problem
of identifying causal relations, in the presence of simultaneity,
may be approached by regarding the $p$ endogenous variables $z_{t}$
as generated by an underlying sequence of $p$ exogenous structural
shocks $\err_{t}$. Viewed through the lens of these models, the identification
problem reduces to one of identifying these structural shocks and
their associated impulse responses: equivalently, of identifying the
mapping $\Upsilon$ between $\err_{t}$ and $u_{t}$.

The early literature on structural VARs, following the seminal contribution
of \citet{Sims80}, developed contemporaneously with major advances
in the modelling of common (stochastic) trends in macroeconomic time
series. This latter work culminated in the development of the theory
of the cointegrated VAR, and in particular the Granger--Johansen
representation theorem (GJRT), which showed how a VAR could be configured
so as to accommodate the presence of common stochastic trends, and
the cointegrating relations that are dual to them (\citealp{EG87Ecta};
\citealp{Joh91Ecta}). This importantly demonstrated that the SVAR
model could be reconciled with some of the evident properties of the
(levels of) nonstationary macroeconomic time series, obviating the
need to difference these series to stationarity prior to analysis
(indeed, an implication of this work was that such pre-filtering was
not only unnecessary, but could induce model misspecification). An
important corollary to the GJRT was that by uncovering the mapping
between the common trends and the endogenous variables, it made available
a class of long-run identifying restrictions, relating to whether
certain structural shocks were regarded as having permanent, or merely
transient, effects (\citealp{BlanchardQuah89}; \citealp{KPSW91AER}).

Subsequent work has sought to extend the (S)VAR model so as to allow
for various forms of nonlinearity: typically as modelled via some
sort of regime switching, possibly with smooth transitions between
these regimes (see e.g.\ \citealp{Chan09}; \citealp{TTG10}; \citealp{HT13}).
More recently, interest in structural macroeconomic models that incorporate
occasionally binding constraints, of which the zero lower bound on
interest rates is a leading instance, has had its counterpart in the
development of new class of regime-switching nonlinear VARs. These
newer models are notably distinguished from the earlier nonlinear
VAR literature by being able to accommodate \emph{endogenous} regime
switching, i.e.\ of allowing the current regime to depend on the
current values of the endogenous variables, rather than being pre-determined,
or being dependent on some exogenous switching process (\citealp{SM21};
\citealp{AMSV21}).

However, there remains a significant disconnect between the literature
on nonlinear (S)VARs, and that on cointegration and common trends.
In this respect, the situation is reminiscent of that of the literature
on linear VARs in the early 1980s, with there being little theoretical
understanding of how common trends may be accommodated within nonlinear
VARs, such that the prior removal of common trends by pre-filtering
remains a necessity. One area where significant efforts have been
made to remedy this pertain to a certain class of `nonlinear VECM'
models. These are VAR models that are specified directly in VECM form,
with a linear cointegrating space, but in which the stationary components
of the model (the equilibrium errors and the lagged differences) are
permitted to enter nonlinearly (see e.g.\ \citealp{EM02JTSA}; \citealp{BR04EctJ};
\citealp{Saik05JoE,Saik08ET}; \citealp{KR10JoE,KR13ET}). This class
of models has proved sufficiently tractable to facilitate a nonlinear
extension of the GJRT, but the assumptions made in this literature
are such as to leave a substantial portion of the space of nonlinear
VAR models unexplored.

In this paper, we generalise \eqref{svar-intro} in the direction
of the additively time-separable SVAR
\begin{align}
\fe_{0}(z_{t}) & =c+\sum_{i=1}^{k}\fe_{i}(z_{t-i})+u_{t} & u_{t} & =\Upsilon\err_{t}\label{eq:SVAR-intro}
\end{align}
where $\fe_{0}:\reals^{p}\setmap\reals^{p}$ is continuous and has
a continuous inverse (i.e.\ it is a homeomorphism), and each $\fe_{i}:\reals^{p}\setmap\reals^{p}$
is continuous.\footnote{For a detailed explanation of why the structural shocks enter \eqref{SVAR-intro}
in an apparently similar manner to \eqref{svar-intro}, see \subsecref{identification}
below.} We aim not only to extend the GJRT, but also to preserve exactly
the kind of long-run identifying restrictions that are yielded by
the GJRT in a linear setting, such that it remains possible to fruitfully
discriminate between structural shocks on the basis of the permanence
(or transience) of their effects. Via an associated VECM representation,
this leads us to consider SVARs in which the \emph{image} of the map
\[
\pi(z)\defeq\sum_{i=1}^{k}\fe_{i}(z)-\fe_{0}(z)
\]
is an $r$-dimensional subspace of $\reals^{p}$, a requirement that
we term the \emph{common row space condition} (CRSC); effectively,
this entails a decomposition of the form $\pi(z)=\alpha\eq(z)$. This
reverses a key assumption of the nonlinear VECM literature, where
$\eq(z)=\beta^{\trans}z$ is a fixed linear function, but $\alpha$
is allowed to vary (albeit only as a function of $\beta^{\trans}z_{t-1}$
and $\Delta z_{t-1}$, or their lags). As we discuss, the principal
motivation for the CRSC is that it ensures that only $q=p-r$ structural
shocks may have permanent effects -- whereas if the CRSC is violated,
all $p$ structural shocks may have permanent effects, even though
$z_{t}$ has only $q<p$ common stochastic trends.

Our conditions on the nonlinear SVAR \eqref{SVAR-intro} are otherwise
rather general, and unlike the preceding literature allow for common
stochastic trends to enter $z_{t}$ \emph{nonlinearly}, thereby giving
rise to nonlinear cointegrating relations between elements of $z_{t}$.
(A rigorous discussion of `cointegration' in this nonlinear setting
is deferred to a companion paper, in which the limiting distribution
of the standardised process $n^{-1/2}z_{\smlfloor{n\lambda}}$ is
derived; for a discussion of this in the context of the censored and
kinked SVAR, see \citealp{DMW22}.) These conditions are presented
in \secref{model}, followed by a discussion of the structural shock
identification problem that arises in this model. We further show
that our conditions reduce to exactly those of the linear cointegrated
VAR, when each $\{\fe_{i}\}_{i=0}^{k}$ is linear: and thus they do
not appear to be particularly restrictive.

The principal contribution of this paper is to provide an extension
of the GJRT to the setting of a nonlinear VAR (\secref{common-trends}),
demonstrating how the model \eqref{SVAR-intro} may be configured
so as to generate common stochastic trends. This in turn permits an
analysis of the stability of the steady state equilibria of the model,
and a characterisation of the directions in which the shocks $u_{t}$
have only transient effects (\subsecref{attractors}), and thereby
yields long-run identifying restrictions on $\Upsilon$ (\subsecref{long-run-id}).
The GJRT also facilitates the calculation of the implied long-run
multipliers associated with the shocks (\secref{lr-multipliers}).
We subsequently return, in \secref{without-the-crsc}, to a discussion
of the role played by the CRSC in ensuring the availability of long-run
identifying restrictions. Finally, \subsecref{piecewise-affine} elaborates
on a class of models in which each $\fe_{i}$ is piecewise affine
(or a smooth, convex combination of piecewise affine functions) showing
how the abstract conditions introduced in \secref{model} may be specialised
to this class of models. \secref{conclusion} concludes. Proofs of
all results appear in the appendices.

\begin{notation*}
$e_{m,i}$ denotes the $i$th column of an $m\times m$ identity matrix;
when $m$ is clear from the context, we write this simply as $e_{i}$.
$\smlnorm{\cdot}$ denotes the Euclidean norm on $\reals^{m}$. Matrix
norms are always those induced by the corresponding vector norm. For
$X$ a random vector and $p\geq1$, $\smlnorm X_{p}\defeq(\expect\smlnorm X^{p})^{1/p}$.
For a matrix $\alpha\in\reals^{p\times r}$ with $\rank\alpha=r$,
$\alpha_{\perp}$ denotes some $p\times(p-r)$ matrix, with $\rank\alpha_{\perp}=p-r$,
such that $\alpha_{\perp}^{\trans}\alpha=0$. For $\{A_{i}\}_{i=1}^{k}$
a collection of $m\times n$ matrices, $[A_{i}]_{i=1}^{k}$ denotes
the $mk\times n$ matrix formed as $(A_{1}^{\trans},\ldots,A_{m}^{\trans})^{\trans}$.
The \emph{convex hull} of $\{A_{i}\}_{i=1}^{k}$ is denoted $\ch\{A_{i}\}_{i=1}^{k}\defeq\{\sum_{i=1}^{k}\lambda_{i}A_{i}\mid\lambda_{i}\geq0\sep\sum_{i=1}^{k}\lambda_{i}=1\}$.
For a continuously differentiable function $f:\reals^{\ell}\setmap\reals^{m}$,
$\partial_{x}f(x_{0})$ denotes the Jacobian of $f$ at $x=x_{0}$.
\end{notation*}

\section{Model}

\label{sec:model}

\subsection{Framework: a nonlinear VAR($k$)}

\label{subsec:framework}

We consider an additively time-separable nonlinear VAR($k$) model
for a $p$-dimensional time series $\{z_{t}\}_{t\in\integers}$, of
the form
\begin{equation}
\fe_{0}(z_{t})=c+\sum_{i=1}^{k}\fe_{i}(z_{t-i})+u_{t}\label{eq:nlVAR}
\end{equation}
where $\fe_{i}:\reals^{p}\setmap\reals^{p}$ for $i\in\{0,1,\ldots,k\}$,
and $\{u_{t}\}_{t\in\integers}$ is a sequence of `shocks' taking
values in $\reals^{p}$. (As a convenient location normalisation,
we shall suppose throughout that $\fe_{i}(0)=0$ for all $i\in\{0,\ldots,k\}$.)
By defining
\begin{equation}
\ga_{j}(z)\defeq-\sum_{i=j+1}^{k}\fe_{i}(z)\label{eq:ge}
\end{equation}
for $j\in\{0,\ldots,k-1\}$, we can write the model in nonlinear
VECM form as
\begin{equation}
\Delta\fe_{0}(z_{t})=c+\pi(z_{t-1})+\sum_{i=1}^{k-1}\Delta\ga_{i}(z_{t-i})+u_{t}\label{eq:VECM-first}
\end{equation}
where
\begin{equation}
\pi(z)\defeq-[\ga_{0}(z)+\fe_{0}(z)]=-\fe_{0}(z)+\sum_{i=1}^{k}\fe_{i}(z)\label{eq:pi-def}
\end{equation}
(see also \lemref{vecm} below).

In the linear VAR, where $\pi(z)=\Pi z$ is linear, the key assumption
required for the model to generate common stochastic trends (i.e.\ cointegrated
$I(1)$ processes) is that $\rank\Pi=r=p-q$, where $q$ is the number
of unit roots in the autoregressive polynomial. Equivalently, this
condition may be expressed in terms of the image of the map $z\elmap\Pi z$,
\[
\im\Pi=\{\Pi z\mid z\in\reals^{p}\}=\spn\Pi,
\]
being an $r$-dimensional (linear) subspace of $\reals^{p}$. In extending
the GJRT to the setting of the nonlinear VAR \eqref{nlVAR}, the role
of $\Pi$ (or more precisely, of $z\elmap\Pi z$) will now be played
by $\pi(z)$, and our high-level assumptions below will ensure that
\[
\im\pi=\{\pi(z)\mid z\in\reals^{p}\}
\]
remains an $r$-dimensional (linear) subspace of $\reals^{p}$, exactly
as in the linear case. We term this the \emph{common row space} \emph{condition}
(CRSC). (Note that there is no corresponding restriction on $\ker\pi$;
in particular, this will \emph{not} be required to be a $q$-dimensional
subspace of $\reals^{p}$.) For a further discussion of this condition
-- in particular, of the role that it plays in ensuring that the
long-run identifying restrictions familiar from a linear cointegrated
VAR are also available in the present setting -- see \secref{without-the-crsc}
below.

\subsection{High-level assumptions}

\label{subsec:high-level}

To state our main assumptions on the model \eqref{nlVAR}, we first
provide a more compact representation for \eqref{VECM-first}; its
proof appears in \appref{auxproof}.
\begin{lem}
\label{lem:vecm}Suppose $\{z_{t}\}$ follows \eqref{nlVAR}. Then
\eqref{VECM-first} holds. Defining $\b{\zeta}_{t}\defeq[\zeta_{j,t}]_{j=1}^{k-1}$,
where
\[
\zeta_{j,t}\defeq\sum_{i=j}^{k-1}\ga_{i}(z_{t-i+j}),
\]
and letting $\b z_{t}\defeq(z_{t}^{\trans},\b{\zeta}_{t-1}^{\trans})^{\trans}$,
we also have
\begin{align}
\begin{bmatrix}\Delta\fe_{0}(z_{t})\\
\Delta\b{\zeta}_{t-1}
\end{bmatrix} & =\begin{bmatrix}c\\
0_{p(k-1)}
\end{bmatrix}+\begin{bmatrix}\pi(z_{t-1})+\ga_{1}(z_{t-1})\\
\b{\ga}(z_{t-1})
\end{bmatrix}+\begin{bmatrix}0 & D_{1}\\
0 & D
\end{bmatrix}\begin{bmatrix}z_{t-1}\\
\b{\zeta}_{t-2}
\end{bmatrix}+\begin{bmatrix}u_{t}\\
0_{p(k-1)}
\end{bmatrix}\nonumber \\
 & \eqdef\b c+\b{\pi}(z_{t-1})+\b D\b z_{t-1}+\b u_{t}.\label{eq:VECM-system}
\end{align}
for $\b{\ga}(z)\defeq[\ga_{j}(z)]_{j=1}^{k-1}$,
\[
D\defeq\begin{bmatrix}-I_{p} & I_{p}\\
 & \ddots & \ddots\\
 &  & -I_{p} & I_{p}\\
 &  &  & -I_{p}
\end{bmatrix}\in\reals^{p(k-1)\times p(k-1)}
\]
and $D_{1}$ the first $p$ rows of $D$.
\end{lem}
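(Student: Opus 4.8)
The plan is to establish both displayed claims --- the nonlinear VECM form \eqref{VECM-first} and the first-order companion system \eqref{VECM-system} --- by direct algebraic manipulation of \eqref{nlVAR}, using only the definitions \eqref{ge} of the $\ga_j$ and of $\zeta_{j,t}$; no analysis enters, just careful bookkeeping of the time indices. Throughout I adopt the convention $\ga_k\defeq0$ (reading \eqref{ge} as an empty sum at $j=k$), under which \eqref{ge} gives the telescoping identity $\fe_i=\ga_i-\ga_{i-1}$ on $\reals^p$ for each $i\in\{1,\ldots,k\}$; summing this over $i$ at a common argument recovers $\sum_{i=1}^k\fe_i=-\ga_0$, so that $\pi=-(\ga_0+\fe_0)$ as in \eqref{pi-def}.

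For \eqref{VECM-first}: subtract $\fe_0(z_{t-1})$ from both sides of \eqref{nlVAR}, and in the autoregressive sum substitute $\fe_i(z_{t-i})=\ga_i(z_{t-i})-\ga_{i-1}(z_{t-i})$. The $i=k$ term of $\sum_i\ga_i(z_{t-i})$ vanishes, while shifting the index in the other piece gives $\sum_{i=1}^k\ga_{i-1}(z_{t-i})=\ga_0(z_{t-1})+\sum_{i=1}^{k-1}\ga_i(z_{t-i-1})$, whence
\[
\sum_{i=1}^{k}\fe_i(z_{t-i})-\fe_0(z_{t-1})=-\bigl[\ga_0(z_{t-1})+\fe_0(z_{t-1})\bigr]+\sum_{i=1}^{k-1}\bigl[\ga_i(z_{t-i})-\ga_i(z_{t-i-1})\bigr];
\]
by \eqref{pi-def} the first term equals $\pi(z_{t-1})$ and the sum equals $\sum_{i=1}^{k-1}\Delta\ga_i(z_{t-i})$, which is \eqref{VECM-first}.

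For \eqref{VECM-system}: treat the two block rows separately. The lower block asserts $\Delta\b\zeta_{t-1}=\b\ga(z_{t-1})+D\b\zeta_{t-2}$, i.e., blockwise, $\zeta_{j,t-1}=\ga_j(z_{t-1})+\zeta_{j+1,t-2}$ for $1\le j\le k-2$ and $\zeta_{k-1,t-1}=\ga_{k-1}(z_{t-1})$; both follow by isolating the $i=j$ summand in $\zeta_{j,t}=\sum_{i=j}^{k-1}\ga_i(z_{t-i+j})$ and recognising the remaining sum as $\zeta_{j+1,t-1}$ (which is empty when $j=k-1$). For the upper block, note that $\zeta_{1,t-1}=\sum_{i=1}^{k-1}\ga_i(z_{t-i})$, so $\sum_{i=1}^{k-1}\Delta\ga_i(z_{t-i})=\Delta\zeta_{1,t-1}$; on the other hand $D_1\b\zeta_{t-2}=-\zeta_{1,t-2}+\zeta_{2,t-2}$, and the $j=1$ instance of the recursion just derived gives $\zeta_{2,t-2}=\zeta_{1,t-1}-\ga_1(z_{t-1})$, so that $\ga_1(z_{t-1})+D_1\b\zeta_{t-2}=\zeta_{1,t-1}-\zeta_{1,t-2}=\Delta\zeta_{1,t-1}$. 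Substituting this into \eqref{VECM-first} gives the upper block; stacking the two blocks and collecting the constant and shock terms into $\b c$ and $\b u_t$ yields the abbreviated form.

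There is no conceptual obstacle: the computation is elementary, and the only thing requiring care is the index arithmetic --- in particular the shift $t-i+j$ in the definition of $\zeta_{j,t}$, combined with the fact that the companion state carries $\b\zeta_{t-1}$ while its law of motion is expressed through $\b\zeta_{t-2}$. I would pin down the pattern on the cases $k=2$ and $k=3$ before committing to the general telescoping argument.
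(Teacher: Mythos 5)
Your proposal is correct and follows essentially the same route as the paper's proof: both parts are pure index bookkeeping, with \eqref{VECM-first} obtained by telescoping (you via the summation-by-parts identity $\fe_i=\ga_i-\ga_{i-1}$ plus an index shift, the paper via a double sum over $\Delta\fe_i$ with the order of summation exchanged — the same computation written differently), and \eqref{VECM-system} obtained from the recursion $\zeta_{j,t-1}=\ga_j(z_{t-1})+\zeta_{j+1,t-2}$ exactly as in the paper. All the index arithmetic you state checks out, including the edge case $j=k-1$ and the identification $\sum_{i=1}^{k-1}\Delta\ga_i(z_{t-i})=\Delta\zeta_{1,t-1}=\ga_1(z_{t-1})+D_1\b{\zeta}_{t-2}$ for the upper block.
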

\begin{rem}
\label{rem:vecm-representation}\subremark{}\label{subrem:higher-order}
The preceding holds exactly as stated when $k\geq3$; the model with
$k\in\{1,2\}$ may be handled as a special case of $k=3$ with at
least one of $\fe_{2}$ and $\fe_{3}$ being identically zero. Alternatively,
we may specialise the above to $k\in\{1,2\}$ by taking $D$ to be
an `empty' matrix when $k=1$, and $D=-I_{p}$ when $k=2$. (In
the proofs and statements of all results, to avoid these kinds of
complications, we shall implicitly maintain throughout, without loss
of generality, that $k\geq3$.)

\subremark{} \eqref{VECM-system} differs notably from the companion
form representation that is typically employed in the analysis of
linear VECMs. In particular, the state vector is not 
\[
\vec{z}_{t}\defeq(z_{t}^{\trans},\ldots,z_{t-k+1}^{\trans})^{\trans},
\]
but rather
\begin{equation}
\b z_{t}=\begin{bmatrix}z_{t}\\
\b{\zeta}_{t-1}
\end{bmatrix}=\begin{bmatrix}z_{t}\\
\zeta_{1,t-1}\\
\vdots\\
\zeta_{k-1,t-1}
\end{bmatrix}=\begin{bmatrix}z_{t}\\
\sum_{i=1}^{k-1}\ga_{i}(z_{t-i})\\
\vdots\\
\ga_{k-1}(z_{t-1})
\end{bmatrix}=\begin{bmatrix}z_{t}\\
\sum_{i=1}^{k-1}\Gamma_{i}z_{t-i}\\
\vdots\\
\Gamma_{k-1}z_{t-1}
\end{bmatrix},\label{eq:bz-in-linear}
\end{equation}
where the final equality holds only in a \emph{linear} VAR. By comparison
with $\vec{z}_{t}$, we have defined $\b z_{t}$ so that the r.h.s.\ of
\eqref{VECM-system} involves a nonlinear transformation of only $z_{t-1}$,
and is otherwise linear in $\b{\zeta}_{t-2}$. This permits a major
simplification of the analysis of the model when $k\geq2$; for this
the assumption of additive separability is crucially important.
\end{rem}

We next define a class $\cvar_{r}$ of nonlinear VAR models, that
will be shown to be consistent with the presence of $q=p-r$ common
stochastic trends in $\{z_{t}\}$, and $r$ `cointegrating relations'
between those series. For some $\theta:\reals^{p}\setmap\reals^{r}$,
define
\begin{align}
\b{\eq}(z) & \defeq\begin{bmatrix}\eq(z)\\
\b{\ga}(z)
\end{bmatrix} & \b D_{0} & \defeq\begin{bmatrix}0_{r\times p} & 0\\
0 & D
\end{bmatrix} & E & \defeq\begin{bmatrix}I_{p}\\
0_{p(k-2)\times p}
\end{bmatrix},\label{eq:bold-theta}
\end{align}
and for $\alpha\in\reals^{p\times r}$ with $\rank\alpha=r$,
\begin{align}
\b{\alpha} & \defeq\begin{bmatrix}\alpha & E^{\trans}\\
0 & I_{p(k-1)}
\end{bmatrix} & \b{\alpha}_{\perp} & \defeq\begin{bmatrix}I_{p}\\
-E
\end{bmatrix}\alpha_{\perp},\label{eq:bold-alpha}
\end{align}
Recall that a function $f:{\cal X\setmap{\cal Y}}$ is a \emph{homeomorphism}
if it is continuous, bijective, and has a continuous inverse. $f$
is \emph{bi-Lipschitz} if there exists a $C_{0}<\infty$ such that
$C_{0}^{-1}\smlnorm{x-x^{\prime}}\leq\smlnorm{f(x)-f(x^{\prime})}\leq C_{0}\smlnorm{x-x^{\prime}}$
for all $x,x^{\prime}\in{\cal X}$; we term any such $C_{0}$ a \emph{bi-Lipschitz
constant} for $f$. For $\mathcal{A}\subset\reals^{m\times m}$ a
bounded collection of matrices, let $\rho_{\jsr}(\mathcal{A})\defeq\limsup_{t\goesto\infty}\sup_{B\in\mathcal{A}^{t}}\rho(B)^{1/t}$
denote its joint spectral radius (JSR; e.g.\ \citealp{Jungers09},
Defn.\ 1.1), where $\mathcal{A}^{t}\defeq\{\prod_{s=1}^{t}M_{s}\mid M_{s}\in\mathcal{A}\}$
is the set of $t$-fold products of matrices in ${\cal A}$, and $\rho(B)$
the spectral radius of $B$.

\needspace{3cm}
\begin{defn}
\label{def:cvar}Let $r\in\{0,\ldots,p\}$, $\alpha\in\reals^{p\times r}$
have $\rank\alpha=r$, $\abv b\in\reals$, and $\abv{\rho}\in[0,1)$.
We say that the VAR($k$) model \eqref{nlVAR}, parametrised by $c\in\reals^{p}$
and $\{\fe_{i}\}_{i=0}^{k}$, belongs to class $\cvar_{r}(\alpha,\abv b,\abv{\rho})$
if:
\begin{enumerate}[label=($\cvar$.\roman*), leftmargin=1.75cm]
\item \label{enu:cvar:homeo}$\fe_{0}:\reals^{p}\setmap\reals^{p}$ is
a homeomorphism;
\item \label{enu:cvar:rank} there exists $\cn\in\reals^{r}$ and $\theta:\reals^{p}\setmap\reals^{r}$
such that 
\begin{align*}
c & =\alpha\cn & \pi(z) & =\alpha\eq(z)
\end{align*}
for all $z\in\reals^{p}$;
\item \label{enu:cvar:jsr}for every $\b z=(z^{\trans},\b{\zeta}^{\trans})^{\trans}$
and $\b z^{\prime}=(z^{\prime\trans},\b{\zeta}^{\prime\trans})^{\trans}$
in $\reals^{p}\times\reals^{p(k-1)}$, there exists a $\b{\beta}\in{\cal B}$
(possibly depending on $\b z$ and $\b z^{\prime}$) such that
\begin{equation}
[(\b{\eq}\compose\fe_{0}^{-1})(z)+\b D_{0}\b z]-[(\b{\eq}\compose\fe_{0}^{-1})(z^{\prime})+\b D_{0}\b z^{\prime}]=\b{\beta}^{\trans}(\b z-\b z^{\prime})\label{eq:gradient}
\end{equation}
where ${\cal B}\subset\reals^{p\times[p(k-1)+r]}$ is closed with
$\max_{\b{\beta}\in{\cal B}}\smlnorm{\b{\beta}}\leq\abv b$ and
\begin{equation}
\rho_{\jsr}(\{I_{p(k-1)+r}+\b{\beta}^{\trans}\b{\alpha}\mid\b{\beta}\in{\cal B}\})\leq\abv{\rho}.\label{eq:jsr}
\end{equation}
\end{enumerate}
Let $\cvar_{r}^{\ast}(\alpha,\abv b,\abv{\rho})$ denote those models
in $\cvar_{r}(\alpha,\abv b,\abv{\rho})$ for which $\fe_{0}=I_{p}$,
the identity on $\reals^{p}$. We denote by $\cvar_{r}$ (resp.\ $\cvar_{r}^{\ast}$)
the union of all classes $\cvar_{r}(\alpha,\abv b,\abv{\rho})$ (resp.\ $\cvar_{r}^{\ast}(\alpha,\abv b,\abv{\rho})$)
with $\alpha\in\reals^{p\times r}$ having $\rank\alpha=r$, $\abv b<\infty$
and $\abv{\rho}<1$.
\end{defn}
\begin{rem}
\label{rem:cvar}\subremark{} We allow $\fe_{0}$ to be non-trivial
to permit (linear and nonlinear) \emph{structural} VARs to be accommodated
within the present setting: see \subsecref{identification} for a
further discussion. In the linear case, $\fe_{0}^{-1}(z)=\Phi_{0}^{-1}z$
is trivially continuous; requiring $\fe_{0}$ to be a homeomorphism
extends this to a nonlinear setting. When deriving the properties
of the series generated by the models in class $\cvar_{r}$, we shall
first rewrite the system in terms of $z_{t}^{\ast}\defeq\fe_{0}(z_{t})$,
which is itself generated by a model in class $\cvar_{r}^{\ast}$
(see \lemref{unstruct} in \appref{auxiliary}).

\subremark{} It follows from the preceding conditions, and principally
from \ref{enu:cvar:jsr}, that $\fe_{i}$ is continuous for each $i\in\{1,\ldots,k\}$;
indeed, these functions are Lipschitz when $\fe_{0}=I_{p}$ (see \lemref{co-cts}
in \appref{auxiliary}).

\subremark{} A fundamental role will be played throughout the following
by the map $\co:\reals^{p}\setmap\reals^{p}$ defined by
\begin{equation}
\co(z)\defeq\begin{bmatrix}\ct(z)\\
\eq(z)
\end{bmatrix},\label{eq:co-map}
\end{equation}
where
\begin{equation}
\ct(z)\defeq\alpha_{\perp}^{\trans}\left[\fe_{0}(z)-\sum_{i=1}^{k-1}\ga_{i}(z)\right]\eqdef\alpha_{\perp}^{\trans}\srfn(z).\label{eq:ct-def}
\end{equation}
As developed in \subsecref{gjrt} below, if $\{z_{t}\}$ is generated
by a model from class $\cvar_{r}$, then $\ct(z_{t})$ and $\eq(z_{t})$
respectively provide a representation for the `common trends' and
`equilibrium errors' in the system. Since $\co$ is guaranteed
to be a homeomorphism under our assumptions (see \lemref{co-cts}
in \appref{auxiliary}), these maps provide an exhaustive description
of $z_{t}$. As a further consequence of the invertibility of $\co$,
we have $\im\eq=\reals^{r}$ and hence $\im\pi=\spn\alpha$, an $r$-dimensional
subspace of $\reals^{p}$; models in class $\cvar_{r}$ thus satisfy
the common row space condition introduced in \subsecref{framework}
above.

\subremark{} The case $r=0$ corresponds to $c=0$ and $\pi(z)=0$
for all $z\in\reals^{p}$. The results in this paper will be seen
to cover this case, if the appropriate adjustments are made by deleting
$\eq$ and $\cn$ from those objects in which they appear, so that
e.g.\ now $\b{\eq}=\b{\ga}$ and $\co=\ct$. (To keep this paper
to a manageable length, we do not treat this case explicitly in our
proofs.) At the other extreme, the case $r=p$ corresponds to a stationary
system with no common trends. For this reason, we generally restrict
the statements of our results to the case where $r\in\{0,\ldots,p-1\}$.

\subremark{}\label{subrem:cvar:shortmem} \ref{enu:cvar:jsr} is
our principal assumption on the stability of the weakly dependent
components of the system, which comprise the `equilibrium errors'
$\xi_{t}\defeq\cn+\eq(z_{t})$, and the `lagged differences' $\Delta\zeta_{j,t}=\sum_{i=j}^{k-1}\Delta\ga_{i}(z_{t-i+j})$
for $j\in\{1,\ldots,k-1\}$. In a linear cointegrated VAR, these reduce
to the familiar $\xi_{t}=\cn+\beta^{\trans}z_{t}$ and $\Delta\zeta_{j,t}=\sum_{i=j}^{k-1}\Gamma_{j}\Delta z_{t-i+j}$,
which are stationary (up to initialisation) by the GJRT. Though in
the present setting, $\xi_{t}$ and $\{\Delta\zeta_{j,t}\}_{j=1}^{k-1}$
will generally \emph{not} be stationary, \eqref{jsr} nonetheless
ensures that these are of strictly smaller order than $z_{t}$ itself,
just as in a linearly cointegrated system. As noted in \subsecref{linear:model}
below, in a linear VAR \ref{enu:cvar:jsr} reduces to precisely the
familiar conditions on the roots of the associated autoregressive
polynomial; whereas the form in which it is stated here is applicable
to models which depart so far from linearity that no useful notion
of `autoregressive roots' is available.

\subremark{}\label{subrem:cvar:lipschitz} The requirement that \eqref{gradient}
hold with $\smlnorm{\b{\beta}}\leq\abv b$ implies $\b{\eq}\compose\fe_{0}^{-1}$
is Lipschitz, with a Lipschitz constant that is bounded by $\abv b$;
the converse is also true (see \lemref{lipcondition} in \appref{auxiliary}).
Excepting cases where $\fe_{0}$ and $\b{\eq}$ interact in a very
particular way, this condition signals that our concern here is principally
with models in which the equilibrium errors are defined by a function
$\b{\eq}$ for which $\b{\eq}(z)=O(\smlnorm z)$ as $\smlnorm z\goesto\infty$,
i.e.\ which exhibits at most linear asymptotic growth. For example,
if $\fe_{0}=I_{p}$ and $\eq(z)=y-m(x)$, for $z=(y^{\trans},x^{\trans})^{\trans}$,
then \ref{enu:cvar:jsr} requires $m$ to be Lipschitz, and thus that
$m(x)=O(\smlnorm x)$ as $\smlnorm x\goesto\infty$, for $r\in\{0,\ldots,p-1\}$.
\end{rem}

Our assumptions on the data generating process may now be stated.

\assumpname{CVAR}
\begin{assumption}
\label{ass:cvar} Given $\vec{z}_{0}=(z_{0}^{\trans},\ldots,z_{-k+1}^{\trans})^{\trans}$,
$\{z_{t}\}_{t\in\integers}$ follows \eqref{nlVAR} for $t\geq1$,
where $(c,\{\fe_{i}\}_{i=0}^{k})$ belong to class $\cvar_{r}$.
\end{assumption}
For the purposes of much of this paper, the sequence of innovations
$\{u_{t}\}$ may be treated as a given non-random sequence in $\reals^{p}$,
since our results are either non-asymptotic, or take limits under
the assumption that the shocks cease at some finite horizon $\tau$.
However, it will occasionally be useful to indicate the implications
of our results in cases where the following holds.

\assumpname{ERR}
\begin{assumption}
\label{ass:coerr}$\{u_{t}\}_{t\in\integers}$ is a random sequence
in $\reals^{p}$, such that $\sup_{t\in\integers}\smlnorm{u_{t}}_{2+\delta_{u}}<\infty$
for some $\delta_{u}>0$, and
\begin{equation}
U_{n}(\lambda)\defeq n^{-1/2}\sum_{t=1}^{\smlfloor{n\lambda}}u_{t}\wkc U(\lambda)\label{eq:Unwkc}
\end{equation}
on $D[0,1]$, where $U$ is a $p$-dimensional Brownian motion with
positive-definite variance $\Sigma$.
\end{assumption}
\needspace{3cm}

\subsection{Relationship to the linear VAR}

\label{subsec:linear:model}

To help cast further light on the defining conditions of class $\cvar_{r}$,
we now show that when \eqref{nlVAR} is specialised to a linear VAR,
these conditions reduce to exactly the usual conditions for a linear
VAR to generate common $I(1)$ components, as per the GJRT (\citealp{Joh95},
Ch.\ 4). This provides a clear indication that our conditions on
the general, nonlinear VAR are not overly restrictive. By a \emph{linear
VAR}, we mean one in which $\fe_{i}(z)=\Phi_{i}z$ for some $\Phi_{i}\in\reals^{p\times p}$,
for $i\in\{0,\ldots,k\}$, so that \eqref{nlVAR} becomes
\begin{equation}
\Phi_{0}z_{t}=c+\sum_{i=1}^{k}\Phi_{i}z_{t-i}+u_{t}.\label{eq:linearVAR}
\end{equation}
Let $\Phi(\lambda)\defeq\Phi_{0}-\sum_{i=1}^{k}\Phi_{i}\lambda^{i}$
denote the associated autoregressive polynomial.
\begin{prop}
\label{prop:lin-rep-var}Suppose $(c,\{\fe_{i}\}_{i=0}^{k})$ is a
linear VAR, where for some $r\in\{0,\ldots,p\}$,
\begin{enumerate}[label=\ass{LIN.\arabic*}, leftmargin=1.75cm]
\item \label{enu:lin:homeo}$\Phi_{0}$ is invertible;
\item \label{enu:lin:rank}$c\in\spn\Phi(1)$ and $\rank\Phi(1)=r=p-q$;
and
\item \label{enu:lin:roots}$\Phi(\lambda)$ has $q$ roots at unity, and
all others outside the unit circle.
\end{enumerate}
Then $(c,\{f_{i}\}_{i=1}^{k})$ belongs to class $\cvar_{r}$, with:
$\ga_{j}(z)=\Gamma_{j}z$, where $\Gamma_{j}\defeq-\sum_{i=j+1}^{k}\Phi_{i}$
for $j\in\{1,\ldots,k-1\}$; and
\[
\co(z)=\begin{bmatrix}\ct(z)\\
\eq(z)
\end{bmatrix}=\begin{bmatrix}\alpha_{\perp}^{\trans}(\Phi_{0}-\sum_{i=1}^{k-1}\Gamma_{i})\\
\beta^{\trans}
\end{bmatrix}z\eqdef\begin{bmatrix}\alpha_{\perp}^{\trans}\srp\\
\beta^{\trans}
\end{bmatrix}z\eqdef\CO z
\]
for $\alpha,\beta\in\reals^{p\times r}$ such that $\alpha\beta^{\trans}=-\Phi(1)$.
\end{prop}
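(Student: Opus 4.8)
The plan is to verify the three defining conditions \ref{enu:cvar:homeo}--\ref{enu:cvar:jsr} of \defref{cvar} in turn, taking $\alpha$ and $\beta$ from a full-rank factorisation $-\Phi(1)=\alpha\beta^{\trans}$, with $\alpha,\beta\in\reals^{p\times r}$ and $\rank\alpha=\rank\beta=r$; such a factorisation exists precisely because $\rank\Phi(1)=r$ by \ref{enu:lin:rank}. (The degenerate cases $r\in\{0,p\}$ are covered by the conventions of \remref{cvar}.) It will remain only to exhibit suitable $\abv b<\infty$ and $\abv\rho<1$, since $\cvar_{r}$ is the union of all $\cvar_{r}(\alpha,\abv b,\abv\rho)$.

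The first two conditions and the formula for $\co$ are routine. Since $\fe_{0}(z)=\Phi_{0}z$ with $\Phi_{0}$ invertible by \ref{enu:lin:homeo}, $\fe_{0}$ is a linear homeomorphism with inverse $z\mapsto\Phi_{0}^{-1}z$, giving \ref{enu:cvar:homeo}. From \eqref{pi-def} and linearity, $\pi(z)=(-\Phi_{0}+\sum_{i=1}^{k}\Phi_{i})z=-\Phi(1)z=\alpha\beta^{\trans}z$, so \ref{enu:cvar:rank} holds with $\eq(z)\defeq\beta^{\trans}z$; moreover $\spn\Phi(1)=\spn(\alpha\beta^{\trans})=\spn\alpha$, so the hypothesis $c\in\spn\Phi(1)$ supplies a unique $\cn\in\reals^{r}$ with $c=\alpha\cn$. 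By \eqref{ge}, $\ga_{j}(z)=-\sum_{i=j+1}^{k}\Phi_{i}z=\Gamma_{j}z$, and inserting $\eq(z)=\beta^{\trans}z$ and $\ga_{j}(z)=\Gamma_{j}z$ into \eqref{ct-def}--\eqref{co-map} gives the displayed expression for $\co$, with $\srp=\Phi_{0}-\sum_{i=1}^{k-1}\Gamma_{i}$ and $\alpha\beta^{\trans}=-\Phi(1)$ by construction.

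The substance is \ref{enu:cvar:jsr}. As $\fe_{0}^{-1}$, $\b{\eq}$ and $\b D_{0}$ are all linear, the map $\b z\mapsto(\b{\eq}\compose\fe_{0}^{-1})(z)+\b D_{0}\b z$ is linear; writing its matrix as $\b{\beta}_{0}^{\trans}$, the left side of \eqref{gradient} equals $\b{\beta}_{0}^{\trans}(\b z-\b z^{\prime})$ identically, so one may take $\mathcal{B}=\{\b{\beta}_{0}\}$, which is trivially closed, with $\abv b\defeq\smlnorm{\b{\beta}_{0}}<\infty$. Since $\mathcal{B}$ is a singleton, $\rho_{\jsr}(\{I+\b{\beta}_{0}^{\trans}\b{\alpha}\})=\rho(I+\b{\beta}_{0}^{\trans}\b{\alpha})$, and it suffices to show this is $<1$ (any $\abv\rho$ in $[\rho(I+\b{\beta}_{0}^{\trans}\b{\alpha}),1)$ then works). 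A direct computation from \eqref{ge} and \eqref{bold-theta} gives
\[
\b{\beta}_{0}^{\trans}=\begin{bmatrix}\beta^{\trans}\Phi_{0}^{-1} & 0\\
[\Gamma_{j}\Phi_{0}^{-1}]_{j=1}^{k-1} & D
\end{bmatrix},
\]
and, using $E^{\trans}[\Gamma_{j}\Phi_{0}^{-1}]_{j}=\Gamma_{1}\Phi_{0}^{-1}$ and $E^{\trans}D=D_{1}$, one checks that $\b A\defeq I_{pk}+\b{\alpha}\b{\beta}_{0}^{\trans}$ is exactly the companion matrix of the linear version of \eqref{VECM-system} in the state $\b z_{t}=(\fe_{0}(z_{t})^{\trans},\b{\zeta}_{t-1}^{\trans})^{\trans}$ (alternatively, one first reduces to $\fe_{0}=I_{p}$ via \lemref{unstruct}, so that $\Phi_{0}=I_{p}$). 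Because the nonzero eigenvalues of $\b{\beta}_{0}^{\trans}\b{\alpha}$ (size $p(k-1)+r$) and of $\b{\alpha}\b{\beta}_{0}^{\trans}$ (size $pk$) agree, the spectrum of $I_{p(k-1)+r}+\b{\beta}_{0}^{\trans}\b{\alpha}$ equals that of $\b A$ with $q=p-r$ eigenvalues equal to $1$ deleted; and a standard determinant reduction (of the kind used to compute a companion matrix's characteristic polynomial) shows that, when $\Phi_{0}=I_{p}$, $\det(\lambda I_{pk}-\b A)=\pm\lambda^{pk}\det\Phi(\lambda^{-1})$, so the eigenvalues of $\b A$ are the reciprocals of the roots of $\det\Phi(\lambda)$ together with $0$ to the requisite multiplicity. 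By \ref{enu:lin:rank}--\ref{enu:lin:roots}, $\det\Phi(\lambda)$ has exactly $q$ roots at unity and all others strictly outside the unit circle; removing those $q$ unit eigenvalues of $\b A$ leaves only eigenvalues of modulus $<1$, whence $\rho(I+\b{\beta}_{0}^{\trans}\b{\alpha})<1$.

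The main obstacle is this final step: the determinant identity linking the VECM companion $\b A$ to $\Phi(\lambda)$, and the accompanying bookkeeping showing that passing from $\b{\alpha}\b{\beta}_{0}^{\trans}$ to $\b{\beta}_{0}^{\trans}\b{\alpha}$ strips off \emph{precisely} the $q$ unit eigenvalues, so that no eigenvalue of modulus $1$ survives. (This could equally be read off from the classical analysis of the linearly cointegrated VAR under \ref{enu:lin:homeo}--\ref{enu:lin:roots}; cf.\ \citealp{Joh95}, Ch.\ 4.) Everything else is substitution into the definitions.
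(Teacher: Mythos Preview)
Your proposal is correct and follows essentially the same route as the paper: verify \ref{enu:cvar:homeo}--\ref{enu:cvar:rank} directly from linearity and \ref{enu:lin:homeo}--\ref{enu:lin:rank}, take $\mathcal{B}=\{\b{\beta}_{0}\}$ as a singleton, and use the Weinstein--Aronszajn/Sylvester identity to relate the eigenvalues of $I_{p(k-1)+r}+\b{\beta}_{0}^{\trans}\b{\alpha}$ to the non-unit eigenvalues of $I_{pk}+\b{\alpha}\b{\beta}_{0}^{\trans}$, which are in turn identified with the reciprocals of the roots of $\Phi(\lambda)$. The only presentational difference is that the paper carries out your ``standard determinant reduction'' explicitly---conjugating $I_{pk}+\b{\alpha}\b{\beta}_{0}^{\trans}$ by $\diag(I_{p},D)$ to obtain (the transpose of) the usual companion form for a VAR with polynomial $(\Phi_{0}^{-1})^{\trans}\Phi(\lambda)^{\trans}$---rather than first reducing to $\Phi_{0}=I_{p}$ via \lemref{unstruct}.
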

The proof of \propref{lin-rep-var} demonstrates that the factorisation
required of $\pi(z)$ in \ref{enu:cvar:rank} has its direct counterpart
in condition \ref{enu:lin:rank} on the rank of $\Phi(1)$, while
the high-level condition \ref{enu:cvar:jsr}, which here involves
only the matrices
\begin{align}
\b{\beta}^{\trans} & \defeq\begin{bmatrix}\beta^{\trans}\Phi_{0}^{-1} & 0\\
\b{\Gamma}\Phi_{0}^{-1} & D
\end{bmatrix}, & \b{\alpha} & \defeq\begin{bmatrix}\alpha & E^{\trans}\\
0 & I_{p(k-1)}
\end{bmatrix}, & \b{\Gamma} & \defeq[\Gamma_{i}]_{i=1}^{k-1}\defeq\left[\begin{smallmatrix}\Gamma_{1}\\
\vdots\\
\Gamma_{k-1}
\end{smallmatrix}\right],\label{eq:lin-boldbeta}
\end{align}
is equivalent to the familiar condition \ref{enu:lin:roots} on the
autoregressive roots.

\subsection{The structural nonlinear VAR($k$)}

\label{subsec:identification}

In macroeconomics, a conventional starting point for empirical work
is the \emph{structural} \emph{(linear) VAR} model, in which the observed
series $\{z_{t}\}$ are regarded as being generated by an underlying
$p$-dimensional i.i.d.\ sequence of \emph{structural shocks} $\{\err_{t}\}$,
whose elements are mutually orthogonal, and each of which have an
economic interpretation (as e.g.\ an aggregate supply shock, a monetary
policy shock, etc.).\footnote{In some treatments (e.g.\ \citealp{LMS17JoE}; \citealp{GMR20REStud})
the components of $\err_{t}$ are taken to be not merely uncorrelated,
but mutually independent. Under non-Gaussianity, this is known to
yield certain additional identifying restrictions. We emphasise that
in the present work, only orthogonality between the components of
$\err_{t}$ is maintained.} This perspective may also be adopted within the framework of \eqref{nlVAR},
reproduced here as
\begin{equation}
\fe_{0}(z_{t})=c+\sum_{i=1}^{k}\fe_{i}(z_{t-i})+u_{t},\tag{\ref{eq:nlVAR}}
\end{equation}
by supposing that there is a parametrisation $(c^{\err},\{\fe_{i}^{\err}\}_{i=0}^{k})$
of the model such that $\{z_{t}\}$ satisfies
\begin{equation}
\fe_{0}^{\err}(z_{t})=c^{\err}+\sum_{i=1}^{k}\fe_{i}^{\err}(z_{t-i})+\err_{t}.\label{eq:struct-form}
\end{equation}
We may suppose, as a convenient location and scale normalisation,
that each of the elements of $\err_{t}$ have mean zero and unit variance,
so that $\{\err_{t}\}$ is i.i.d.\ with $\expect\err_{t}=0$ and
$\expect\err_{t}\err_{t}^{\trans}=I_{p}$. For the purposes of this
section, we shall also require $\err_{t}$ to be continuously distributed,
with density $\den^{\err}$. (For simplicity of presentation, we also
treat the order $k$ of the VAR as known, though only a known finite
upper bound on the order of the VAR is required for the discussion
that follows.)

The question then arises as to whether, or to what extent, the structural
shocks \{$\err_{t}\}$ are identified by observation of $\{z_{t}\}$.
To phrase this question more precisely, we first clarify how the model
\eqref{nlVAR} is parametrised. For $i\in\{0,\ldots,k\}$, let $\fespc_{i}$
denote a collection of functions $\reals^{p}\setmap\reals^{p}$ to
which each $\fe_{i}$ respectively belongs. We also specialise \ref{ass:coerr}
to the case where $\{u_{t}\}$ is i.i.d., with a density $\den$ on
$\reals^{p}$ that belongs to some set $\denspc$, normalised to have
$\expect u_{t}=0$ and $\expect u_{t}u_{t}^{\trans}=I_{p}$. The parameter
space for \eqref{nlVAR} is then
\[
\set P\defeq\{(\tilde{c},\{\tilde{\fe}_{i}\}_{i=0}^{k},\tilde{\den})\mid\tilde{c}\in\reals\sep\tilde{\fe}_{i}\in\fespc_{i}\text{ for }i\in\{0,\ldots,k\}\sep\tilde{\den}\in\denspc\}.
\]
Under the regularity conditions given in \appref{identification},
each $(\tilde{c},\{\tilde{\fe}_{i}\}_{i=0}^{k},\tilde{\den})\in\set P$
completely specifies the density of $z_{t}$ conditional on $\vec{z}_{t-1}=(z_{t-1}^{\trans},\ldots,z_{t-k}^{\trans})^{\trans}$.
Following \citet[pp.~949f.]{Matz09Ecta}, we say that two points in
$\set P$ are \emph{observationally equivalent} if they imply exactly
the same conditional density function, and thus the same likelihood
for $\{z_{t}\}_{t=1}^{n}$, conditional on the initial values $\vec{z}_{0}$
(which as per \ref{ass:cvar} we take to be arbitrary, and thus uninformative
about the model parameters).

Let $\Upsilon\in\reals^{p\times p}$ be an orthogonal matrix, and
suppose that $\denspc$ is such that the density $\tilde{\den}$ of
$\tilde{u_{t}}\defeq\Upsilon\err_{t}$ also lies in $\denspc$. Then
it is evident from pre-multiplying \eqref{struct-form} through by
$\Upsilon$, to obtain
\[
\tilde{\fe}_{0}(z_{t})\defeq\Upsilon\fe_{0}^{\err}(z_{t})=\Upsilon c^{\err}+\sum_{i=1}^{k}\Upsilon\fe_{i}^{\err}(z_{t-i})+\Upsilon\err_{t}\eqdef\tilde{c}+\sum_{i=1}^{k}\tilde{\fe}_{i}(z_{t-i})+\tilde{u}_{t},
\]
that $(\tilde{c},\{\tilde{\fe}_{i}\}_{i=0}^{k},\tilde{\den})$ must
be observationally equivalent to $(c^{\err},\{\fe_{i}^{\err}\}_{i=0}^{k},\den^{\err})$.
Thus, as is familiar from the \emph{linear} structural VAR, the model
parameters and the structural shocks are at best identified up to
an unknown orthogonal transformation.

Remarkably, despite the much broader class of nonlinear transformations
permitted by \eqref{nlVAR}, such orthogonal transformations delineate
the \emph{only} loci of non-identification in the nonlinear VAR. By
\thmref{shockid} in \appref{identification}, under certain conditions
on the sets $\{\fespc_{i}\}_{i=0}^{k}$ and $\denspc$, and the parameters
generating $\{z_{t}\}$, there exists a $\den\in\denspc$ such that
the parameters $(c,\{\fe_{i}\}_{i=0}^{k},\den)$ of \eqref{nlVAR}
are observationally equivalent to those $(c^{\err},\{\fe_{i}^{\err}\}_{i=0}^{k},\den^{\err})$
of \eqref{struct-form}, if \emph{and only if} there is an orthogonal
matrix $\Upsilon\in\reals^{p\times p}$ such that
\begin{equation}
c=\Upsilon c^{\err}\text{ and }\fe_{i}(z)=\Upsilon\fe_{i}^{\err}(z)\sep\forall z\in\reals^{p}\sep i\in\{0,\ldots,k\},\label{eq:identified-set}
\end{equation}
and thus the shocks $\{u_{t}\}$ in \eqref{nlVAR} satisfy
\begin{equation}
u_{t}=\fe_{0}(z_{t})-c-\sum_{i=1}^{k}\fe_{i}(z_{t-i})=\Upsilon\left[\fe_{0}^{\err}(z_{t})-c^{\err}-\sum_{i=1}^{k}\fe_{i}^{\err}(z_{t-i})\right]=\Upsilon\err_{t}.\label{eq:u-to-epsilon}
\end{equation}
(For a discussion of the conditions under which the preceding holds,
see \appref{identification}: these are essentially weak smoothness
conditions on the elements of $\{\fespc_{i}\}_{i=0}^{k}$ and $\denspc$,
together with the requirement that $\{\fe_{i}^{\err}\}_{i=1}^{k}$be
such as to ensure sufficient dependence of the r.h.s.\ of the model
on lags of $z_{t}$.) Conversely, should \eqref{identified-set} fail
to hold, then there will be at least some realisations of $\{z_{t}\}$
for which the likelihoods of $(c,\{\fe_{i}\}_{i=0}^{k},\den)$ and
$(c^{\err},\{\fe_{i}^{\err}\}_{i=0}^{k},\den^{\err})$ will be distinct,
and so the data is to this extent informative about these two alternative
parametrisations of the model.\footnote{We do not\emph{ }claim, on the basis of \thmref{shockid}, that the
parameters of the model are consistently estimable up to an orthogonal
transformation. While it seems reasonable to suppose that consistent
nonparametric estimation of the model would be possible (under regularity
conditions) if $\{z_{t}\}$ is stationary and ergodic, the usual connection
between identification and consistent estimation is attenuated when
$\{z_{t}\}$ possesses some stochastic (or indeed, deterministic)
trends, because of the non-recurrence of those trends in higher dimensions
(see \citealp[Sec.~6]{Bing01Hdbk}; \citealp[p.~62]{GP13JoE}). Consistent
estimation of the model parameters (up to $\Upsilon$) would in such
cases likely require further restrictions on the functional form of
$\fe_{i}$.}

Accordingly, when subsequently discussing the problem of structural
shock identification, we shall suppose that the shocks $u_{t}$ appearing
in \eqref{nlVAR} are related to $\err_{t}$ in precisely the manner
of \eqref{u-to-epsilon}, yielding the \emph{structural nonlinear
VAR model}
\begin{align}
\fe_{0}(z_{t}) & =c+\sum_{i=1}^{k}\fe_{i}(z_{t-i})+u_{t} & u_{t} & =\Upsilon\err_{t}\label{eq:nlSVAR}
\end{align}
where $\Upsilon\in\reals^{p\times p}$ is an orthogonal matrix. In
other words, on the strength of \thmref{shockid}, we shall regard
observation of $\{z_{t}\}$ as being sufficient to identify the model
parameters up to (and only up to) $\Upsilon$, and thence the structural
shocks up to transformation by this same matrix. To identify the structural
shocks, we thus seek additional restrictions sufficient to pin down
(at least some of) the unknown elements of $\Upsilon$, such as will
be provided by the long-run identifying restrictions developed in
\subsecref{long-run-id} below.

\section{Granger--Johansen representation theory}

\label{sec:common-trends}

\label{subsec:gjrt}

The starting point for the analysis of the linear cointegrated VAR
is the Granger--Johansen representation theorem (GJRT), which decomposes
$z_{t}$ into the sum of: (i) an initial condition, (ii) a stochastic
trend, and (iii) a weakly dependent process (which is stationary if
$z_{t}$ is suitably initialised). This may be rendered in various
ways: to facilitate the comparison with \thmref{gjrt} below (from
which it follows in the linear case), we shall here write this for
a linear VAR satisfying \ref{enu:lin:homeo}\ass{--3}, as 
\begin{equation}
z_{t}=\begin{bmatrix}\alpha_{\perp}^{\trans}\srp\\
\beta^{\trans}
\end{bmatrix}^{-1}\left(\begin{bmatrix}\alpha_{\perp}^{\trans}\abv{\srfn}(\vec{z}_{0})+\alpha_{\perp}^{\trans}\sum_{s=1}^{t}u_{s}\\
-\mu
\end{bmatrix}+S_{\co}^{\trans}\b{\xi}_{t}\right),\label{eq:GJ-type-rep}
\end{equation}
where: (i) $\abv{\srfn}(\vec{z}_{0})$ captures the dependence on
the initial values $\vec{z}_{0}=(z_{0}^{\trans},\ldots,z_{-k+1}^{\trans})^{\trans}$;
(ii) $\alpha_{\perp}^{\trans}\sum_{s=1}^{t}u_{s}$ is a stochastic
trend of dimension $q=\rank\alpha_{\perp}=p-r$; and (iii)
\begin{equation}
\begin{bmatrix}\mu+\beta^{\trans}z_{t}\\
\Delta\b{\zeta}_{t}
\end{bmatrix}=\b{\xi}_{t}=(I_{p(k-1)+r}+\b{\beta}^{\trans}\b{\alpha})\b{\xi}_{t-1}+\b{\beta}^{\trans}\b u_{t}\label{eq:eqerr-linear}
\end{equation}
follows a VAR, where $\b{\alpha}$ and $\b{\beta}$ are as in \eqref{lin-boldbeta}
above. The stability of this VAR follows from \propref{lin-rep-var},
because membership of $\cvar_{r}$ implies, via \ref{enu:cvar:jsr},
that all the eigenvalues of $I_{p(k-1)+r}+\b{\beta}^{\trans}\b{\alpha}$
must lie inside the unit circle. (It may also be noted from \eqref{bz-in-linear}
above that, in this case, $\Delta\b{\zeta}_{t}$ is itself a linear
function of $\{\Delta z_{t-i}\}_{i=0}^{k-2}$.) $\b{\xi}_{t}$ may
thus be rendered stationary through an appropriate choice of the initial
conditions ($\vec{z}_{0}$ or, equivalently, $\b z_{0}$).

Before providing our extension of the GJRT to the general setting
of \eqref{nlVAR}, we first note that the nonlinear counterpart of
$\b{\xi}_{t}$ will continue to follow an autoregression of the form
\eqref{eqerr-linear}, but now with time-varying coefficients, in
particular with $\b{\beta}=\b{\beta}_{t}$ now depending on the level
of $z_{t}$. While such processes cannot be stationary, we still have
a well-defined notion of \emph{stability} for such processes, which
is sufficient to ensure that $\b{\xi}_{t}$ remains of strictly smaller
stochastic order than the common trends.
\begin{defn}
\label{def:expstable}Suppose that $\{w_{t}\}_{t\in\naturals}$ follows
the time-varying VAR,
\begin{equation}
w_{t}=c_{t}+A_{t}w_{t-1}+B_{t}v_{t},\label{eq:wt-time-varying}
\end{equation}
where $A_{t}\in\set A\subset\reals^{m\times m}$, $B_{t}\in\set B\subset\reals^{m\times\ell}$,
and $c_{t}\in\set C\subset\reals^{m}$ for all $t\in\naturals$, from
some given $w_{0}$. We say that $\{w_{t}\}$ is \emph{exponentially
stable} if $\set B$ and $\set C$ are bounded, and there exists a
$C<\infty$ and a $\rho\in[0,1)$ such that 
\[
\norm{\prod_{s=1}^{t}A_{s}}\leq C\rho^{t}\sep\forall t\in\naturals.
\]
\end{defn}
A sufficient condition for exponential stability is that $\rho_{\jsr}(\set A)<1$
(e.g.\ \citealp{Jungers09}, Cor.~1.1), which motivates the restriction
on the JSR that appears in \ref{enu:cvar:jsr}. Notable consequences
are that if $v_{t}=0$ for all $t\geq\tau$, then $w_{t}\goesto0$,
while if $\{v_{t}\}$ and $w_{0}$ have uniformly bounded $2+\delta$
moments, then $\max_{1\leq t\leq n}\smlnorm{w_{t}}=o_{p}(n^{1/2})$.
(For this last, see Lemma~A.1 in \citealp{DMW22}.)

We now state our counterpart of the Granger--Johansen representation
theorem for the model \eqref{nlVAR}, which constitutes the main result
of this paper.
\begin{thm}
\label{thm:gjrt}Suppose \ref{ass:cvar} holds. Then $\co:\reals^{p}\setmap\reals^{p}$
in \eqref{co-map} is a homeomorphism, and
\begin{equation}
z_{t}=\co^{-1}\left(\begin{bmatrix}\alpha_{\perp}^{\trans}\abv{\srfn}(\vec{z}_{0})+\alpha_{\perp}^{\trans}\sum_{s=1}^{t}u_{s}\\
-\mu
\end{bmatrix}+S_{\co}^{\trans}\b{\xi}_{t}\right),\label{eq:gjrt-z}
\end{equation}
where 
\[
\abv{\srfn}(\vec{z}_{s})\defeq\fe_{0}(z_{s})-\sum_{i=1}^{k-1}\ga_{i}(z_{s-i}),
\]
$S_{\co}^{\trans}$ is the $p\times[p(k-1)+r]$ matrix given by
\[
S_{\co}^{\trans}\defeq\begin{bmatrix}-\alpha_{\perp}^{\trans} & 0\\
0 & I_{r}
\end{bmatrix}\begin{bmatrix}0_{p\times r} & I_{p} & \cdots & I_{p}\\
I_{r} & 0_{r\times p} & \cdots & 0_{r\times p}
\end{bmatrix},
\]
and 
\[
\b{\xi}_{t}\defeq\b{\mu}+\b{\eq}(z_{t})+\b D_{0}\b z_{t}=\begin{bmatrix}\mu+\eq(z_{t})\\
\Delta\b{\zeta}_{t}
\end{bmatrix}\eqdef\begin{bmatrix}\xi_{t}\\
\Delta\b{\zeta}_{t}
\end{bmatrix}
\]
follows the exponentially stable VAR
\begin{equation}
\b{\xi}_{t}=(I_{p(k-1)+r}+\b{\beta}_{t}^{\trans}\b{\alpha})\b{\xi}_{t-1}+\b{\beta}_{t}^{\trans}\b u_{t}\label{eq:gjrt-xi}
\end{equation}
with $\b{\beta}_{t}\in{\cal B}$ for all $t\in\naturals$. 
\end{thm}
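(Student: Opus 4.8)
The plan is to derive both conclusions from the compact VECM representation of \lemref{vecm}, after first disposing of $\fe_{0}$. Setting $z_{t}^{\ast}\defeq\fe_{0}(z_{t})$, the process $\{z_{t}^{\ast}\}$ is generated by a model in class $\cvar_{r}^{\ast}$, with $\fe_{i}^{\ast}=\fe_{i}\compose\fe_{0}^{-1}$ and $\ga_{j}^{\ast}=\ga_{j}\compose\fe_{0}^{-1}$ (this is \lemref{unstruct}). A one-line computation shows that $\ct$, $\eq$, and hence $\co$, as well as $\abv{\srfn}$, are unchanged under this reparametrisation -- precisely, $\co=\co^{\ast}\compose\fe_{0}$ and $\abv{\srfn}^{\ast}(\vec z_{t}^{\ast})=\abv{\srfn}(\vec z_{t})$ -- so it suffices to prove the theorem for $\cvar_{r}^{\ast}$. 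Homeomorphy of $\co=\co^{\ast}\compose\fe_{0}$ then follows from \lemref{co-cts} applied to the starred model together with \ref{enu:cvar:homeo}, and \eqref{gjrt-z} follows from its starred counterpart by composing with $\fe_{0}^{-1}$. From here on I would take $\fe_{0}=I_{p}$.

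The core step is to recast \eqref{VECM-system} as an autoregression in $\b{\xi}_{t}\defeq\b{\mu}+\b{\eq}(z_{t})+\b D_{0}\b z_{t}$, where $\b{\mu}\defeq(\mu^{\trans},0^{\trans})^{\trans}$. Using $\pi(z)=\alpha\eq(z)$ and $c=\alpha\mu$ from \ref{enu:cvar:rank} together with the definitions \eqref{bold-theta}--\eqref{bold-alpha}, I would verify three elementary matrix identities: $\b{\pi}(z)=\b{\alpha}\b{\eq}(z)$ (which reduces to $E^{\trans}\b{\ga}(z)=\ga_{1}(z)$); $\b D=\b{\alpha}\b D_{0}$ (which reduces to $E^{\trans}D=D_{1}$); and $\b c=\b{\alpha}\b{\mu}$. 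Directly from $\zeta_{j,t}=\sum_{i=j}^{k-1}\ga_{i}(z_{t-i+j})$ one also gets $\b{\ga}(z_{t})+D\b{\zeta}_{t-1}=\Delta\b{\zeta}_{t}$, so that $\b{\xi}_{t}=((\mu+\eq(z_{t}))^{\trans},\Delta\b{\zeta}_{t}^{\trans})^{\trans}$, as the statement asserts. Combining the first three identities, the deterministic part of \eqref{VECM-system} collapses: $\b c+\b{\pi}(z_{t-1})+\b D\b z_{t-1}=\b{\alpha}[\b{\mu}+\b{\eq}(z_{t-1})+\b D_{0}\b z_{t-1}]=\b{\alpha}\b{\xi}_{t-1}$, i.e.\ $\Delta\b z_{t}=\b{\alpha}\b{\xi}_{t-1}+\b u_{t}$. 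On the other hand, applying \ref{enu:cvar:jsr} -- that is, \eqref{gradient}, with $\fe_{0}^{-1}$ the identity -- to the pair $(\b z_{t},\b z_{t-1})$ produces some $\b{\beta}_{t}\in\mathcal B$ with $\b{\xi}_{t}-\b{\xi}_{t-1}=\b{\beta}_{t}^{\trans}(\b z_{t}-\b z_{t-1})$. Chaining the two displays gives \eqref{gjrt-xi}, namely $\b{\xi}_{t}=(I_{p(k-1)+r}+\b{\beta}_{t}^{\trans}\b{\alpha})\b{\xi}_{t-1}+\b{\beta}_{t}^{\trans}\b u_{t}$; and since \eqref{jsr} bounds the joint spectral radius of the coefficient set by $\abv{\rho}<1$ while \eqref{gradient} bounds $\smlnorm{\b{\beta}}\leq\abv b$, exponential stability of $\{\b{\xi}_{t}\}$ in the sense of \defref{expstable} follows from \citet[Cor.~1.1]{Jungers09}.

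For \eqref{gjrt-z} itself I would treat the two blocks of $\co(z_{t})$ separately. The lower block is immediate: $\eq(z_{t})=-\mu+\xi_{t}$ by the definition of $\xi_{t}$, which is exactly the lower block of the claimed right-hand side (the lower block of $S_{\co}^{\trans}\b{\xi}_{t}$ being $\xi_{t}$). For the upper (``common trend'') block, observe that $\Delta$ commutes with the lagged sum in $\abv{\srfn}(\vec z_{t})=z_{t}-\sum_{i=1}^{k-1}\ga_{i}(z_{t-i})$, so that \eqref{VECM-first} with $\fe_{0}=I_{p}$ (using \ref{enu:cvar:rank}) rearranges to $\Delta\abv{\srfn}(\vec z_{t})=\alpha\xi_{t-1}+u_{t}$; premultiplying by $\alpha_{\perp}^{\trans}$ annihilates the $\alpha\xi_{t-1}$ term, and telescoping from $1$ to $t$ yields $\alpha_{\perp}^{\trans}\abv{\srfn}(\vec z_{t})=\alpha_{\perp}^{\trans}\abv{\srfn}(\vec z_{0})+\alpha_{\perp}^{\trans}\sum_{s=1}^{t}u_{s}$. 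Finally, the identity $\srfn(z_{t})-\abv{\srfn}(\vec z_{t})=-\sum_{i=1}^{k-1}[\ga_{i}(z_{t})-\ga_{i}(z_{t-i})]=-\sum_{j=1}^{k-1}\Delta\zeta_{j,t}$ -- each $\ga_{i}(z_{t})-\ga_{i}(z_{t-i})$ being a sum of $i$ consecutive first differences, which reindexes to the same double sum as $\sum_{j=1}^{k-1}\Delta\zeta_{j,t}$ -- converts the last display into $\ct(z_{t})=\alpha_{\perp}^{\trans}\srfn(z_{t})=\alpha_{\perp}^{\trans}\abv{\srfn}(\vec z_{0})+\alpha_{\perp}^{\trans}\sum_{s=1}^{t}u_{s}-\alpha_{\perp}^{\trans}\sum_{j=1}^{k-1}\Delta\zeta_{j,t}$, which is precisely the upper block of the right-hand side of \eqref{gjrt-z}, since the upper block of $S_{\co}^{\trans}\b{\xi}_{t}$ equals $-\alpha_{\perp}^{\trans}\sum_{j=1}^{k-1}\Delta\zeta_{j,t}$. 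Stacking the two blocks gives $\co(z_{t})$ equal to the bracketed expression in \eqref{gjrt-z}, and applying $\co^{-1}$ completes the argument.

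I expect the main obstacle to be the bookkeeping in the reduction of the first paragraph: one must check carefully that $\co$ and $\abv{\srfn}$ really are invariant under the passage to $z_{t}^{\ast}$, since this is what legitimately lets the whole argument be run with $\fe_{0}=I_{p}$ while still delivering the stated $\co$, $\abv{\srfn}$ and $S_{\co}^{\trans}$. Once this is in hand, the matrix identities of the second paragraph and the telescoping of the third are routine -- \emph{provided} one works with the non-standard state vector $\b z_{t}$ and with the lagged object $\abv{\srfn}(\vec z_{t})$ rather than with $\vec z_{t}$ and $\srfn(z_{t})$, which is exactly the point of the representation in \lemref{vecm}. A minor technicality is the measurability of $t\mapsto\b{\beta}_{t}$, which can be arranged by a measurable selection theorem since $\mathcal B$ is closed and \eqref{gradient} defines a closed-valued correspondence in $(\b z_{t},\b z_{t-1})$.
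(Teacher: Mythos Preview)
Your proposal is correct and follows essentially the same approach as the paper's proof: both reduce to $\fe_{0}=I_{p}$ via \lemref{unstruct}, establish the three matrix identities $\b{\pi}=\b{\alpha}\b{\eq}$, $\b D=\b{\alpha}\b D_{0}$, $\b c=\b{\alpha}\b{\mu}$ to obtain $\Delta\b z_{t}=\b{\alpha}\b{\xi}_{t-1}+\b u_{t}$, apply \ref{enu:cvar:jsr} to derive \eqref{gjrt-xi}, and then read off the two blocks of $\co(z_{t})$ separately. The only cosmetic differences are that the paper performs the reduction to $\fe_{0}=I_{p}$ at the end rather than the beginning, and handles the upper block via $\b{\alpha}_{\perp}^{\trans}\Delta\b z_{t}$ and a $D^{-1}$ manipulation where you telescope $\alpha_{\perp}^{\trans}\abv{\srfn}(\vec z_{t})$ directly---but since $z_{t}-\zeta_{1,t-1}=\abv{\srfn}(\vec z_{t})$ when $\fe_{0}=I_{p}$, these are the same calculation.
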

\begin{rem}
\label{rem:-gjrt}\subremark{}\label{subrem:gjrt-lin} That the preceding
specialises to \eqref{GJ-type-rep} in the linear case follows from
the fact that $\ct(z)=\alpha_{\perp}^{\trans}\srp z$ and $\eq(z)=\beta^{\trans}z$
by \propref{lin-rep-var}, and that ${\cal B}=\{\b{\beta}\}$ for
$\b{\beta}$ as in \eqref{lin-boldbeta}. To illustrate more clearly
the connection between \eqref{GJ-type-rep} and conventional statements
of the GJRT, we note that the part of the r.h.s.\ of \eqref{GJ-type-rep}
that depends on the common stochastic trends may be written as
\begin{equation}
\begin{bmatrix}\alpha_{\perp}^{\trans}\srp\\
\beta^{\trans}
\end{bmatrix}^{-1}\begin{bmatrix}\alpha_{\perp}^{\trans}\sum_{s=1}^{t}u_{s}\\
0_{r\times1}
\end{bmatrix}=\beta_{\perp}(\alpha_{\perp}^{\trans}\srp\beta_{\perp})^{-1}\alpha_{\perp}^{\trans}\sum_{s=1}^{t}u_{s}\eqdef P_{\beta_{\perp}}\sum_{s=1}^{t}u_{s}\label{eq:gjrt-linear}
\end{equation}
by \lemref{lin-co-map} in \appref{examples}, which agrees precisely
with \citet[Ch.~4]{Joh95}.

\subremark{} Suppose that $\{u_{t}\}$ satisfies \assref{coerr}.
Then it follows by Lemma~A.1 in \citet{DMW22} that $\max_{1\leq t\leq n}\smlnorm{\b{\xi}_{t}}=o_{p}(n^{1/2})$,
and so is strictly of smaller order than $\sum_{s=1}^{\smlfloor{n\lambda}}u_{s}$.
This is crucial for obtaining the limiting distribution of the standardised
process $n^{-1/2}z_{\smlfloor{n\lambda}}$: though here further assumptions
are required, because of the presence of the nonlinear map $\co$
in \eqref{gjrt-z}. Results of this kind, which were obtained in the
setting of the CKSVAR by \citet{DMW22}, are the subject of a companion
paper to the present work.

\subremark{} Applying $\co$ to both sides of \eqref{gjrt-z}, we
obtain
\[
\co(z_{t})=\begin{bmatrix}\ct(z_{t})\\
\eq(z_{t})
\end{bmatrix}=\begin{bmatrix}\alpha_{\perp}^{\trans}\abv{\srfn}(\vec{z}_{0})+\alpha_{\perp}^{\trans}\sum_{s=1}^{t}u_{s}-\alpha_{\perp}^{\trans}\sum_{i=1}^{k-1}\Delta\zeta_{it}\\
\xi_{t}-\mu
\end{bmatrix}.
\]
Asymptotically, under \assref{coerr}, the dominant component of $\ct(z_{t})$
would be the $q$ common trends $\alpha_{\perp}^{\trans}\sum_{s=1}^{t}u_{s}$,
which would in turn dominate $\eq(z_{t})$, since the latter equals
the first $r$ components of the stable autoregressive process $\b{\xi}_{t}$.
In this sense, the action of $\co$ separates $z_{t}$ into its `common
trend' and `equilibrium error' components. Since $\eq(z_{t})$
is purged of these common trends -- while being restricted by the
requirement that $\co$ be a homeomorphism -- it may be said to provide
a representation of the `nonlinear cointegrating relations' that
exist between the elements of $z_{t}$.
\end{rem}

\section{Consequences of the representation theorem}

\label{sec:consequences}

\subsection{Attractor spaces}

\label{subsec:attractors}

A first application of \thmref{gjrt} is to verify the stability of
the (non-stochastic) steady state solutions to \eqref{nlVAR}. By
considering \eqref{VECM-first} when $z_{t-1}=\cdots=z_{t-k}=z$ for
some $z\in\reals^{p}$ and $u_{t}=0$, we see that for $z$ to be
a steady state equilibrium, it must satisfy
\[
0=c+\pi(z)=\alpha[\mu+\eq(z)]
\]
or equivalently $\eq(z)=-\mu$, since $\rank\alpha=r$ and $\eq:\reals^{p}\setmap\reals^{r}$.
Thus the set of steady state equilibria is given by the $q$-dimensional
manifold
\begin{equation}
\set M_{\mu}\defeq\{z\mid\pi(z)=-c\}=\{z\in\reals^{p}\mid\eq(z)=-\mu\}=\co^{-1}(\reals^{q}\times\{-\mu\})\label{eq:ctspc-mu}
\end{equation}
where the fact that $\ctspc_{\mu}$ is indeed a $q$-dimensional manifold
follows from the final equality, since $\co:\reals^{p}\setmap\reals^{p}$
is a homeomorphism.

For the purposes of analysing the stability of $\ctspc_{\mu}$, we
shall suppose that the shocks $\{u_{t}\}$ are set to zero after some
$\tau\in\naturals$: that is, $u_{s}=0$ for all $s\geq\tau+1$, and
then consider how $z_{t}$ evolves from time $\tau$ forward. In other
words, we shall fix the state of the model at time $\tau-1$ at
\begin{equation}
\vec{z}_{\tau-1}=(z_{\tau-1}^{\trans},z_{\tau-2}^{\trans},\ldots,z_{\tau-k}^{\trans})^{\trans}=(\state_{(1)}^{\trans},\state_{(2)}^{\trans},\ldots,\state_{(k)}^{\trans})^{\trans}\eqdef\state;\label{eq:tau-general-init}
\end{equation}
allow one final shock $u_{\tau}=u\in\reals^{p}$ to occur at time
$\tau$; and then evaluate the (non-stochastic) limit of $z_{t}=z_{t}(u;\state)$
as $t\goesto\infty$. We aim to show that $\ctspc_{\mu}$ is strictly
stable, in the sense of the following.
\begin{defn}
\label{def:stability}${\cal S}\subset\reals^{p}$ is \emph{stable}
if for every $(u,\state)\in\reals^{p}\times\reals^{kp}$, there exists
a $z_{\infty}\in{\cal S}$ such that $z_{t}(u,\state)\goesto z_{\infty}\in{\cal S}$.
$z_{\infty}\in\reals^{p}$ is a \emph{non-trivial attractor} if for
every $\state\in\reals^{kp}$, there exists a non-empty $\set U(z_{\infty};\state)\subset\reals^{p}$
such that $z_{t}(u,\state)\goesto z_{\infty}$ for all $u\in\set U(z_{\infty};\state)$;
we term $\set U(z_{\infty};\state)$ the \emph{domain of attraction}
for $z_{\infty}$, given $\state$. ${\cal S}$ is \emph{strictly
stable} if it is stable and contains only non-trivial attractors.
\end{defn}

In other words, if $\ctspc_{\mu}$ is strictly stable, then whatever
the current state $\vec{z}_{\tau-1}=\state$ of the model and the
given limiting $z_{\infty}\in\ctspc_{\mu}$, there is always a value
for the $\tau$-dated shock $u_{\tau}$ that would lead $z_{t}$ to
converge to $z_{\infty}$; every element in $\ctspc_{\mu}$ may thus
ultimately be `reached' from $\vec{z}_{\tau-1}$. (Note that this
is not so trivial a matter as choosing $u_{\tau}=u$ such that $z_{\tau}=z_{\infty}$
immediately, since what is required is that $z_{t}$ converge to a
\emph{steady state equilibrium} at $z_{\infty}$.) Since $\{\b{\xi}_{t}\}$
is exponentially stable by \thmref{gjrt}, it follows that $\b{\xi}_{t}\goesto0$
as $t\goesto\infty$ (see the discussion following \defref{expstable}
above). Hence, noting that $\sum_{s=\tau}^{t}u_{s}=u_{\tau}=u$, we
have by that result that
\[
z_{t}(u;\state)=\co^{-1}\left(\begin{bmatrix}\alpha_{\perp}^{\trans}\abv{\srfn}(\state)+\alpha_{\perp}^{\trans}u\\
-\mu
\end{bmatrix}+S_{\co}^{\trans}\b{\xi}_{t}\right)\goesto\co^{-1}\left(\begin{bmatrix}\alpha_{\perp}^{\trans}[\abv{\srfn}(\state)+u]\\
-\mu
\end{bmatrix}\right)\eqdef z_{\infty}(u;\state)\in\ctspc_{\mu}.
\]
Because the r.h.s.\ depends on $\state$ only through $\abv{\srfn}(\state)$,
and as $u\in\reals^{p}$ varies $\alpha_{\perp}^{\trans}[\abv{\srfn}(\state)+u]$
ranges freely over $\reals^{q}$, we can induce $z_{\infty}(u;\state)$
to take any desired value in $\ctspc_{\mu}$. We thus obtain the following.

\begin{thm}
\label{thm:stab}Suppose \assref{cvar} holds. Then
\begin{enumerate}
\item \label{enu:stab:cvg}for every $(u,\state)\in\reals^{p}\times\reals^{kp}$,
as $t\goesto\infty$
\begin{equation}
z_{t}(u;\state)\goesto z_{\infty}(u;\state)=\co^{-1}\begin{bmatrix}\alpha_{\perp}^{\trans}[\abv{\srfn}(\state)+u]\\
-\mu
\end{bmatrix}\in\ctspc_{\mu};\label{eq:zt-det-lim}
\end{equation}
\end{enumerate}
\begin{enumerate}[resume]
\item \label{enu:stab:dom}for any $(z,\state)\in\ctspc_{\mu}\times\reals^{kp}$
the $r$-dimensional affine subspace
\begin{equation}
\set U(z;\state)\defeq(\spn\alpha)+[\srfn(z)-\abv{\srfn}(\state)]\label{eq:domain-of-attraction}
\end{equation}
is the domain of attraction for $z$, given $\state$;
\end{enumerate}
and hence $\ctspc_{\mu}$ is strictly stable.
\end{thm}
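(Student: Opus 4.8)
The plan is to read both parts off \thmref{gjrt}; indeed, nearly all of the work for \enuref{stab:cvg} has already been carried out in the discussion immediately preceding the statement, so what remains is to record it and to pin down the domain of attraction in \enuref{stab:dom} exactly. First I would note that, once the shocks are truncated ($u_{s}=0$ for $s\geq\tau+1$), the restarted trajectory $t\mapsto z_{t}(u;\state)$ still obeys \eqref{nlVAR} with the same parameters $(c,\{\fe_{i}\}_{i=0}^{k})\in\cvar_{r}$; shifting the time origin to $\tau-1$ and taking $\state$ as the initial value, \thmref{gjrt} applies and yields the representation \eqref{gjrt-z} for $t\geq\tau$, with $\vec{z}_{0}$ there replaced by $\state$ and $\sum_{s=1}^{t}u_{s}$ by $\sum_{s=\tau}^{t}u_{s}=u_{\tau}=u$. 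Since the driving noise $\b{u}_{t}$ of \eqref{gjrt-xi} vanishes for $t\geq\tau+1$, exponential stability of $\{\b{\xi}_{t}\}$ (guaranteed by \thmref{gjrt}) forces $\b{\xi}_{t}\goesto0$, via the first of the consequences noted after \defref{expstable}. Passing to the limit using continuity of $\co^{-1}$, which is a homeomorphism by \thmref{gjrt}, gives \eqref{zt-det-lim}; that $z_{\infty}(u;\state)\in\ctspc_{\mu}$ is then immediate from \eqref{ctspc-mu}, since the lower block of $\co(z_{\infty}(u;\state))$ equals $-\mu$, i.e.\ $\eq(z_{\infty}(u;\state))=-\mu$.

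For \enuref{stab:dom}, fix $(z,\state)\in\ctspc_{\mu}\times\reals^{kp}$. By \enuref{stab:cvg} and uniqueness of limits, $z_{t}(u;\state)\goesto z$ precisely when $z_{\infty}(u;\state)=z$; applying the injective map $\co$, this holds iff $\ct(z)=\alpha_{\perp}^{\trans}[\abv{\srfn}(\state)+u]$ and $\eq(z)=-\mu$. The second equality holds because $z\in\ctspc_{\mu}$, and, using $\ct(z)=\alpha_{\perp}^{\trans}\srfn(z)$ from \eqref{ct-def}, the first rearranges to $\alpha_{\perp}^{\trans}(u-[\srfn(z)-\abv{\srfn}(\state)])=0$, i.e.\ to $u-[\srfn(z)-\abv{\srfn}(\state)]\in\ker\alpha_{\perp}^{\trans}$. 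Since $\rank\alpha_{\perp}=p-r$ and $\alpha_{\perp}^{\trans}\alpha=0$, one has $\ker\alpha_{\perp}^{\trans}=\spn\alpha$, so the admissible $u$ form exactly the non-empty $r$-dimensional affine subspace $\set U(z;\state)=(\spn\alpha)+[\srfn(z)-\abv{\srfn}(\state)]$ of \eqref{domain-of-attraction}. This identifies $\set U(z;\state)$ as the (maximal) domain of attraction for $z$ given $\state$ and exhibits $z$ as a non-trivial attractor; since this holds for every $z\in\ctspc_{\mu}$, and \enuref{stab:cvg} shows that $\ctspc_{\mu}$ is stable, $\ctspc_{\mu}$ is strictly stable in the sense of \defref{stability}.

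I do not anticipate a genuine obstacle, since the argument is essentially bookkeeping built on \thmref{gjrt}. The two places needing a little care are: (a) checking that truncating the shocks after $\tau$ and re-indexing time keeps the process within the hypotheses of \thmref{gjrt}, so that both the representation \eqref{gjrt-z} and the exponential stability of $\{\b{\xi}_{t}\}$ remain available; and (b) combining the linear-algebra identity $\ker\alpha_{\perp}^{\trans}=\spn\alpha$ with the injectivity of $\co$, which is what strengthens the conclusion: $\set U(z;\state)$ is not merely \emph{a} set on which $z_{t}(u;\state)\goesto z$, but \emph{precisely} the set of such $u$. I would also double-check the surjectivity underlying \enuref{stab:dom}: as $u$ ranges over $\reals^{p}$, $\alpha_{\perp}^{\trans}[\abv{\srfn}(\state)+u]$ ranges over all of $\reals^{q}$, so by \eqref{ctspc-mu} every $z\in\ctspc_{\mu}$ is attained as $z_{\infty}(u;\state)$ for a whole affine subspace of $u$ — which is why $\set U(z;\state)$ is non-empty and every point of $\ctspc_{\mu}$ is a non-trivial attractor.
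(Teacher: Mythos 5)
Your proposal is correct and follows essentially the same route as the paper: invoke \thmref{gjrt} with the time origin shifted to $\tau$, use exponential stability to conclude $\b{\xi}_{t}\goesto0$ once the shocks cease, pass to the limit through the homeomorphism $\co^{-1}$, and then characterise the domain of attraction exactly via the injectivity of $\co$ together with $\ker\alpha_{\perp}^{\trans}=\spn\alpha$. The only cosmetic difference is that the paper re-derives $\b{\xi}_{t}\goesto0$ explicitly via an extremal norm (Jungers, Prop.~1.4) rather than citing the consequence recorded after \defref{expstable}, which amounts to the same thing.
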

\begin{rem}
\subremark{} In view of the preceding, we are justified in referring
to $\ctspc_{\mu}$ as the \emph{attractor space} for $\{z_{t}\}$.

\subremark{} If the model is in a steady state equilibrium at some
$z\in\ctspc_{\mu}$, immediately prior to the incidence of $u_{\tau}$,
so that $\state_{(i)}=z\in\ctspc_{\mu}$ for all $i\in\{1,\ldots,k-1\}$,
then
\[
\abv{\srfn}(\state)=\fe_{0}(z)-\sum_{i=1}^{k-1}\ga_{i}(z)=\srfn(z)
\]
and in this case the domain of attraction \eqref{domain-of-attraction}
reduces to $\spn\alpha$.

\subremark{} In the linear VAR, $\eq(z)=\beta^{\trans}z$, and thus
the attractor space simplifies to the $q$-dimensional affine subspace
\[
\ctspc_{\mu}=\{z\in\reals^{p}\mid\beta^{\trans}z=-\mu\},
\]
while the domain of attraction for $z_{\infty}\in\ctspc_{\mu}$ is
the $r$-dimensional affine subspace
\[
\set U(z_{\infty};\state)=(\spn\alpha)+[\srp z_{\infty}-\abv{\srfn}(\state)].
\]
\end{rem}

\subsection{Long-run identifying restrictions}

\label{subsec:long-run-id}

As a direct consequence of the preceding, we obtain the following
characterisation of the linear combinations of the shocks $u_{t}$
in \eqref{nlVAR} that do \emph{not} have permanent effects.
\begin{cor}
\label{cor:subspace-noeffect}Suppose \assref{cvar} holds. Then for
every $\state\in\reals^{p}$,
\[
\{u\in\reals^{p}\mid z_{\infty}(u;\state)=z_{\infty}(0;\state)\}=\spn\alpha
\]
i.e.~a shock has no permanent effect on $z_{t}$, if and only if
it lies in $\spn\alpha$.
\end{cor}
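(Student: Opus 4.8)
The plan is to read the claim off the closed form for the limit point given in \thmref{stab}\enuref{stab:cvg}. Fix $\state$. By that result,
\[
z_{\infty}(u;\state)=\co^{-1}\begin{bmatrix}\alpha_{\perp}^{\trans}[\abv{\srfn}(\state)+u]\\ -\mu\end{bmatrix},\qquad z_{\infty}(0;\state)=\co^{-1}\begin{bmatrix}\alpha_{\perp}^{\trans}\abv{\srfn}(\state)\\ -\mu\end{bmatrix}.
\]
Since $\co:\reals^{p}\setmap\reals^{p}$ is a homeomorphism by \thmref{gjrt}, it is in particular injective, so $z_{\infty}(u;\state)=z_{\infty}(0;\state)$ holds if and only if the two bracketed vectors agree. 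Their lower blocks are both equal to $-\mu$, so this reduces to $\alpha_{\perp}^{\trans}[\abv{\srfn}(\state)+u]=\alpha_{\perp}^{\trans}\abv{\srfn}(\state)$, i.e.\ to $\alpha_{\perp}^{\trans}u=0$; note that the resulting condition on $u$ does not depend on $\state$, as the statement requires.

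It then remains to verify $\{u\in\reals^{p}\mid\alpha_{\perp}^{\trans}u=0\}=\spn\alpha$. The inclusion $\spn\alpha\subseteq\ker\alpha_{\perp}^{\trans}$ is immediate from $\alpha_{\perp}^{\trans}\alpha=0$. For the reverse inclusion I would simply count dimensions: $\rank\alpha_{\perp}=q=p-r$, so $\ker\alpha_{\perp}^{\trans}$ has dimension $p-q=r=\rank\alpha=\dim\spn\alpha$, and hence the two subspaces coincide. This yields exactly the displayed identity, and — reading it as a statement about shocks — the conclusion that $u$ fails to have a permanent effect on $z_{t}$ precisely when $u\in\spn\alpha$.

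Since the argument invokes only \thmref{gjrt} and \thmref{stab}, both already established (indeed the computation is essentially the one carried out in the paragraph preceding \thmref{stab}), there is no substantive obstacle here. The only two points needing a word of care are (i) the appeal to injectivity of $\co$, which is what allows equality of the limits to be pulled back to equality of the arguments of $\co^{-1}$, so that the nonlinearity of $\co$ plays no role; and (ii) the elementary rank--nullity count that upgrades $\spn\alpha\subseteq\ker\alpha_{\perp}^{\trans}$ to an equality.
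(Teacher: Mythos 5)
Your argument is correct and is essentially the paper's: the corollary is stated as a direct consequence of \thmref{stab}, whose part \ref{enu:stab:dom} is proved by exactly the computation you perform (injectivity of $\co$ reduces equality of the limits to $\alpha_{\perp}^{\trans}u=0$, and $\ker\alpha_{\perp}^{\trans}=\spn\alpha$ by the rank count). No gap.
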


The significance of this result may be explained as follows. Suppose
that, as in \eqref{nlSVAR} above, we have the nonlinear structural
VAR
\begin{align*}
\fe_{0}(z_{t}) & =c+\sum_{i=1}^{k}\fe_{i}(z_{t-i})+u_{t} & u_{t} & =\Upsilon\err_{t}\tag{\ref{eq:nlSVAR}}
\end{align*}
belonging to class $\cvar_{r}$. As noted in \subsecref{identification},
the parameters of this model are identified up to the unknown orthogonal
matrix $\Upsilon\in\reals^{p\times p}$, about which the data is entirely
uninformative, and thus we seek additional restrictions that would
pin down (at least some of) the elements of $\Upsilon$. To that end,
suppose we partition the structural shocks as
\[
\err_{t}=(\err_{(1),t}^{\trans},\err_{(2),t}^{\trans})^{\trans}
\]
where $\err_{(1),t}$ takes values in $\reals^{m}$, and collects
the $m\in\{1,\ldots,r\}$ structural shocks that are regarded as having
\emph{no} permanent effect on $z_{t}$. Then \corref{subspace-noeffect}
implies that
\begin{equation}
\Upsilon\begin{bmatrix}I_{m}\\
0_{(p-m)\times m}
\end{bmatrix}\subset\spn\alpha\implies\alpha_{\perp}^{\trans}\Upsilon\begin{bmatrix}I_{m}\\
0_{(p-m)\times m}
\end{bmatrix}=0,\label{eq:lr-ident}
\end{equation}
which provides $qm$ `long-run identifying restrictions' on $\Upsilon$.\footnote{Note that \corref{subspace-noeffect}, or equivalently \eqref{lr-ident},
implies that at most $r$ elements of $\err_{t}$ may be such that
the corresponding column of $\Upsilon$ lies in $\spn\alpha$, i.e.\ \emph{up
to} $r$ structural shocks may have purely transitory effects. As
is familiar from the linear structural VAR, there is nothing here
to preclude e.g.\ \emph{all} shocks from having permanent effects;
the number (and identities) of the $m\in\{0,\ldots,r\}$ structural
shocks having purely transitory effects will thus depend on the identifying
conditions that define those shocks.}

As discussed further in \secref{without-the-crsc} below, the availability
of these long-run identifying restrictions is largely a consequence
of the common row space condition, and provides one of the principal
motivations for maintaining that condition in our model. Remarkably,
despite the nonlinearity permitted by \eqref{nlSVAR}, the identifying
restrictions \eqref{lr-ident} take exactly the same form as in a
linear SVAR (see \citealp{KL17book}, Sec.~10.2).

\subsection{Long-run multipliers}

\label{sec:lr-multipliers}

Having obtained a characterisation of the attractor space $\ctspc_{\mu}$
for $\{z_{t}\}$, we may also say something more about the permanent
effect of a shock at time $t=\tau$. The \emph{limiting impulse responses}
or \emph{long-run multipliers} can be computed by differentiating
$z_{\infty}(u;\state)$ with respect to $u$, i.e.~by computing the
Jacobian
\begin{equation}
\partial_{u}\left.z_{\infty}(u;\state)\right|_{u=0}=\partial_{v}\left.\co^{-1}\begin{bmatrix}\alpha_{\perp}^{\trans}\abv{\srfn}(\state)+v\\
-\mu
\end{bmatrix}\right|_{v=0}\alpha_{\perp}^{\trans}\eqdef\nabla(\state)\alpha_{\perp}^{\trans}.\label{eq:multipliers}
\end{equation}
In view of \eqref{nlSVAR}, postmultiplying this by $\Upsilon$ yields
the long-run multipliers with respect to the structural shocks $\err_{t}$,
at time $t=\tau$.

One concern we might have here, in the general nonlinear case, is
the apparent dependence of these long-run multipliers on $\vec{z}_{t-1}=\state\in\reals^{kp}$,
which potentially ranges over a very `large' ($kp$-dimensional)
space. However, it will be noted that $z_{\infty}(u;\state)$ only
depends on $\state$ through $\alpha_{\perp}^{\trans}\abv{\srfn}(\state)$:
and as shown in the proof of the next result, it is always possible
to find a $z_{\state}\in\ctspc_{\mu}$ such that $\ct(z_{\state})=\alpha_{\perp}^{\trans}\abv{\srfn}(\state)$.
Therefore, for the purposes of computing the full set of possible
long-run effects of shocks in the model, it suffices to consider cases
in which the model is in a steady state equilibrium (i.e.\ where
$z_{\tau-1}=\dots=z_{\tau-k}=z$ for some $z\in\ctspc_{\mu}$) immediately
prior to the incidence of the shock.

Because $\co^{-1}$ is not necessarily differentiable everywhere,
the long-run multipliers as defined in \eqref{multipliers} may not
exist for every $z\in\ctspc_{\mu}$. However, in most cases of interest,
the subset $N\subset\ctspc_{\mu}$ at which this occurs will be exceptionally
small, corresponding e.g.\ in the case of piecewise affine models
(see \subsecref{piecewise-affine} below) to points exactly on the
boundary between two regimes. It is also possible that the Jacobian
of $\co^{-1}$ may fail to be invertible at exceptional points: though
it must be invertible almost everywhere, since $\co$ is invertible.
\begin{thm}
\label{thm:multipliers}Suppose \assref{cvar} holds. Let $N\subset\ctspc_{\mu}$
denote the set of all $z\in\ctspc_{\mu}$ at which
\[
v\elmap\co^{-1}\begin{bmatrix}\ct(z)+v\\
-\mu
\end{bmatrix}
\]
is \emph{not} differentiable with respect to $v$, when $v=0$, and
$N_{0}$ denote the union of $N$ with the set of $z\in\ctspc_{\mu}\backslash N$
at which the Jacobian of the preceding is non-invertible. Then 
\begin{enumerate}
\item \label{enu:mult:set}the full set of long-run multiplier matrices
is given by
\begin{align}
\Theta_{\infty}(z)\defeq\partial_{u}\left.\co^{-1}\begin{bmatrix}\ct(z)+\alpha_{\perp}^{\trans}u\\
-\mu
\end{bmatrix}\right|_{u=0} & =\partial_{v}\left.\co^{-1}\begin{bmatrix}\ct(z)+v\\
-\mu
\end{bmatrix}\right|_{v=0}\alpha_{\perp}^{\trans}\label{eq:lrmult}
\end{align}
for $z\in\ctspc_{\mu}\backslash N$; and
\item \label{enu:mult:rank}$\rank\Theta_{\infty}(z)=q$ for every $z\in\ctspc_{\mu}\backslash N_{0}$.
\end{enumerate}
\end{thm}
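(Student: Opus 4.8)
The plan is to read off both parts from \thmref{stab}\enuref{stab:cvg}, which already supplies the explicit limit $z_{\infty}(u;\state)=\co^{-1}\bigl(\begin{smallmatrix}\alpha_{\perp}^{\trans}[\abv{\srfn}(\state)+u]\\-\mu\end{smallmatrix}\bigr)$, together with the fact (\thmref{gjrt}) that $\co$ is a homeomorphism of $\reals^{p}$, the chain rule, and some elementary rank bookkeeping. The one genuinely delicate point is the reduction, in part~\enuref{mult:set}, of the \emph{a priori} dependence of the long-run multipliers on the $kp$-dimensional initial state $\state$ to a dependence on a point $z$ of the $q$-dimensional attractor manifold $\ctspc_{\mu}$; everything else will be routine.

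For part~\enuref{mult:set} I would first observe that $z_{\infty}(u;\state)$ depends on $\state$ only through $\alpha_{\perp}^{\trans}\abv{\srfn}(\state)\in\reals^{q}$, and that every such vector is of the form $\ct(z)$ for some $z\in\ctspc_{\mu}$: since $\co$ is onto $\reals^{p}$, the point $z_{\state}\defeq\co^{-1}\bigl(\begin{smallmatrix}\alpha_{\perp}^{\trans}\abv{\srfn}(\state)\\-\mu\end{smallmatrix}\bigr)$ is well defined, and applying $\co$ gives $\ct(z_{\state})=\alpha_{\perp}^{\trans}\abv{\srfn}(\state)$ and $\eq(z_{\state})=-\mu$, so $z_{\state}\in\ctspc_{\mu}$ by \eqref{ctspc-mu}. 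Hence $z_{\infty}(u;\state)=\co^{-1}\bigl(\begin{smallmatrix}\ct(z_{\state})+\alpha_{\perp}^{\trans}u\\-\mu\end{smallmatrix}\bigr)$, and since $u\mapsto\alpha_{\perp}^{\trans}u$ is linear and surjective -- hence admits a linear right inverse fixing the origin -- this map is differentiable in $u$ at $u=0$ precisely when $v\mapsto\co^{-1}\bigl(\begin{smallmatrix}\ct(z_{\state})+v\\-\mu\end{smallmatrix}\bigr)$ is differentiable at $v=0$, i.e.\ precisely when $z_{\state}\notin N$, in which case the chain rule identifies $\partial_{u}z_{\infty}(u;\state)\big|_{u=0}$ with $\Theta_{\infty}(z_{\state})$ as in \eqref{lrmult}. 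Conversely, every $z\in\ctspc_{\mu}$ arises as such a $z_{\state}$: taking $\state=(z^{\trans},\ldots,z^{\trans})^{\trans}$ gives $\abv{\srfn}(\state)=\fe_{0}(z)-\sum_{i=1}^{k-1}\ga_{i}(z)=\srfn(z)$, whence $\alpha_{\perp}^{\trans}\abv{\srfn}(\state)=\ct(z)$ and $z_{\state}=\co^{-1}\bigl(\begin{smallmatrix}\ct(z)\\-\mu\end{smallmatrix}\bigr)=z$, the final equality because $\eq(z)=-\mu$ on $\ctspc_{\mu}$. This exhibits the set of all long-run multiplier matrices as $\{\Theta_{\infty}(z)\mid z\in\ctspc_{\mu}\backslash N\}$, establishing part~\enuref{mult:set}.

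For part~\enuref{mult:rank} I would write $\Theta_{\infty}(z)=\Lambda(z)\alpha_{\perp}^{\trans}$, where $\Lambda(z)\defeq\partial_{v}\co^{-1}\bigl(\begin{smallmatrix}\ct(z)+v\\-\mu\end{smallmatrix}\bigr)\big|_{v=0}$ is the $p\times q$ Jacobian appearing in \eqref{lrmult}. Since $\rank\alpha_{\perp}=q$, the linear map $\alpha_{\perp}^{\trans}\colon\reals^{p}\to\reals^{q}$ is onto, so $\im\Theta_{\infty}(z)=\Lambda(z)\bigl(\im\alpha_{\perp}^{\trans}\bigr)=\im\Lambda(z)$, giving $\rank\Theta_{\infty}(z)=\rank\Lambda(z)\leq q$ -- the bound holding because $\Lambda(z)$ has only $q$ columns; equivalently $\Theta_{\infty}(z)\alpha=\Lambda(z)\alpha_{\perp}^{\trans}\alpha=0$, so $\spn\alpha\subset\ker\Theta_{\infty}(z)$ and $\rank\Theta_{\infty}(z)\leq p-r=q$, consistently with \corref{subspace-noeffect}. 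For $z\in\ctspc_{\mu}\backslash N_{0}$ the map in \eqref{lrmult} is, by the definition of $N_{0}$, differentiable at $v=0$ with a Jacobian $\Lambda(z)$ of full column rank $q$, so $\rank\Theta_{\infty}(z)=\rank\Lambda(z)=q$.

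I expect the only real obstacle to be this two-sided reduction underlying part~\enuref{mult:set}: that the limit factors through $\alpha_{\perp}^{\trans}\abv{\srfn}(\state)$ is immediate from \thmref{stab}, but the converse -- that this parametrisation exhausts $\ctspc_{\mu}$ -- relies on the location normalisation $\fe_{i}(0)=0$ and on the coincidence of $\abv{\srfn}$ with $\srfn$ at a steady-state configuration. Throughout, the homeomorphism property of $\co$ from \thmref{gjrt} keeps every preimage under $\co^{-1}$ well defined, and -- in the regular cases of interest, e.g.\ the piecewise affine models -- confines $N_{0}$ to an exceptional set, since wherever both $\co$ and $\co^{-1}$ are differentiable (which, when these maps are Lipschitz, is a.e.) their Jacobians are mutually inverse and hence invertible.
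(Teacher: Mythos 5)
Your proposal is correct and follows essentially the same route as the paper's own proof: part \enuref{mult:set} via the observation that $z_{\infty}(u;\state)$ depends on $\state$ only through $\alpha_{\perp}^{\trans}\abv{\srfn}(\state)=\ct(z_{\state})$ for $z_{\state}\in\ctspc_{\mu}$ plus the chain rule, and part \enuref{mult:rank} from the factorisation $\Theta_{\infty}(z)=\Lambda(z)\alpha_{\perp}^{\trans}$ with $\rank\alpha_{\perp}=q$. Your explicit verification that every $z\in\ctspc_{\mu}$ is realised as some $z_{\state}$ (via the steady-state configuration $\state=(z^{\trans},\ldots,z^{\trans})^{\trans}$) is a small but worthwhile addition that the paper leaves implicit.
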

\begin{rem}
\subremark{} In the linear VAR, it follows by \propref{lin-rep-var}
and \lemref{lin-co-map} that
\[
\partial_{u}\left.z_{\infty}(u;\state)\right|_{u=0}=\begin{bmatrix}\alpha_{\perp}^{\trans}\srp\\
\beta^{\trans}
\end{bmatrix}^{-1}\partial_{u}\left.\begin{bmatrix}\alpha_{\perp}^{\trans}[\abv{\srfn}(\state)+u]\\
-\mu
\end{bmatrix}\right|_{u=0}=\begin{bmatrix}\alpha_{\perp}^{\trans}\srp\\
\beta^{\trans}
\end{bmatrix}^{-1}\begin{bmatrix}\alpha_{\perp}^{\trans}\\
0
\end{bmatrix}=\beta_{\perp}(\alpha_{\perp}^{\trans}\srp\beta_{\perp})^{-1}\alpha_{\perp}^{\trans}
\]
consistent with \eqref{gjrt-linear} above. In this case, the limiting
IRF is invariant to the state of the process prior to the incidence
of the final shock\@.

\subremark{} In nonlinear VARs such as \eqref{nlVAR}, impulse responses
with respect to a shock incident at time $\tau$ are well known to
be dependent on both the current state $\vec{z}_{\tau-1}$ of the
model, and on the shocks that occur subsequent to time $\tau$. In
computing the long-run multipliers above, we have allowed for dependence
on the current state, but deliberately forced $u_{t}=0$ for $t\geq\tau+1$.
This not only simplifies the analysis but, we would argue, provides
a reasonable basis on which to determine whether the structural VAR
permits certain (small) shocks to have permanent effects on the model
variables, particularly in a nonstationary model of this kind, where
subsequent shocks may push $z_{t}$ in any region of the state space
(though it will, in some sense, still remain `attracted' to $\ctspc_{\mu}$).

Nonetheless, let us suppose that, following \citet{KPP96JoE}, one
is also interested in `generalised' impulse responses that are computed
conditional on $\vec{z}_{\tau-1}=\state$, but which average over
all (potential) histories of the shocks subsequent to $t=\tau$.
Then while we could not hope to obtain as clean a characterisation
of the limiting IRF as is given by the preceding theorem, the fact
that the major implication of \eqref{multipliers}, that
\[
\ker\partial_{u}\left.z_{\infty}(u;\state)\right|_{u=0}=\alpha
\]
is invariant to the state of the process suggests that this property
would continue to hold for the generalised impulse responses. However,
because of the possible long-range dependence of $\b{\xi}_{t}$ on
past shocks, via $\b{\beta}_{t}$ (and hence $z_{t}$), these calculations
are far from straightforward, and are therefore deferred to future
work.
\end{rem}

\section{The common row space condition}

\label{sec:without-the-crsc}

As introduced in \subsecref{framework} above, a key characteristic
of models in class $\cvar_{r}$ is the common row space condition
(CRSC), that $\im\pi=\{\pi(z)\mid z\in\reals^{p}\}$ is an $r$-dimensional
linear subspace of $\reals^{p}$; we noted there that no such restriction
is imposed on $\ker\pi$. By contrast, the existing literature on
nonlinear VECM models effectively reverses this condition by requiring
that $\ker\pi$ be a $q$-dimensional subspace of $\reals^{p}$, without
necessarily restricting $\im\pi$ (see in particular \citealp{KR10JoE}).
Loosely speaking, if we suppose that $\pi$ may be decomposed, for
each $z\in\reals^{p}$, as
\begin{equation}
\pi(z)=\alpha(z)\beta(z)^{\trans}\label{eq:factor}
\end{equation}
where $\alpha,\beta:\reals^{p}\setmap\reals^{r}$, then the CRSC permits
taking $\alpha(z)=\alpha$, if $\beta(z)$ is appropriately normalised,
but leaves $\beta(z)$ otherwise unrestricted;\footnote{If $\beta(z)$ is instead normalised in some other way, such as e.g.\ $\beta(z)^{\trans}=[I_{r},-A(z)]$,
then $\alpha(z)$ may indeed vary with $z$. In this sense, it is
not quite correct to regard the CRSC as requiring the loadings on
the equilibrium errors to be \emph{constant}, but rather that those
loadings should lie in a fixed $r$-dimensional linear subspace; there
may be nonlinear adjustment towards equilibrium, provided that it
occurs along this subspace.} whereas in the nonlinear VECM literature, $\beta(z)=\beta$ is constant,
with the result that there is a globally linear cointegrating space.
The representation theory developed in the nonlinear VECM literature
is thus strictly complementary to the present work. (Note that even
the case where $\beta(z)=\beta$ in our framework is \emph{not} encompassed
by that literature, because we allow the dynamics of the implied VECM
\eqref{VECM-first} to depend on the \emph{level} of $z_{t}$, whereas
that literature requires that these be governed entirely by the stationary
equilibrium errors $\beta^{\trans}z_{t}$ and differences $\Delta z_{t}$.)

Recall that in the linear structural VAR model, common stochastic
trends arise because some subset of the structural shocks (say, $q$
in number) have permanent effects on $z_{t}$, while the remaining
$r=p-q$ shocks have only transitory effects. This provides not only
an underpinning economic explanation for the patterns of long-run
co-movement present in the data, but also a fruitful source of long-run
identifying restrictions (following \citealp{BlanchardQuah89}, and
\citealp{KPSW91AER}). The results developed in Sections~\ref{sec:common-trends}
and \ref{sec:consequences} above show that these properties are preserved
when we generalise the cointegrated linear SVAR to a nonlinear SVAR
\eqref{nlSVAR} belonging to class $\cvar_{r}$. However, as the example
developed in this section illustrates, these properties are highly
sensitive to departures from the CRSC, even when the assumption of
a globally linear cointegrating space is maintained (i.e.\ when $\beta(z)=\beta$
for all $z\in\reals^{p}$ in \eqref{factor} above). In models where
the CRSC fails, it will generally be the case that the \emph{direction}
in which an impulse $u_{t}=\delta$ has no permanent effect on $z_{t}$
will vary with the \emph{magnitude} $\delta$. Thus, if we want to
work with nonlinear SVAR models in which structural shocks may be
distinguished on the basis of their long-run effects -- and in which,
as a corollary, common stochastic trends arise because only a subset
of these shocks have permanent effects -- then it would appear that
the CRSC cannot be easily dispensed with.

To illustrate the consequences of a failure of the CRSC, we consider
a nonlinear VAR(1), specified in VECM form as
\[
\Delta z_{t}=a(\beta^{\trans}z_{t-1})\beta^{\trans}z_{t-1}+u_{t}
\]
where for $\Lambda:\reals^{r}\setmap[0,1]$ a smooth function satisfying
$\Lambda(0)=0$ and $\lim_{\smlnorm{\xi}\goesto\infty}\Lambda(\xi)=1$,
and $\alpha,\tilde{\alpha}\in\reals^{p\times r}$, each having rank
$r$, 
\[
a(\xi)\defeq\Lambda(\xi)\tilde{\alpha}+[1-\Lambda(\xi)]\alpha,
\]
so that the model is one in which the kernel of $\pi(z)=a(\beta^{\trans}z)\beta^{\trans}$
is a fixed $q$-dimensional subspace (given by $\spn\beta$), but
the image of $\pi$ is not a linear subspace. (If we further suppose
that the eigenvalues of $I_{r}+\beta^{\trans}\alpha$ lie strictly
inside the unit circle, then the model satisfies conditions (A.2)
and (A.3) of \citet{KR10JoE}, who provide a GJRT for these models.)
This model may be regarded as smoothly combining two regimes: an `inner'
or `near equilibrium' regime in which
\begin{align*}
\Delta z_{t} & =\tilde{\alpha}\beta^{\trans}z_{t-1}+u_{t} & u_{t} & =\Upsilon\err_{t}
\end{align*}
and an `outer' or `far from equilibrium' regime where $\tilde{\alpha}$
is replaced by $\alpha$. If the eigenvalues of $I_{r}+\beta^{\trans}\tilde{\alpha}$
associated to the inner regime are also strictly inside the unit circle,
then $\ctspc_{0}=\beta_{\perp}$ is a strictly stable attractor in
the sense of \defref{stability} above.

Now suppose we were to repeat the analysis of \subsecref{attractors}
for this model: i.e.\ fixing the state $z_{\tau-1}=\state\in\reals^{p}$
of the model in time $\tau$, having the model impacted by a final
shock $u_{\tau}=u$, and then computing $\lim_{t\goesto\infty}z_{t}(u;\state)=z_{\infty}(u;\state)$
in the absence of any further shocks (so $u_{t}=0$ for all $t\geq\tau+1$).
We may then ask: what values of $u$ would leave $z_{\infty}(u;\state)$
unchanged? That is, we would like to determine the set
\[
\{u\in\reals^{p}\mid z_{\infty}(u;\state)=z_{\infty}(0;\state)\}.
\]
Let us further suppose that $\beta^{\trans}z_{\tau-1}=\beta^{\trans}\state=0$,
so that $z_{\tau-1}\in\ctspc_{0}$, and the model is in equilibrium
prior to the incidence of $u_{\tau}$. Locally to $\ctspc_{0}$, we
would expect shocks in the direction of $\spn\tilde{\alpha}$ to have
no permanent effects; whereas further from $\ctspc_{0}$, shocks in
directions lying progressively closer to $\spn\alpha$ should have
no permanent effects.

\begin{figure}
\begin{centering}
\includegraphics[viewport=150bp 180bp 692bp 415bp,clip,scale=0.8]{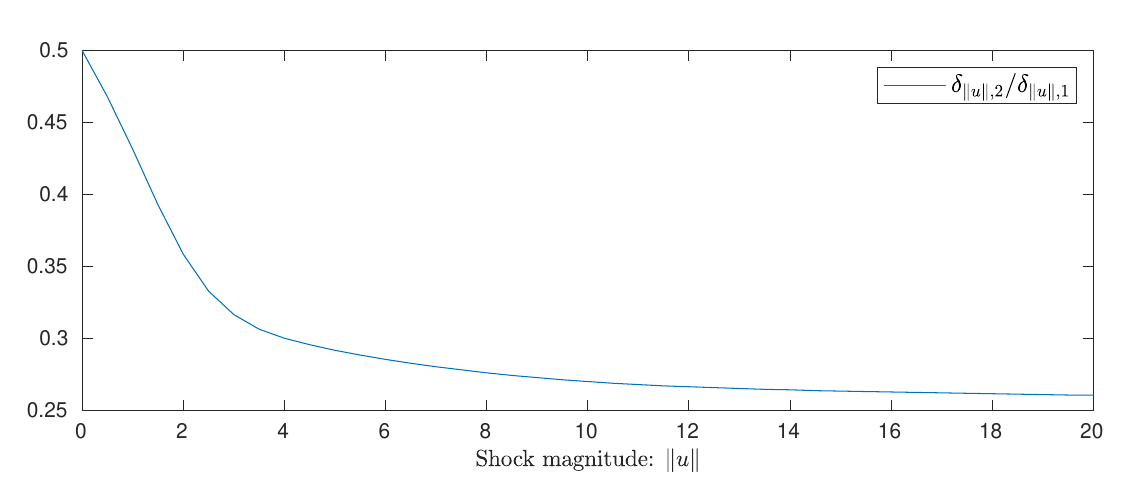}
\par\end{centering}
\caption{Directions of shocks having only transitory effects}

\label{fig:transitory}
\end{figure}

\figref{transitory} illustrates this is indeed the case for a bivariate
($p=2$) model with
\begin{align*}
\tilde{\alpha} & =-\begin{bmatrix}1\\
0.5
\end{bmatrix} & \alpha & =-\begin{bmatrix}1\\
0.25
\end{bmatrix} & \beta & =\begin{bmatrix}1\\
-1
\end{bmatrix},
\end{align*}
and $\Lambda(\xi)=2\smlabs{\Phi(\xi)-0.5}$, where $\Phi$ denotes
the Gaussian cdf. (Without loss of generality, we take $\state=0$.)
For values of $\smlnorm u$ between $0$ and $10$, the figure reports
the (unique) direction $\delta_{\smlnorm u}$ such that $u_{\tau}=\smlnorm u\delta_{\smlnorm u}$
has no permanent effect on $z_{t}$; for the sake of comparability
with $\tilde{\alpha}$ and $\alpha$, both of which have unit first
element, this is reported in terms of the ratio $\delta_{\smlnorm u,2}/\delta_{\smlnorm u,1}$.
We see here that $\delta_{\smlnorm u,2}/\delta_{\smlnorm u,1}$ takes
the value $0.5=\tilde{\alpha}_{2}$ when $\smlnorm u=0$, and tends
towards $0.25=\alpha_{1}$ as $\smlnorm u$ grows (equalling $0.26$
when $\smlnorm u=20$). As a consequence
\[
\Intsect_{u\in\reals^{p}}\spn\delta_{\smlnorm u}=\emptyset,
\]
and as such there is \emph{no} direction along which the shocks $u_{t}$
will only have transient effects. There is thus no possibility of
discriminating between the underlying structural shocks $\err_{t}$
according to whether not they have permanent effects: each must have
some such effect, to an extent that varies with the magnitude of the
shock.

\section{Application to piecewise affine VARs}

\label{subsec:piecewise-affine}

Finally, we introduce a class of regime-switching models in which
the conditions required for our results may be verified relatively
straightforwardly. Suppose that for each $i\in\{0,\ldots,k\}$,
\begin{equation}
\fe_{i}(z)=\sum_{\ell=1}^{L}\indic\{z\in\set Z^{(\ell)}\}(\bar{\phi}_{i}^{(\ell)}+\Phi_{i}^{(\ell)}z),\label{eq:pwa}
\end{equation}
where $\{\set Z^{(\ell)}\}_{\ell=1}^{L}$ is a collection of convex
sets that partition $\reals^{p}$, $\{\bar{\phi}_{i}^{(\ell)}\}_{\ell=1}^{L}\subset\reals^{p}$
and $\{\Phi_{i}^{(\ell)}\}_{\ell=1}^{L}\subset\reals^{p\times p}$.\footnote{If instead $\fe_{i}(z)=\sum_{\ell=1}^{L}\indic\{z\in\set Z_{i}^{(\ell)}\}(\bar{\phi}_{i}^{(\ell)}+\Phi_{i}^{(\ell)}z)$,
with a partition $\{\set Z_{i}^{(\ell)}\}_{\ell=1}^{L_{i}}$ that
depends on $i$, the model can nonetheless be written in terms of
$\{\fe_{i}\}_{i=0}^{k}$ of the form \eqref{pwa}, i.e.\ with a lag-independent
partition $\{\set Z^{(\ell)}\}_{\ell=1}^{L}$, by forming each $\set Z^{(\ell)}$
from a refinement of the sets in $\{\set Z_{i}^{(\ell)}\}_{\ell=1}^{L_{i}}$,
as $i$ ranges over $\{0,\ldots,k\}$.} When these parameters are such that $\fe_{i}$ is continuous for
each $i\in\{0,\ldots,k\}$, we shall say that each $\fe_{i}$ is a
\emph{piecewise affine function}, and refer to the model as a \emph{piecewise
affine VAR}. (We do \emph{not} consider cases in which $\fe_{i}$
may be discontinuous; so continuity is always implied when a function
or VAR is described as being piecewise affine.)

Piecewise affine VARs, of which the CKSVAR of \citet{SM21} is a recent
instance, provide a flexible but tractable means of introducing nonlinearity
into a vector autoregressive model. Defining 
\[
\indic^{(\ell)}(z)\defeq\indic\{z\in\set Z^{(\ell)}\},
\]
we see that
\[
\fe_{0}(z_{t})-\sum_{i=1}^{k}\fe_{i}(z_{t-i})=\sum_{\ell=1}^{L}\indic^{(\ell)}(z_{t})(\bar{\phi}_{i}^{(\ell)}+\Phi_{i}^{(\ell)}z_{t})-\sum_{i=1}^{k}\sum_{\ell=1}^{L}\indic^{(\ell)}(z_{t-i})(\bar{\phi}_{i}^{(\ell)}+\Phi_{i}^{(\ell)}z_{t-i}).
\]
This is a kind of endogenous regime-switching VAR, in which the sets
$\{\set Z^{(\ell)}\}_{\ell=1}^{L}$ demarcate $L$ distinct `regimes'.
However, unlike in typical models of this kind, there is no requirement
that the same `regime' apply e.g.\ to $z_{t}$ and $z_{t-1}$ --
each may lie in a different member of $\{\set Z^{(\ell)}\}_{\ell=1}^{L}$
-- and so it might be more correct to say that there are a total
of $L^{k}$ autoregressive regimes, once all the possible patterns
of membership of $\{z_{t-i}\}_{i=0}^{k}$ in the sets $\{\set Z^{(\ell)}\}_{\ell=1}^{L}$
are allowed for. Nonetheless, it turns out that a fruitful approach
to analysing the long-run behaviour of these systems focuses attention
on those $L$ `linear submodels' that arise when each of $\{z_{t-i}\}_{i=0}^{k}$
lie in the \emph{same} $\set Z^{(\ell)}$, to each $\ell\in\{1,\ldots,L\}$
of which we may associate the autoregressive polynomial
\begin{equation}
\Phi^{(\ell)}(\lambda)\defeq\Phi_{0}^{(\ell)}-\sum_{i=1}^{k}\Phi_{i}^{(\ell)}\lambda^{i}.\label{eq:Phi-ell}
\end{equation}

\subsection{Simplification of the JSR condition}

The conditions for membership of $\cvar_{r}$, for piecewise affine
VARs, will parallel \ref{enu:lin:homeo}\ass{--3}. Before stating
these, we first give a result that reduces \ref{enu:cvar:jsr} to
a condition on the JSR of a collection of only $L$ matrices (rather
than a collection of uncountably many matrices). To state it, observe
that 
\begin{align}
\pi(z) & =-\sum_{\ell=1}^{L}\indic^{(\ell)}(z)\left[\left(\bar{\phi}_{0}^{(\ell)}-\sum_{i=1}^{k}\bar{\phi}_{i}^{(\ell)}\right)+\left(\Phi_{0}^{(\ell)}-\sum_{i=1}^{k}\Phi_{i}^{(\ell)}\right)z\right]\nonumber \\
 & \eqdef\sum_{\ell=1}^{L}\indic^{(\ell)}(z)(\bar{\pi}^{(\ell)}+\Pi^{(\ell)}z).\label{eq:pi-piecewise}
\end{align}
In this context, we note that \ref{enu:cvar:rank} effectively requires
that
\begin{equation}
\Pi^{(\ell)}=-\Phi^{(\ell)}(1)=\alpha\beta^{(\ell)\trans}\label{eq:Pi-ell}
\end{equation}
where $\alpha,\beta^{(\ell)}\in\reals^{p\times r}$ with $\rank\alpha=\rank\beta^{(\ell)}=r$.
Further, let $\Gamma_{j}^{(\ell)}\defeq-\sum_{j=i+1}^{k}\Phi_{j}^{(\ell)}$,
$\b{\Gamma}^{(\ell)}\defeq[\Gamma_{i}^{(\ell)}]_{i=1}^{k-1}$, and
\begin{align}
\b{\beta}^{(\ell)\trans} & \defeq\begin{bmatrix}\beta^{(\ell)\trans}(\Phi_{0}^{(\ell)})^{-1} & 0\\
\b{\Gamma}^{(\ell)}(\Phi_{0}^{(\ell)})^{-1} & D
\end{bmatrix} & \b{\alpha} & \defeq\begin{bmatrix}\alpha & E^{\trans}\\
0 & I_{p(k-1)}
\end{bmatrix}.\label{eq:beta_l}
\end{align}

\begin{lem}
\label{lem:affine-conditions}Suppose that $(c,\{f_{i}\}_{i=0}^{k})$
is a piecewise affine model, for which there exists an $\alpha\in\reals^{p\times r}$
with $\rank\alpha=r$ such that
\begin{enumerate}
\item $c=\alpha\cn$, $\bar{\pi}^{(\ell)}=\alpha\bar{\cn}^{(\ell)}$ and
$\Pi^{(\ell)}=\alpha\beta^{(\ell)\trans}$, for $\cn,\bar{\cn}^{(\ell)},\beta^{(\ell)}\in\reals^{p\times r}$,
for all $\ell\in\{1,\ldots,L\}$; and
\item $\fe_{0}:\reals^{p}\setmap\reals^{p}$ is a homeomorphism.
\end{enumerate}
Then $(c,\{f_{i}\}_{i=0}^{k})$ satisfies \eqref{gradient} with ${\cal B}$
equal to $\ch\{\b{\beta}^{(\ell)}\}_{\ell=1}^{L}$, and so
\[
\rho_{\jsr}(\{I_{p(k-1)+r}+\b{\beta}^{\trans}\b{\alpha}\mid\b{\beta}\in{\cal B}\})=\rho_{\jsr}(\{I_{p(k-1)+r}+\b{\beta}^{(\ell)\trans}\b{\alpha}\}_{\ell=1}^{L}).
\]
\end{lem}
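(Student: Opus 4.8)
The plan is to establish the two assertions of \lemref{affine-conditions} in turn: first that the first-difference identity \eqref{gradient} holds with $\mathcal{B}=\ch\{\b{\beta}^{(\ell)}\}_{\ell=1}^{L}$, and then that replacing this $\mathcal{B}$ by its finite set of generators leaves the joint spectral radius in \eqref{jsr} unchanged. For the first assertion, observe that under the factorisation hypothesis $\pi(z)=\alpha\eq(z)$ with $\eq(z)=\sum_{\ell=1}^{L}\indic^{(\ell)}(z)(\bar{\cn}^{(\ell)}+\beta^{(\ell)\trans}z)$, combining \eqref{pi-piecewise} with $\bar{\pi}^{(\ell)}=\alpha\bar{\cn}^{(\ell)}$ and $\Pi^{(\ell)}=\alpha\beta^{(\ell)\trans}$, while $\ga_{j}(z)=-\sum_{i=j+1}^{k}\fe_{i}(z)$ is piecewise affine over $\{\set Z^{(\ell)}\}_{\ell=1}^{L}$ with linear part $\Gamma_{j}^{(\ell)}$ on $\set Z^{(\ell)}$ (the matrices assembled into $\b{\Gamma}^{(\ell)}$ in \eqref{beta_l}). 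Since each $\fe_{i}$ is continuous, so are $\pi$, $\eq=(\alpha^{\trans}\alpha)^{-1}\alpha^{\trans}\pi$ and each $\ga_{j}$, so that $\b{\eq}=(\eq^{\trans},\b{\ga}^{\trans})^{\trans}$ is a continuous map that is affine on each convex cell $\set Z^{(\ell)}$, with linear part $\left[\begin{smallmatrix}\beta^{(\ell)\trans}\\\b{\Gamma}^{(\ell)}\end{smallmatrix}\right]$ there. As the term $\b D_{0}\b z$ in \eqref{gradient} is linear in $\b z$ and identical across cells, it contributes only the fixed second block column $\left[\begin{smallmatrix}0\\D\end{smallmatrix}\right]$ of $\b{\beta}^{(\ell)\trans}$; it therefore suffices to analyse $\b{\eq}\compose\fe_{0}^{-1}$ as a function of its first, $p$-dimensional argument.

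Because $\fe_{0}$ is a homeomorphism and coincides on $\set Z^{(\ell)}$ with the affine map $z\elmap\bar{\phi}_{0}^{(\ell)}+\Phi_{0}^{(\ell)}z$, whose linear part $\Phi_{0}^{(\ell)}$ is invertible (as presupposed by the definition \eqref{beta_l} of $\b{\beta}^{(\ell)}$), the images $W^{(\ell)}\defeq\fe_{0}(\set Z^{(\ell)})$ again partition $\reals^{p}$ into convex sets, with $\fe_{0}^{-1}(w)=(\Phi_{0}^{(\ell)})^{-1}(w-\bar{\phi}_{0}^{(\ell)})$ for $w\in W^{(\ell)}$. Consequently $\b{\eq}\compose\fe_{0}^{-1}$ is continuous on $\reals^{p}$ and affine on each $W^{(\ell)}$, with linear part there equal to the first block column $C^{(\ell)}\defeq\left[\begin{smallmatrix}\beta^{(\ell)\trans}(\Phi_{0}^{(\ell)})^{-1}\\\b{\Gamma}^{(\ell)}(\Phi_{0}^{(\ell)})^{-1}\end{smallmatrix}\right]$ of $\b{\beta}^{(\ell)\trans}$.

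The main step is then a mean-value/telescoping argument. Fix $\b z=(z^{\trans},\b{\zeta}^{\trans})^{\trans}$ and $\b z'=(z'^{\trans},\b{\zeta}'^{\trans})^{\trans}$. The segment $[z',z]$ meets each convex set $W^{(\ell)}$ in a subinterval, and these (at most $L$) subintervals cover $[z',z]$; ordering their finitely many endpoints yields points $z'=w_{0},w_{1},\dots,w_{N}=z$ along the segment with each $[w_{j-1},w_{j}]$ contained in the closure of a single $W^{(\ell_{j})}$, on which $\b{\eq}\compose\fe_{0}^{-1}$ agrees, by continuity, with the affine map of linear part $C^{(\ell_{j})}$. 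Writing $w_{j}-w_{j-1}=t_{j}(z-z')$ with $t_{j}\ge0$ and $\sum_{j}t_{j}=1$, telescoping gives $(\b{\eq}\compose\fe_{0}^{-1})(z)-(\b{\eq}\compose\fe_{0}^{-1})(z')=\bigl(\sum_{\ell=1}^{L}\lambda_{\ell}C^{(\ell)}\bigr)(z-z')$ with $\lambda_{\ell}\defeq\sum_{j\,:\,\ell_{j}=\ell}t_{j}\ge0$ and $\sum_{\ell}\lambda_{\ell}=1$; crucially, the same weights $\lambda_{\ell}$ govern the $\eq$- and $\b{\ga}$-blocks, since both functions change affine piece exactly at the crossings of the common partition $\{\set Z^{(\ell)}\}_{\ell=1}^{L}$. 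Adding the common contribution $\b D_{0}(\b z-\b z')$ and using $\sum_{\ell}\lambda_{\ell}=1$ (so that the $D$-block of each $\b{\beta}^{(\ell)\trans}$ passes through the convex combination unchanged), one reads off \eqref{gradient} with $\b{\beta}=\sum_{\ell=1}^{L}\lambda_{\ell}\b{\beta}^{(\ell)}\in\ch\{\b{\beta}^{(\ell)}\}_{\ell=1}^{L}=\mathcal{B}$, which is compact and hence closed. I expect the only genuinely delicate point to be this bookkeeping -- keeping the weights $\lambda_{\ell}$ common across the two blocks and matching the block layout of $\b{\beta}^{(\ell)\trans}$ in \eqref{beta_l}; there is no real analytic obstacle, precisely because the cells remain convex after being pushed through $\fe_{0}$, which is what renders the segment/telescoping argument elementary.

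For the second assertion, the map $\b{\beta}\elmap I_{p(k-1)+r}+\b{\beta}^{\trans}\b{\alpha}$ is affine in $\b{\beta}$, hence carries $\ch\{\b{\beta}^{(\ell)}\}_{\ell=1}^{L}$ onto $\ch\{I_{p(k-1)+r}+\b{\beta}^{(\ell)\trans}\b{\alpha}\}_{\ell=1}^{L}$; and the joint spectral radius of a bounded set of matrices is unchanged under passage to its convex hull (e.g.\ \citealp{Jungers09}, Prop.~1.3). Therefore $\rho_{\jsr}(\{I_{p(k-1)+r}+\b{\beta}^{\trans}\b{\alpha}\mid\b{\beta}\in\mathcal{B}\})=\rho_{\jsr}(\{I_{p(k-1)+r}+\b{\beta}^{(\ell)\trans}\b{\alpha}\}_{\ell=1}^{L})$, the asserted reduction to the joint spectral radius of a collection of only $L$ matrices.
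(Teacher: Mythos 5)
Your proof is correct and follows essentially the same route as the paper: the paper isolates your segment/telescoping argument as a separate lemma (\lemref{pwa-conhull}\ref{enu:cohull}, applied to the piecewise affine map $\b y\elmap\b{\eq}\compose\fe_{0}^{-1}(y)+\b D_{0}\b y$ on the convex partition $\{\fe_{0}(\set Z^{(\ell)})\times\reals^{p(k-1)}\}_{\ell=1}^{L}$), and likewise concludes via invariance of the JSR under convex hulls (\citealp{Jungers09}, Prop.~1.8). The only differences are organisational: you run the telescoping inline on the $p$-dimensional segment and handle the linear $\b D_{0}$ term separately, which is equivalent.
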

Since the $p(k-1)+r$ eigenvalues of $I_{p(k-1)+r}+\b{\beta}^{(\ell)\trans}\b{\alpha}$
correspond to the inverses of the non-unit roots of $\Phi^{(\ell)}(\lambda)$,
a necessary, though not sufficient condition for
\[
\rho_{\jsr}(\{I_{p(k-1)+r}+\b{\beta}^{(\ell)\trans}\b{\alpha}\}_{\ell=1}^{L})<1
\]
is that the eigenvalues of $I_{p(k-1)+r}+\b{\beta}^{(\ell)\trans}\b{\alpha}$
should lie strictly inside the unit circle, for all $\ell\in\{1,\ldots,L\}$.
Thus, if $\rank\Phi^{(\ell)}(1)=q=p-r$, as per \eqref{Pi-ell}, then
$\Phi^{(\ell)}(\lambda)$ will have $q$ roots at unity, and all others
strictly outside the unit circle (this follows e.g.\ from arguments
given in the proof of \propref{lin-rep-var}). In other words, for
a piecewise affine model to belong to class $\cvar_{r}(\alpha,\abv b,\abv{\rho})$
(for some $\abv{\rho}<1$), it is essentially necessary that each
of its $L$ linear submodels satisfy the usual conditions for a linear
VAR to give rise to $q$ common trends and $r$ cointegrating relations
(i.e.\ conditions \ref{enu:lin:rank} and \ref{enu:lin:roots} above).

\subsection{Membership of $\protect\cvar_{r}$}

It remains to consider \ref{enu:cvar:homeo}, i.e.\ the requirement
that $\fe_{0}$ be a homeomorphism. For the purposes of verifying
this condition, two important special cases of the piecewise affine
VAR are:
\begin{itemize}
\item the \emph{piecewise linear VAR} (PLVAR), in which there exists a basis
$\{a_{i}\}_{i=1}^{p}$ for $\reals^{p}$ such that each $\set Z^{(\ell)}$
can be written as a union of cones of the form
\[
\set C_{{\cal I}}\defeq\{z\in\reals^{p}\mid a_{i}^{\trans}z\geq0\sep\forall i\in{\cal I}\text{ and }a_{i}^{\trans}z<0\sep\forall i\notin{\cal I}\}
\]
where ${\cal I}$ ranges over the subsets of $\{1,\ldots,p\}$, and
$\bar{\phi}_{i}^{(\ell)}=0$ for all $i$ and $\ell$; and
\item the \emph{threshold affine VAR} (TAVAR), in which there exists an
$a\in\reals^{p}\backslash\{0\}$ and thresholds $\{\tau_{\ell}\}_{\ell=0}^{L}$
with $\tau_{\ell}<\tau_{\ell+1}$, $\tau_{0}=-\infty$ and $\tau_{L}=+\infty$,
such that
\[
\set Z^{(\ell)}=\{z\in\reals^{p}\mid a^{\trans}z\in(\tau_{\ell-1},\tau_{\ell}]\},
\]
i.e.\ the sets $\{\set Z^{(\ell)}\}$ take the forms of `bands'
in $\reals^{p}$. (In typical examples, $a=e_{p,i}$, i.e.\ it picks
out one `threshold variable' from the elements of $z_{t}$.)
\end{itemize}
In these models, the results of \citet{GLM80Ecta} provide necessary
and sufficient conditions for $\fe_{0}$ to be invertible, which can
be expressed in terms of the determinants of $\{\Phi_{0}^{(\ell)}\}_{\ell=1}^{L}$.
We thus have the following
\begin{prop}
\label{prop:affine}Suppose is $(c,\{\fe_{i}\}_{i=0}^{k})$ is either
a piecewise linear or threshold affine VAR, such that:
\begin{enumerate}[label=\ass{PWA.\arabic*}, leftmargin=1.75cm]
\item \label{enu:pwa:homeo}$\sgn\det\Phi_{0}^{(\ell)}=\sgn\det\Phi_{0}^{(1)}\neq0$
for all $\ell\in\{1,\ldots,L\}$;
\item \label{enu:pwa:rank}$c=\alpha\cn$, $\bar{\pi}^{(\ell)}=\alpha\bar{\cn}^{(\ell)}$
, and $\Pi^{(\ell)}=\alpha\beta^{(\ell)\trans}$, where $\alpha,\beta^{(\ell)}\in\reals^{p\times r}$
have rank $r$, for all $\ell\in\{1,\ldots,L\}$; and
\item \label{enu:pwa:jsr}$\rho_{\jsr}(\{I_{p(k-1)+r}+\b{\beta}^{(\ell)\trans}\b{\alpha}\}_{\ell=1}^{L})\leq\abv{\rho}$.
\end{enumerate}
Then $(c,\{f_{i}\}_{i=0}^{k})$ belongs to class $\cvar_{r}(\alpha,\abv b,\abv{\rho})$,
for $\abv b\geq\max_{1\leq\ell\leq L}\smlnorm{\b{\beta}^{(\ell)}}$,
with
\begin{equation}
\co(z)=\begin{bmatrix}\ct(z)\\
\eq(z)
\end{bmatrix}=\sum_{\ell=1}^{L}\indic^{(\ell)}(z)\left\{ \begin{bmatrix}\alpha_{\perp}^{\trans}\bar{\ct}^{(\ell)}\\
\bar{\cn}^{(\ell)}
\end{bmatrix}+\begin{bmatrix}\alpha_{\perp}^{\trans}\srp^{(\ell)}\\
\beta^{(\ell)\trans}
\end{bmatrix}z\right\} \eqdef\sum_{\ell=1}^{L}\indic^{(\ell)}(z)\{\bar{\co}^{(\ell)}+\CO^{(\ell)}z\}\label{eq:co-pwa}
\end{equation}
where 
\begin{align*}
\bar{\ct}^{(\ell)} & \defeq\bar{\phi}_{0}^{(\ell)}-\sum_{i=1}^{k-1}\bar{\gamma}_{i}^{(\ell)} & \srp^{(\ell)} & \defeq\Phi_{0}^{(\ell)}-\sum_{i=1}^{k-1}\Gamma_{i}^{(\ell)}
\end{align*}
for $\bar{\gamma}_{i}^{(\ell)}\defeq-\sum_{j=i+1}^{k}\bar{\phi}_{j}^{(\ell)}$.
Moreover, the same conclusion holds if $(c,\{\fe_{i}\}_{i=0}^{k})$
is a general piecewise affine model satisfying \ref{enu:pwa:rank}
and \ref{enu:pwa:jsr} above, if $\fe_{0}$ is a homeomorphism.
\end{prop}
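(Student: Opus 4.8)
The plan is to verify, one at a time, the three defining conditions \ref{enu:cvar:homeo}--\ref{enu:cvar:jsr} of membership in $\cvar_r(\alpha,\abv b,\abv{\rho})$, and then to read off the stated form of $\co$ from the piecewise representations of $\fe_0$ and the $\ga_i$; throughout, one uses that $\{\set Z^{(\ell)}\}_{\ell=1}^L$ partitions $\reals^p$, so $\sum_{\ell=1}^L\indic^{(\ell)}(z)=1$ for every $z\in\reals^p$. Condition \ref{enu:cvar:rank} is essentially immediate: substituting $\bar{\pi}^{(\ell)}=\alpha\bar{\cn}^{(\ell)}$ and $\Pi^{(\ell)}=\alpha\beta^{(\ell)\trans}$ from \ref{enu:pwa:rank} into the piecewise form \eqref{pi-piecewise} gives
\[
\pi(z)=\sum_{\ell=1}^L\indic^{(\ell)}(z)(\bar{\pi}^{(\ell)}+\Pi^{(\ell)}z)=\alpha\sum_{\ell=1}^L\indic^{(\ell)}(z)(\bar{\cn}^{(\ell)}+\beta^{(\ell)\trans}z)\eqdef\alpha\eq(z),
\]
with $\eq:\reals^p\setmap\reals^r$, while $c=\alpha\cn$ holds by hypothesis; since $\rank\alpha=r$, this $\eq$ is moreover the unique function with $\pi=\alpha\eq$.

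Next, condition \ref{enu:cvar:homeo}. For the PLVAR and TAVAR, $\fe_0$ is a continuous, piecewise-linear (resp.\ piecewise-affine with the band structure) map, and the univalence result of \citet{GLM80Ecta} characterises when such a map is a bijection of $\reals^p$ in terms of its branch Jacobians $\{\Phi_0^{(\ell)}\}_{\ell=1}^L$ -- namely, that these be nonsingular and share a common sign of determinant, which is precisely \ref{enu:pwa:homeo}; a continuous piecewise-affine bijection of $\reals^p$ is then automatically a homeomorphism (its inverse is again piecewise affine, hence continuous -- or, by invariance of domain, a continuous injection $\reals^p\setmap\reals^p$ is an open map). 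In the final clause, where the partition $\{\set Z^{(\ell)}\}$ is arbitrary, \ref{enu:cvar:homeo} is assumed outright, so there is nothing to prove here.

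The substantive condition is \ref{enu:cvar:jsr}, and I would obtain it by appeal to \lemref{affine-conditions} rather than by any fresh computation: with \ref{enu:cvar:homeo} in hand, that lemma's hypotheses reduce to exactly \ref{enu:pwa:rank} together with $\fe_0$ being a homeomorphism, so it delivers \eqref{gradient} with ${\cal B}=\ch\{\b{\beta}^{(\ell)}\}_{\ell=1}^L$ and
\[
\rho_{\jsr}(\{I_{p(k-1)+r}+\b{\beta}^\trans\b{\alpha}\mid\b{\beta}\in{\cal B}\})=\rho_{\jsr}(\{I_{p(k-1)+r}+\b{\beta}^{(\ell)\trans}\b{\alpha}\}_{\ell=1}^L)\leq\abv{\rho},
\]
the last inequality being \ref{enu:pwa:jsr}. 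It then remains only to bound $\max_{\b{\beta}\in{\cal B}}\smlnorm{\b{\beta}}$: since ${\cal B}$ is the convex hull of the finite set $\{\b{\beta}^{(\ell)}\}_{\ell=1}^L$ and the induced matrix norm is convex, its maximum over ${\cal B}$ is attained at one of the $\b{\beta}^{(\ell)}$, so $\max_{\b{\beta}\in{\cal B}}\smlnorm{\b{\beta}}=\max_{1\leq\ell\leq L}\smlnorm{\b{\beta}^{(\ell)}}\leq\abv b$ by the stated choice of $\abv b$. Hence \ref{enu:cvar:jsr} holds with this ${\cal B}$, $\abv b$ and $\abv{\rho}$, and combining the three conditions gives $(c,\{\fe_i\}_{i=0}^k)\in\cvar_r(\alpha,\abv b,\abv{\rho})$.

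Finally, the displayed form of $\co$ follows by direct computation: from \eqref{pwa} and \eqref{ge}, $\ga_j(z)=\sum_\ell\indic^{(\ell)}(z)(\bar{\gamma}_j^{(\ell)}+\Gamma_j^{(\ell)}z)$ with $\bar{\gamma}_j^{(\ell)}=-\sum_{i=j+1}^k\bar{\phi}_i^{(\ell)}$ and $\Gamma_j^{(\ell)}=-\sum_{i=j+1}^k\Phi_i^{(\ell)}$, whence $\srfn(z)=\fe_0(z)-\sum_{i=1}^{k-1}\ga_i(z)=\sum_\ell\indic^{(\ell)}(z)(\bar{\ct}^{(\ell)}+\srp^{(\ell)}z)$ and $\ct(z)=\alpha_\perp^\trans\srfn(z)$; stacking this with $\eq$ as found above yields \eqref{co-pwa}. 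The argument has no single deep step, since the joint-spectral-radius content is packaged in \lemref{affine-conditions} (which itself rests on \propref{lin-rep-var} applied branch by branch) and the global univalence is supplied by \citet{GLM80Ecta}; what does require care is aligning the hypotheses of \lemref{affine-conditions} with \ref{enu:pwa:rank}, the convexity argument that reduces the norm bound over ${\cal B}$ to a maximum over the $L$ matrices $\b{\beta}^{(\ell)}$, and confirming that \ref{enu:pwa:homeo} is equivalent to the global-homeomorphism property for precisely the PLVAR and TAVAR subclasses -- which is why, for a general polyhedral partition, that property must be retained as a hypothesis rather than derived.
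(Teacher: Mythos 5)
Your proposal is correct and follows essentially the same route as the paper: condition \ref{enu:cvar:rank} by direct substitution into \eqref{pi-piecewise}, condition \ref{enu:cvar:homeo} via the \citet{GLM80Ecta} univalence theorem (the paper packages this, together with continuity of the inverse, in \lemref{pwa-homeo}, proving the latter by a properness argument rather than your invariance-of-domain shortcut — both are valid), condition \ref{enu:cvar:jsr} by citing \lemref{affine-conditions} with ${\cal B}=\ch\{\b{\beta}^{(\ell)}\}_{\ell=1}^{L}$, and the form of $\co$ by direct computation from \eqref{ct-def}. Your explicit verification that $\max_{\b{\beta}\in{\cal B}}\smlnorm{\b{\beta}}$ is attained at a vertex of the convex hull is a small point the paper leaves implicit, and is a welcome addition.
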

Since it may be verified that a linear VAR satisfying \ref{enu:lin:homeo}\ass{--3}
is itself a piecewise linear model satisfying \ref{enu:pwa:homeo}\ass{--3}
above, \propref{lin-rep-var} may be construed as a special case of
the preceding (though for expository reasons, a separate proof of
that result is given in \appref{examples}).

\subsection{Smooth transitions}

The model \eqref{pwa} may also be extended to allow for smooth transitions
between the $L$ regimes. In the literature on smooth transition (vector)
autoregressive models, the conventional approach (e.g.\ \citealp[Sec.~3.3]{HT13})
is to replace the indicator functions $\indic\{z\in\set Z^{(\ell)}\}$
by smooth maps $\Lambda^{(\ell)}(z)$, so that now
\[
\fe_{i}^{\mathrm{ST}}(z)=\sum_{\ell=1}^{L}\Lambda^{(\ell)}(z)(\bar{\phi}_{i}^{(\ell)}+\Phi_{i}^{(\ell)}z),
\]
where $\Lambda^{(\ell)}(z)\in[0,1]$ and $\sum_{\ell=1}^{L}\Lambda^{(\ell)}(z)=1$
for all $z\in\reals^{p}$, so that $\fe_{i}^{\mathrm{ST}}(z)$ is
a always a smooth, convex combination of the affine functions $\{z\elmap\bar{\phi}_{i}^{(\ell)}+\Phi_{i}^{(\ell)}z\}_{\ell=1}^{L}$.
While models of this kind may be accommodated within our framework
(under certain regularity conditions), the fact that the \emph{gradient}
of $\fe_{i}^{\mathrm{ST}}$ is \emph{not} a convex combination of
those underlying affine regimes makes it difficult to reduce \ref{enu:cvar:jsr}
to a bound on the JSR of a finite collection of matrices, in the manner
of \propref{affine}. As an alternative specification that allows
for smooth transitions between regimes, while also keeping \ref{enu:cvar:jsr}
tractable, consider
\begin{equation}
\fe_{i,K}(z)\defeq\int_{\reals^{p}}\fe_{i}(z+u)K(u)\diff u\label{eq:smoothed}
\end{equation}
where $\fe_{i}$ is a (continuous) piecewise affine function as in
\eqref{pwa} above, and $K$ is a smooth kernel with mean zero. Then
we have the following.
\begin{prop}
\label{prop:smoothed}Suppose that:
\begin{enumerate}
\item $(c,\{\fe_{i}\}_{i=0}^{k})$ is either a piecewise linear or threshold
affine VAR satisfying \ref{enu:pwa:homeo}\ass{--3};
\item \label{enu:f0K}$\Phi_{0}^{(\ell)}=\Phi_{0}$ for all $\ell\in\{1,\ldots,L\}$,
for some nonsingular $\Phi_{0}$;
\item $K:\reals\setmap\reals$ is continuous and non-negative, with $\int_{\reals^{p}}K(u)\diff u=1$
and $\int_{\reals^{p}}uK(u)\diff u=0$;
\item $\fe_{i,K}$ is constructed by smoothing $\fe_{i}$ with $K$ as in
\eqref{smoothed}, for $i\in\{0,\ldots,k\}$.
\end{enumerate}
Then $(c,\{\fe_{i,K}\}_{i=0}^{k})$ belongs to class $\cvar_{r}(\alpha,\abv b,\abv{\rho})$,
for $\abv b\geq\max_{1\leq\ell\leq L}\smlnorm{\b{\beta}^{(\ell)}}$,
for $\b{\beta}^{(\ell)}$ as in \eqref{beta_l} above.
\end{prop}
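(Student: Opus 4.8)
The plan is to verify conditions \ref{enu:cvar:homeo}--\ref{enu:cvar:jsr} of \defref{cvar} for $(c,\{\fe_{i,K}\}_{i=0}^{k})$, with $\mathcal{B}\defeq\ch\{\b{\beta}^{(\ell)}\}_{\ell=1}^{L}$ for $\b{\beta}^{(\ell)}$ as in \eqref{beta_l}. The simplification afforded by hypothesis \ref{enu:f0K} is that $\fe_{0}$ is globally \emph{linear}: in the piecewise linear case $\bar{\phi}_{i}^{(\ell)}\equiv0$, and in the threshold affine case continuity of $\fe_{0}$ across the consecutive, hyperplane-separated bands $\{\set Z^{(\ell)}\}$ forces $\bar{\phi}_{0}^{(\ell)}$ to be independent of $\ell$; either way $\fe_{0}(z)=\Phi_{0}z$ after the normalisation $\fe_{i}(0)=0$. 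Since $K$ has mean zero it follows immediately that $\fe_{0,K}=\fe_{0}=\Phi_{0}\cdot$, a homeomorphism, giving \ref{enu:cvar:homeo} together with $\fe_{0,K}^{-1}=\Phi_{0}^{-1}\cdot$. Condition \ref{enu:cvar:rank} holds because $c=\alpha\cn$ is unchanged (by \ref{enu:pwa:rank}) and, by linearity of the integral and \eqref{pi-def}, $\pi_{K}(z)=\int_{\reals^{p}}\pi(z+u)K(u)\diff u=\alpha\int_{\reals^{p}}\eq(z+u)K(u)\diff u\eqdef\alpha\eq_{K}(z)$, using the factorisation $\pi=\alpha\eq$ from \eqref{Pi-ell} with $\eq(w)=\sum_{\ell}\indic^{(\ell)}(w)(\bar{\cn}^{(\ell)}+\beta^{(\ell)\trans}w)$ (well defined since $\rank\alpha=r$). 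It remains to establish \ref{enu:cvar:jsr}.

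Write $G_{K}\defeq\b{\eq}_{K}\compose\fe_{0,K}^{-1}$ for the smoothed model and $G\defeq\b{\eq}\compose\fe_{0}^{-1}$ for the unsmoothed one, where $\b{\eq}$, $\b{D}_{0}$ are as in \eqref{bold-theta} and $\b{\eq}_{K}$ is $\b{\eq}$ built from $\{\fe_{i,K}\}_{i=0}^{k}$; one checks $\b{\eq}_{K}(w)=\int_{\reals^{p}}\b{\eq}(w+u)K(u)\diff u$. By \eqref{pwa}--\eqref{Pi-ell}, $\b{\eq}$ is continuous and piecewise affine, equal on $\set Z^{(\ell)}$ to an affine map with Jacobian $\b H^{(\ell)}\defeq\left[\begin{smallmatrix}\beta^{(\ell)\trans}\\\b{\Gamma}^{(\ell)}\end{smallmatrix}\right]$, hence Lipschitz. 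The crucial claim is that convolving a Lipschitz function with the continuous density $K$ yields a $C^{1}$ function: $\b{\eq}_{K}\in C^{1}$ with $\partial\b{\eq}_{K}(w)=\int_{\reals^{p}}\partial\b{\eq}(w+u)K(u)\diff u$. Granting this, since $\partial\b{\eq}$ equals $\b H^{(\ell)}$ a.e.\ on $\set Z^{(\ell)}$ and $\{\set Z^{(\ell)}\}_{\ell=1}^{L}$ partitions $\reals^{p}$ up to a Lebesgue-null set, $\partial\b{\eq}_{K}(w)=\sum_{\ell}w_{\ell}(w)\b H^{(\ell)}$ with $w_{\ell}(w)=\int_{\set Z^{(\ell)}-w}K\ge0$ and $\sum_{\ell}w_{\ell}(w)=1$, so $\partial\b{\eq}_{K}(w)\in\ch\{\b H^{(\ell)}\}_{\ell=1}^{L}$; composing with $\fe_{0,K}^{-1}=\Phi_{0}^{-1}\cdot$ gives $G_{K}\in C^{1}$ with $\partial G_{K}(z)=\partial\b{\eq}_{K}(\Phi_{0}^{-1}z)\Phi_{0}^{-1}\in\ch\{H^{(\ell)}\}_{\ell=1}^{L}$, where $H^{(\ell)}\defeq\b H^{(\ell)}\Phi_{0}^{-1}$ is precisely the first $p$ columns of $\b{\beta}^{(\ell)\trans}$ in \eqref{beta_l}.

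Given this, fix $\b{z}=(z^{\trans},\b{\zeta}^{\trans})^{\trans}$ and $\b{z}'=(z^{\prime\trans},\b{\zeta}^{\prime\trans})^{\trans}$. The fundamental theorem of calculus along $[z',z]$ gives $G_{K}(z)-G_{K}(z')=\bar{M}(z-z')$, where $\bar{M}\defeq\int_{0}^{1}\partial G_{K}(z'+t(z-z'))\diff t$ is an average of matrices in the convex compact set $\ch\{H^{(\ell)}\}_{\ell=1}^{L}$, hence $\bar{M}=\sum_{\ell}\bar{\lambda}_{\ell}H^{(\ell)}$ for some $\bar{\lambda}_{\ell}\ge0$ with $\sum_{\ell}\bar{\lambda}_{\ell}=1$. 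Put $\b{\beta}\defeq\sum_{\ell}\bar{\lambda}_{\ell}\b{\beta}^{(\ell)}\in\mathcal{B}$. From \eqref{beta_l} (using \ref{enu:f0K}), the first $p$ columns of $\b{\beta}^{(\ell)\trans}$ are $H^{(\ell)}$ and its remaining columns coincide with those of $\b{D}_{0}$, whose first $p$ columns vanish; hence $\b{\beta}^{(\ell)\trans}(\b{z}-\b{z}')=H^{(\ell)}(z-z')+\b{D}_{0}(\b{z}-\b{z}')$, and averaging over $\ell$ yields $\b{\beta}^{\trans}(\b{z}-\b{z}')=\bar{M}(z-z')+\b{D}_{0}(\b{z}-\b{z}')=[G_{K}(z)+\b{D}_{0}\b{z}]-[G_{K}(z')+\b{D}_{0}\b{z}']$, which is exactly \eqref{gradient}. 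Finally $\mathcal{B}$ is compact, $\max_{\b{\beta}\in\mathcal{B}}\smlnorm{\b{\beta}}=\max_{1\le\ell\le L}\smlnorm{\b{\beta}^{(\ell)}}\le\abv{b}$ by convexity of the norm, and $\rho_{\jsr}(\{I_{p(k-1)+r}+\b{\beta}^{\trans}\b{\alpha}\mid\b{\beta}\in\mathcal{B}\})=\rho_{\jsr}(\{I_{p(k-1)+r}+\b{\beta}^{(\ell)\trans}\b{\alpha}\}_{\ell=1}^{L})\le\abv{\rho}$ by \lemref{affine-conditions} and \ref{enu:pwa:jsr}; thus \ref{enu:cvar:jsr} holds and $(c,\{\fe_{i,K}\}_{i=0}^{k})\in\cvar_{r}(\alpha,\abv{b},\abv{\rho})$.

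The one substantive point is the $C^{1}$ claim, to be proved without assuming $K$ differentiable (it is merely continuous). I would argue directly: since $\b{\eq}$ is Lipschitz it is absolutely continuous along every line, so for each coordinate direction $e_{j}$ and $t\ne0$, $\b{\eq}(w+u+te_{j})-\b{\eq}(w+u)=\int_{0}^{t}\partial_{e_{j}}\b{\eq}(w+u+se_{j})\diff s$; integrating against $K(u)\diff u$ and applying Fubini (valid as $\partial\b{\eq}\in L^{\infty}$, $K\in L^{1}$) gives $\b{\eq}_{K}(w+te_{j})-\b{\eq}_{K}(w)=\int_{0}^{t}\Psi_{j}(w+se_{j})\diff s$, where $\Psi_{j}(x)\defeq\int_{\reals^{p}}\partial_{e_{j}}\b{\eq}(x+u)K(u)\diff u$. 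The map $\Psi_{j}$ is continuous — being the convolution of a bounded function with $K\in L^{1}$, continuity follows from $L^{1}$-continuity of translation — so dividing by $t$ and letting $t\to0$ yields $\partial_{e_{j}}\b{\eq}_{K}(w)=\Psi_{j}(w)$, continuous in $w$; hence $\b{\eq}_{K}\in C^{1}$ with Jacobian $\partial\b{\eq}_{K}(w)=\int_{\reals^{p}}\partial\b{\eq}(w+u)K(u)\diff u$, as claimed. Everything else is bookkeeping, modulo \lemref{affine-conditions}.
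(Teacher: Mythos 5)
Your proof is correct, but for the key condition \enuref{cvar:jsr} it takes a genuinely different route from the paper's. The paper never discusses differentiability of the smoothed functions: it simply applies the difference identity \eqref{gradient}, already established for the \emph{unsmoothed} piecewise affine model by \lemref{affine-conditions} with ${\cal B}=\ch\{\b{\beta}^{(\ell)}\}_{\ell=1}^{L}$, at the pair of shifted points $z^{\prime}+\fe_{0}(u)$ and $z^{\prime\prime}+\fe_{0}(u)$ (exploiting the linearity of $\fe_{0}=\fe_{0,K}$ to pull the shift inside the composition), obtaining a $\b{\beta}_{u}\in{\cal B}$ for each $u$, and then integrates $\b{\beta}_{u}$ against $K(u)\diff u$, invoking convexity of ${\cal B}$ to conclude $\int\b{\beta}_{u}K(u)\diff u\in{\cal B}$. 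You instead prove that the convolution $\b{\eq}_{K}$ of the Lipschitz, piecewise affine $\b{\eq}$ with the continuous kernel $K$ is $C^{1}$ with Jacobian $\int\partial\b{\eq}(w+u)K(u)\diff u\in\ch\{\b H^{(\ell)}\}$, and then recover \eqref{gradient} from the fundamental theorem of calculus along the segment. Both arguments are valid and land on the same set ${\cal B}$. The paper's is shorter and needs no regularity of $K$ beyond integrability, though it does implicitly require a measurable selection $u\elmap\b{\beta}_{u}$ for the integral to be well defined; your route sidesteps that issue entirely and yields the stronger by-product that the smoothed model is $C^{1}$, at the cost of the extra convolution-regularity lemma (which you prove correctly via absolute continuity along lines, Fubini, and $L^{1}$-continuity of translation). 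Your treatment of \enuref{cvar:homeo} and \enuref{cvar:rank}, and of the norm and JSR bounds via \lemref{affine-conditions} and \ref{enu:pwa:jsr}, matches the paper's; you are in fact slightly more careful than the paper in justifying why hypothesis \enuref{f0K} forces $\fe_{0}$ to be globally linear.
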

The requirement in \enuref{f0K} that $\fe_{0}$ be linear is needed
principally to facilitate the verification of \enuref{cvar:jsr},
because of the role played by $\fe_{K,0}^{-1}$ there. We expect
that it should be possible to extend the preceding to allow $\fe_{0}$
to be a general piecewise affine function, albeit possibly at the
cost of additional regularity conditions. (Indeed, it may be shown
that the smooth counterpart $\fe_{0,K}$ of $\fe_{0}$ is a homeomorphism
if every $\Phi\in\ch\{\Phi_{0}^{(\ell)}\}_{\ell=1}^{L}$ is invertible,
thereby satisfying \enuref{cvar:homeo}.)

\section{Conclusion}

\label{sec:conclusion}

This paper has extended the Granger--Johansen representation theorem
to a flexible class of additively time-separable, nonlinear SVAR models.
This shows that such models may be applied directly to time series
in which (common) stochastic trends are present, without the need
for pre-filtering, thus avoiding the potential for misspecification
that this entails. As an important corollary to our results, we show
that these models are capable of supporting the same kinds of long-run
identifying restrictions as are available in linear cointegrated SVARs.
A companion paper to the present work provides limit theory for $n^{-1/2}z_{\smlfloor{n\lambda}}$,
and a further discussion of nonlinear cointegration, under the assumptions
maintained here.

\bibliographystyle{ecta}
\bibliography{cksvar}

\begin{thebibliography}{28}
\newcommand{\enquote}[1]{``#1''}
\expandafter\ifx\csname natexlab\endcsname\relax\def\natexlab#1{#1}\fi

\bibitem[\protect\citeauthoryear{Aruoba, Mlikota, Schorfheide, and
  Villalvazo}{Aruoba et~al.}{2022}]{AMSV21}
\textsc{Aruoba, S.~B., M.~Mlikota, F.~Schorfheide, and S.~Villalvazo} (2022):
  \enquote{{SVARs} with occasionally-binding constraints,} \emph{Journal of
  Econometrics}, 231, 477--499.

\bibitem[\protect\citeauthoryear{Bec and Rahbek}{Bec and
  Rahbek}{2004}]{BR04EctJ}
\textsc{Bec, F. and A.~Rahbek} (2004): \enquote{Vector equilibrium correction
  models with non-linear discontinuous adjustments,} \emph{The Econometrics
  Journal}, 7, 628--651.

\bibitem[\protect\citeauthoryear{Bingham}{Bingham}{2001}]{Bing01Hdbk}
\textsc{Bingham, N.~H.} (2001): \enquote{Random walk and fluctuation theory,}
  \emph{Handbook of Statistics}, 19, 171--213.

\bibitem[\protect\citeauthoryear{Blanchard and Quah}{Blanchard and
  Quah}{1989}]{BlanchardQuah89}
\textsc{Blanchard, O.~J. and D.~Quah} (1989): \enquote{The dynamic effects of
  aggregate demand and supply disturbances,} \emph{American Economic Review},
  79, 655--73.

\bibitem[\protect\citeauthoryear{Chan}{Chan}{2009}]{Chan09}
\textsc{Chan, K.~S.}, ed. (2009): \emph{Exploration of a Nonlinear World: an
  appreciation of {Howell Tong's} contributions to statistics}, World
  Scientific.

\bibitem[\protect\citeauthoryear{Duffy, Mavroeidis, and Wycherley}{Duffy
  et~al.}{2023}]{DMW22}
\textsc{Duffy, J.~A., S.~Mavroeidis, and S.~Wycherley} (2023):
  \enquote{Cointegration with Occasionally Binding Constraints,}
  {arXiv:2211.09604v2}.

\bibitem[\protect\citeauthoryear{Engle and Granger}{Engle and
  Granger}{1987}]{EG87Ecta}
\textsc{Engle, R.~F. and C.~W.~J. Granger} (1987): \enquote{Co-integration and
  error correction: {r}epresentation, estimation, and testing,}
  \emph{Econometrica}, 251--276.

\bibitem[\protect\citeauthoryear{Escribano and Mira}{Escribano and
  Mira}{2002}]{EM02JTSA}
\textsc{Escribano, A. and S.~Mira} (2002): \enquote{Nonlinear error correction
  models,} \emph{Journal of Time Series Analysis}, 23, 509--522.

\bibitem[\protect\citeauthoryear{Gao and Phillips}{Gao and
  Phillips}{2013}]{GP13JoE}
\textsc{Gao, J. and P.~C.~B. Phillips} (2013): \enquote{Semiparametric
  estimation in triangular system equations with nonstationarity,}
  \emph{Journal of Econometrics}, 176, 59--79.

\bibitem[\protect\citeauthoryear{Gourieroux, Laffont, and Monfort}{Gourieroux
  et~al.}{1980}]{GLM80Ecta}
\textsc{Gourieroux, C., J.~J. Laffont, and A.~Monfort} (1980):
  \enquote{Coherency conditions in simultaneous linear equation models with
  endogenous switching regimes,} \emph{Econometrica}, 48, 675--695.

\bibitem[\protect\citeauthoryear{Gouri{\'e}roux, Monfort, and
  Renne}{Gouri{\'e}roux et~al.}{2020}]{GMR20REStud}
\textsc{Gouri{\'e}roux, C., A.~Monfort, and J.-P. Renne} (2020):
  \enquote{Identification and estimation in non-fundamental structural VARMA
  models,} \emph{Review of Economic Studies}, 87, 1915--1953.

\bibitem[\protect\citeauthoryear{Horn and Johnson}{Horn and
  Johnson}{2013}]{HJ13book}
\textsc{Horn, R.~A. and C.~R. Johnson} (2013): \emph{Matrix Analysis}, C.U.P.,
  2nd ed.

\bibitem[\protect\citeauthoryear{Hubrich and Ter{\"a}svirta}{Hubrich and
  Ter{\"a}svirta}{2013}]{HT13}
\textsc{Hubrich, K. and T.~Ter{\"a}svirta} (2013): \enquote{Thresholds and
  smooth transitions in vector autoregressive models,} in \emph{VAR Models in
  Macroeconomics -- New Developments and Applications: {essays} in honor of
  {Christopher} {A.} {Sims}}.

\bibitem[\protect\citeauthoryear{Johansen}{Johansen}{1991}]{Joh91Ecta}
\textsc{Johansen, S.} (1991): \enquote{Estimation and hypothesis testing of
  cointegration vectors in Gaussian vector autoregressive models,}
  \emph{Econometrica}, 1551--1580.

\bibitem[\protect\citeauthoryear{Johansen}{Johansen}{1995}]{Joh95}
---\hspace{-.1pt}---\hspace{-.1pt}--- (1995): \emph{Likelihood-based Inference
  in Cointegrated Vector Autoregressive Models}, O.U.P.

\bibitem[\protect\citeauthoryear{Jungers}{Jungers}{2009}]{Jungers09}
\textsc{Jungers, R.~M.} (2009): \emph{The Joint Spectral Radius: {t}heory and
  applications}, Springer.

\bibitem[\protect\citeauthoryear{Kilian and L{\"u}tkepohl}{Kilian and
  L{\"u}tkepohl}{2017}]{KL17book}
\textsc{Kilian, L. and H.~L{\"u}tkepohl} (2017): \emph{Structural Vector
  Autorgressive Analysis}, C.U.P.

\bibitem[\protect\citeauthoryear{King, Plosser, Stock, and Watson}{King
  et~al.}{1991}]{KPSW91AER}
\textsc{King, R.~G., C.~I. Plosser, J.~H. Stock, and M.~W. Watson} (1991):
  \enquote{Stochastic Trends and Economic Fluctuations,} \emph{American
  Economic Review}, 81, 819--840.

\bibitem[\protect\citeauthoryear{Koop, Pesaran, and Potter}{Koop
  et~al.}{1996}]{KPP96JoE}
\textsc{Koop, G., M.~H. Pesaran, and S.~M. Potter} (1996): \enquote{Impulse
  response analysis in nonlinear multivariate models,} \emph{Journal of
  econometrics}, 74, 119--147.

\bibitem[\protect\citeauthoryear{Kristensen and Rahbek}{Kristensen and
  Rahbek}{2010}]{KR10JoE}
\textsc{Kristensen, D. and A.~Rahbek} (2010): \enquote{Likelihood-based
  inference for cointegration with nonlinear error-correction,} \emph{Journal
  of Econometrics}, 158, 78--94.

\bibitem[\protect\citeauthoryear{Kristensen and Rahbek}{Kristensen and
  Rahbek}{2013}]{KR13ET}
---\hspace{-.1pt}---\hspace{-.1pt}--- (2013): \enquote{Testing and inference in
  nonlinear cointegrating vector error correction models,} \emph{Econometric
  Theory}, 29, 1238--1288.

\bibitem[\protect\citeauthoryear{Lanne, Meitz, and Saikkonen}{Lanne
  et~al.}{2017}]{LMS17JoE}
\textsc{Lanne, M., M.~Meitz, and P.~Saikkonen} (2017): \enquote{Identification
  and estimation of non-Gaussian structural vector autoregressions,}
  \emph{Journal of Econometrics}, 196, 288--304.

\bibitem[\protect\citeauthoryear{Matzkin}{Matzkin}{2008}]{Matz09Ecta}
\textsc{Matzkin, R.~L.} (2008): \enquote{Identification in nonparametric
  simultaneous equations models,} \emph{Econometrica}, 76, 945--978.

\bibitem[\protect\citeauthoryear{Mavroeidis}{Mavroeidis}{2021}]{SM21}
\textsc{Mavroeidis, S.} (2021): \enquote{Identification at the zero lower
  bound,} \emph{Econometrica}, 69, 2855--2885.

\bibitem[\protect\citeauthoryear{Saikkonen}{Saikkonen}{2005}]{Saik05JoE}
\textsc{Saikkonen, P.} (2005): \enquote{Stability results for nonlinear error
  correction models,} \emph{Journal of Econometrics}, 127, 69--81.

\bibitem[\protect\citeauthoryear{Saikkonen}{Saikkonen}{2008}]{Saik08ET}
---\hspace{-.1pt}---\hspace{-.1pt}--- (2008): \enquote{Stability of regime
  switching error correction models under linear cointegration,}
  \emph{Econometric Theory}, 24, 294--318.

\bibitem[\protect\citeauthoryear{Sims}{Sims}{1980}]{Sims80}
\textsc{Sims, C.~A.} (1980): \enquote{Macroeconomics and reality,}
  \emph{Econometrica}, 1--48.

\bibitem[\protect\citeauthoryear{Ter{\"a}svirta, Tj{\o}stheim, and
  Granger}{Ter{\"a}svirta et~al.}{2010}]{TTG10}
\textsc{Ter{\"a}svirta, T., D.~Tj{\o}stheim, and C.~W.~J. Granger} (2010):
  \emph{Modelling Nonlinear Economic Time Series}, O.U.P.

\end{thebibliography}

\appendix

\section{Observational equivalence and identification}

\label{app:identification}

In this appendix, we develop a characterisation of observational equivalence
in the nonlinear VAR model \eqref{nlVAR}, reproduced here as
\begin{equation}
\fe_{0}(z_{t})=c+\sum_{i=1}^{k}\fe_{i}(z_{t-i})+u_{t},\tag{\ref{eq:nlVAR}}
\end{equation}
where $\{u_{t}\}$ is i.i.d.\ with $\expect u_{t}=0$ and $\expect u_{t}u_{t}^{\trans}=I_{p}$.
We approach this problem by casting the model as a special case of
the nonlinear simultaneous equations model considered by \citet{Matz09Ecta},
in which additional structure is provided by the additive separability
across the variables $\{z_{t-i}\}_{i=0}^{k}$. This structure permits
us, with the aid of her Theorem~3.3, to obtain a sharper characterisation
of observational equivalence than is available in general, non-separable
models; this is stated as \propref{matzkin} below. The application
of this result to the structural nonlinear VAR model \eqref{nlVAR}
is then provided by our \thmref{shockid}.

In obtaining \thmref{shockid}, we have found it convenient to impose
smoothness regularity conditions on the functions $\{\fe_{i}\}_{i=0}^{p}$,
and the density $\den$ of $u_{t}$, that are the same as those maintained
in \citet{Matz09Ecta}. These could plausibly be relaxed, without
affecting the substance of \thmref{shockid}, so as to accommodate
the kinds of piecewise smooth functions that parametrise the piecewise
affine VARs discussed in \subsecref{piecewise-affine}. In any case,
since \thmref{shockid} is only intended to provide an illustrative
justification for the formulation of the structural nonlinear VAR
given in \eqref{nlSVAR}, we do not maintain the conditions required
by \thmref{shockid} elsewhere in this paper.

\subsection{In a general simultaneous equations model}

\label{app:id-simeq}

We first derive a characterisation of observational equivalence in
the setting of an additively separable, nonlinear simultaneous equations
model, drawing on \citet{Matz09Ecta}. To facilitate the application
of her results, in this section (only) we adopt her notation, setting
aside that used in the rest of this paper. We consider a model defined
by
\begin{equation}
U=r(Y,X_{1},X_{2})=r_{0}(Y)+r_{1}(X_{1})+r_{2}(X_{2})\label{eq:nlsem}
\end{equation}
where $U$ and $Y$ are random vectors taking values in $\reals^{G}$,
and $X_{i}$ is a random vector taking values in $\reals^{K_{i}}$,
for $i\in\{1,2\}$. Let $f_{U}$ denote the density of $U$. We suppose
$(Y,X_{1},X_{2})$ are observed, and that $U$ is independent of $X_{1}$
and $X_{2}$. The model is parametrised by the functions $r_{0}:\reals^{G}\setmap\reals^{G}$,
$r_{i}:\reals^{K_{i}}\setmap\reals^{G}$ for $i\in\{1,2\}$, and the
density $f_{U}$; let $\Gamma_{0}\ni r_{0}$, $\Gamma_{i}\ni r_{i}$
for $i\in\{1,2\}$ and $\Phi\ni f_{U}$ denote the sets of functions
(and densities) that together constitute the model parameter space.
Our first assumption records the restrictions imposed on that parameter
space. For a function $g:\reals^{m}\setmap\reals^{n}$, let $Dg(z_{0})=[(\partial g_{i}/\partial z_{j})(z_{0})]$
denote the $(n\times m)$ Jacobian of $g(z)$ at $z=z_{0}$.

\assumpname{PAR}
\begin{assumption}
\label{ass:ps}~For every $\tilde{r}_{i}\in\Gamma_{i}$, $i\in\{0,1,2\}$:
\begin{enumerate}[label=\ass{\arabic*.}, ref=\ass{\arabic*}]
\item \label{enu:ps:twicediff}$\tilde{r}_{i}$ is twice continuously differentiable
on $\reals^{G}$.
\end{enumerate}
For every $\tilde{r}_{0}\in\Gamma_{0}$:
\begin{enumerate}[resume, resume*]
\item \label{enu:ps:bij}$\tilde{r}_{0}:\reals^{G}\setmap\reals^{G}$ is
bijective; and
\item \label{enu:ps:jac}$\det D\tilde{r}_{0}(y)>0$ for every $y\in\reals^{G}$.
\end{enumerate}
For every $f_{\tilde{U}}\in\Phi$:
\begin{enumerate}[resume, resume*]
\item \label{enu:ps:densdiff}$f_{\tilde{U}}$ is continuously differentiable
on $\reals^{G}$;
\item \label{enu:ps:supp}the support of $f_{\tilde{U}}$ is $\reals^{G}$;
and
\item \label{enu:ps:norm}$\int_{\reals^{G}}uf_{\tilde{U}}(u)\diff u=0$
and $\int_{\reals^{G}}uu^{\trans}f_{\tilde{U}}(u)\diff u=I_{G}$.
\end{enumerate}
\end{assumption}

Assumptions \assref{ps}\ass{.}\ref{enu:ps:twicediff}\ass{--}\ref{enu:ps:supp}
are exactly the conditions maintained (on the sets $\Gamma$ and $\Phi$)
by \citet[p.~948]{Matz09Ecta}, here specialised to the additively
separable structure of \eqref{nlsem}; while \assref{ps}\ass{.}\ref{enu:ps:norm}
is a location and scale normalisation that requires the unobservable
vector $U$ to have mean zero and identity covariance matrix. For
given $\tilde{r}_{i}\in\Gamma_{i}$, for each $i\in\{0,1,2\}$, define
\[
\tilde{r}(y,x_{1},x_{2})\defeq\tilde{r}_{0}(y)+\tilde{r}_{1}(x_{1})+\tilde{r}_{2}(x_{2})
\]
Following the discussion on pp.\ 948--50 of \citet{Matz09Ecta},
under \assref{ps} the conditional density of $Y$ given $(X_{1},X_{2})=(x_{1},x_{2})$,
associated with the parameters $(\tilde{r},f_{\tilde{U}})$, is
\[
f_{Y\mid(X_{1},X_{2})=(x_{1},x_{2})}[y;(\tilde{r},f_{\tilde{U}})]=f_{\tilde{U}}[\tilde{r}(y,x_{1},x_{2})]\smlabs{\det D\tilde{r}_{0}(y)}
\]
We say that $(\{\tilde{r}_{i}\}_{i=1}^{3},f_{\tilde{U}})$ is \emph{observationally
equivalent} to $(\{r_{i}\}_{i=1}^{3},f_{U})$ if both imply identical
conditional densities (for $Y$ given $(X_{1},X_{2})$; cf.\ (3.1)
in \citealp{Matz09Ecta}).

We shall also suppose that the true data generating mechanism, as
encoded by the functions $(r_{0},r_{1},r_{2},f_{U})$, satisfies the
following. (Note, in particular, that these conditions are \emph{not}
imposed on the entirety of the model parameter space.) Our conditions
on $f_{U}$ are similar to those entertained by \citet{Matz09Ecta}
in the context of the example exposited in Section~4.2 of her paper,
and permit some of her arguments to be transposed to the present setting.

\assumpname{DGP}
\begin{assumption}
\label{ass:sem} $(r_{0},r_{1},r_{2},f_{U})$ are such that:
\begin{enumerate}[label=\ass{\arabic*.}, ref=\ass{\arabic*}]
\item \label{enu:sem:surj}the image of $(x_{1},x_{2})\elmap r_{1}(x_{1})+r_{2}(x_{2})$
is $\reals^{G}$;
\item \label{enu:sem:rank}$\rank Dr_{i}(x_{i})=G$ for all $x_{i}\in\reals^{K_{i}}$,
for $i\in\{1,2\}$;
\item \label{enu:sem:mode}there exists a $u^{\ast}$ such that $D(\log f_{U})(u^{\ast})=0$;
and
\item \label{enu:sem:span}there exist $\{\abv u_{i}\}_{i=1}^{G}\subset\reals^{G}$
such that $\{\pi_{i}\}_{i=1}^{G}$ span $\reals^{G}$, where
\[
\pi_{i}\defeq D(\log f_{U})(\abv u_{i})^{\trans}.
\]
\end{enumerate}
\end{assumption}

We may now state the main result of this section.
\begin{prop}
\label{prop:matzkin}Suppose that \assref{ps} and \assref{sem} hold.
Let $\tilde{r}_{i}\in\Gamma_{i}$ for $i\in\{0,1,2\}$. Then there
exists an $f_{\tilde{U}}\in\Phi$, such that $(\{\tilde{r}_{i}\}_{i=0}^{2},f_{\tilde{U}})$
is observationally equivalent to $(\{r_{i}\}_{i=0}^{2},f_{U})$, if
and only if there exists an orthogonal matrix $Q\in\reals^{G\times G}$
such that
\[
\tilde{r}_{0}(y)+\tilde{r}_{1}(x_{1})+\tilde{r}_{2}(x_{2})=Q[r_{0}(y)+r_{1}(x_{1})+r_{2}(x_{2})]
\]
for all $(y,x_{1},x_{2})\in\reals^{G}\times\reals^{K_{1}}\times\reals^{K_{2}}$.
\end{prop}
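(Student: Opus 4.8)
The plan is to prove the two implications separately; the substance lies almost entirely in the `only if' direction.

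\emph{The `if' direction.} Suppose $\tilde r_0(y)+\tilde r_1(x_1)+\tilde r_2(x_2)=Q[r_0(y)+r_1(x_1)+r_2(x_2)]$ for all $(y,x_1,x_2)$, with $Q$ orthogonal. Differentiating in $y$ gives $D\tilde r_0(y)=Q\,Dr_0(y)$, so $\det D\tilde r_0(y)=(\det Q)\det Dr_0(y)$; since both Jacobian determinants are positive by \ref{enu:ps:jac}, in fact $\det Q=1$. Define $f_{\tilde U}(u)\defeq f_U(Q^{\trans}u)$: as $\smlabs{\det Q}=1$ this is the density of $QU$ when $U\sim f_U$, it inherits the continuous differentiability and full support demanded by \assref{ps}, and $\expect(QU)=0$ and $\var(QU)=QQ^{\trans}=I_G$ (using orthogonality of $Q$), so $f_{\tilde U}\in\Phi$. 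Finally
\[
f_{\tilde U}\bigl(\tilde r_0(y)+\tilde r_1(x_1)+\tilde r_2(x_2)\bigr)\,\smlabs{\det D\tilde r_0(y)}=f_U\bigl(r_0(y)+r_1(x_1)+r_2(x_2)\bigr)\,\smlabs{\det Dr_0(y)},
\]
so both structures generate the same conditional density of $Y$ given $(X_1,X_2)$, i.e.\ they are observationally equivalent.

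\emph{The `only if' direction.} The first step is to recast \eqref{nlsem} as a special case of the nonseparable simultaneous equations model of \citet{Matz09Ecta}, taking the exogenous block to be $X=(X_1,X_2)$ and the structural function $r(y,x)\defeq r_0(y)+r_1(x_1)+r_2(x_2)$, which by \ref{enu:ps:bij}--\ref{enu:ps:jac} is invertible in $y$ with $\det\partial_y r(y,x)=\det Dr_0(y)>0$. Assumption~\assref{ps} reproduces her maintained smoothness and support conditions, specialised to this additive structure, while Assumption~\assref{sem} supplies both the `rich exogenous variation' her argument requires---surjectivity of $(x_1,x_2)\mapsto r_1(x_1)+r_2(x_2)$ and full-rank Jacobians, i.e.\ \ref{enu:sem:surj}--\ref{enu:sem:rank}---and the non-degeneracy of $f_U$ that transposes the conditions used in Section~4.2 of her paper, i.e.\ \ref{enu:sem:mode}--\ref{enu:sem:span}. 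Applying \citet[Thm.~3.3]{Matz09Ecta}, observational equivalence of $(\{\tilde r_i\}_{i=0}^{2},f_{\tilde U})$ and $(\{r_i\}_{i=0}^{2},f_U)$ delivers a $C^{1}$ diffeomorphism $\tilde A:\reals^{G}\setmap\reals^{G}$ with
\[
\tilde r_0(y)+\tilde r_1(x_1)+\tilde r_2(x_2)=\tilde A\bigl(r_0(y)+r_1(x_1)+r_2(x_2)\bigr)\qquad\text{for all }(y,x_1,x_2),
\]
and with $f_{\tilde U}$ equal to the density of $\tilde A(U)$. Verifying that our hypotheses genuinely place us within the scope of her theorem, and extracting exactly this \emph{global} functional equation (rather than one valid only on the image of $r$), is the step I expect to demand the most care; the remaining reductions are essentially routine.

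It then remains to upgrade $\tilde A$ to an orthogonal linear map. Fixing $y$ and differentiating the displayed identity in $y$ gives $D\tilde r_0(y)=D\tilde A\bigl(r_0(y)+r_1(x_1)+r_2(x_2)\bigr)\,Dr_0(y)$; as $(x_1,x_2)$ varies the argument of $D\tilde A$ sweeps out all of $\reals^{G}$ by \ref{enu:sem:surj}, so $D\tilde A$ is constant, equal to the invertible matrix $Q\defeq D\tilde r_0(y)[Dr_0(y)]^{-1}$ (which is therefore also independent of $y$). Hence $\tilde A$ is affine, $\tilde A(u)=Qu+b$ for some $b\in\reals^{G}$ (and, differentiating also in $x_1,x_2$, $\tilde r_i=Qr_i+b_i$ with $b_0+b_1+b_2=b$). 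Since $f_{\tilde U}$ is then the density of $QU+b$, the normalisation \ref{enu:ps:norm} applied to $f_{\tilde U}$ forces $b=\expect(QU+b)=0$ and $I_G=\var(QU+b)=Q\var(U)Q^{\trans}=QQ^{\trans}$, the last equality using that $f_U$ also obeys \ref{enu:ps:norm}. Thus $Q$ is orthogonal, and with $b=0$ the displayed identity becomes the asserted relation, completing the proof.
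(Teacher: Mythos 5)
Your ``if'' direction is fine, and your final step (once a global identity $\tilde{r}=\tilde{A}\compose r$ with $\tilde{A}$ independent of $(x_{1},x_{2})$ is in hand, differentiate, use \ref{enu:sem:surj} to force $D\tilde{A}$ constant, then use the normalisation \ref{enu:ps:norm} to kill the intercept and make $Q$ orthogonal) matches the concluding step of the paper's argument. But there is a genuine gap at the centre: you attribute to \citet[Thm.~3.3]{Matz09Ecta} a conclusion it does not deliver. Her Theorem~3.3 is a \emph{rank characterisation} of observational equivalence: it says that $(\tilde{r},f_{\tilde{U}})$ and $(r,f_{U})$ are observationally equivalent if and only if a certain $(G+K_{1}+K_{2})\times(G+1)$ matrix built from the Jacobians $D\tilde{r}_{i}$, the score $s(y,x_{1},x_{2})=D(\log f_{U})[r(y,x_{1},x_{2})]^{\trans}$, and the terms $\Delta_{y},\Delta_{x}$ has rank $G$ everywhere. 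It does \emph{not} produce a covariate-independent diffeomorphism $\tilde{A}$ with $\tilde{r}(y,x)=\tilde{A}(r(y,x))$. What observational equivalence gives you in general is a transformation $g(u,x)$ of the disturbance that may depend on $x$; showing that, under \assref{sem}, this transformation is in fact independent of $(x_{1},x_{2})$, affine, and common to all three blocks is precisely the content of the theorem being proved, not an input to it.

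Concretely, the work you have skipped is the chain of deductions the paper extracts from the rank-$G$ condition: first that $\rank D\tilde{r}_{j}(x_{j})=G$ (using \ref{enu:sem:span} to choose $y$ so that the score equals each of the spanning vectors $\pi_{i}$); then that $\Delta_{y}\equiv0$ (using \ref{enu:sem:mode} to choose $(x_{1}^{\ast},x_{2}^{\ast})$ making the score vanish); then that $D\tilde{r}_{j}(x_{j})^{\trans}=Dr_{j}(x_{j})^{\trans}Q$ for a single constant invertible $Q$, where the constancy of $Q$ is itself a nontrivial argument (holding $x_{1}$ fixed while varying $x_{2}$ and vice versa to show the multipliers $\lambda_{i}$ depend on neither); and finally the same identity for $r_{0}$. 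Only after these steps does one integrate up to obtain $\tilde{r}=d+Qr$ globally. You flag the extraction of the global functional equation as ``the step I expect to demand the most care''; that assessment is right, but as written the proposal asserts the conclusion of that step rather than proving it, so the argument does not go through.
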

\begin{proof}
The forward implication is trivial; it remains to prove the reverse
implication. Accordingly, suppose that there exists an $f_{\tilde{U}}\in\Phi$
such that $(\{\tilde{r}_{i}\}_{i=1}^{3},f_{\tilde{U}})$ is observationally
equivalent to $(\{r_{i}\}_{i=1}^{3},f_{U})$. By the additive separability
of the model in $Y$ and $X=(X_{1},X_{2})$, it follows that $\Delta_{x}$
(as defined on p.\ 957 of \citealp{Matz09Ecta}) is identically zero,
and that $\Delta_{y}$ depends only on $y$ (i.e.\ and not on $x_{1}$
or $x_{2}$). Therefore by Theorem~3.3 of \citet{Matz09Ecta}, the
$(G+K_{1}+K_{2})\times(G+1)$ matrix 
\begin{equation}
M(y,x_{1},x_{2})\defeq\begin{bmatrix}D\tilde{r}_{0}(y)^{\trans} & \Delta_{y}(y)+Dr_{0}(y)^{\trans}s(y,x_{1},x_{2})\\
D\tilde{r}_{1}(x_{1})^{\trans} & Dr_{1}(x_{1})^{\trans}s(y,x_{1},x_{2})\\
D\tilde{r}_{2}(x_{2})^{\trans} & Dr_{2}(x_{2})^{\trans}s(y,x_{1},x_{2})
\end{bmatrix}\label{eq:matkin-matrix}
\end{equation}
has rank $G$, for all $(y,x_{1},x_{2})\in\reals^{G+K_{1}+K_{2}}$,
where
\[
s(y,x_{1},x_{2})\defeq D(\log f_{U})[r(y,x_{1},x_{2})]^{\trans}.
\]
The proof proceeds in five steps.

(i) Claim: $\rank D\tilde{r}_{j}(x_{j})=G$, for all $x_{j}\in\reals^{K_{j}}$,
for $j\in\{1,2\}$. Let $(x_{1},x_{2})$ be given. By \assref{ps}\ass{.}\ref{enu:ps:bij}
and \assref{sem}\ass{.}\ref{enu:sem:span}, there exists a $y^{i}\in\reals^{G}$
(depending on $x_{1}$ and $x_{2}$) such that
\begin{equation}
r_{0}(y^{i})=\abv u_{i}-r_{1}(x_{1})-r_{2}(x_{2})\label{eq:ysupi}
\end{equation}
whence $s(y^{i},x_{1},x_{2})=\pi_{i}$. Since $D\tilde{r}_{0}(y^{i})^{\trans}$
has rank $G$, the matrix 
\[
M(y^{i},x_{1},x_{2})=\begin{bmatrix}D\tilde{r}_{0}(y^{i})^{\trans} & \Delta_{y}(y^{i})+Dr_{0}(y^{i})^{\trans}\pi_{i}\\
D\tilde{r}_{1}(x_{1})^{\trans} & Dr_{1}(x_{1})^{\trans}\pi_{i}\\
D\tilde{r}_{2}(x_{2})^{\trans} & Dr_{2}(x_{2})^{\trans}\pi_{i}
\end{bmatrix}
\]
can have rank $G$ only if $Dr_{1}(x_{1})^{\trans}\pi_{i}$ lies in
the span of $D\tilde{r}_{1}(x_{1})^{\trans}$. Repeating this for
every $i\in\{1,\ldots,G\}$, and constructing $\Pi\defeq[\pi_{1},\ldots,\pi_{G}]$,
which is invertible, it follows that
\[
\spn\{Dr_{1}(x_{1})^{\trans}\}=\spn\{Dr_{1}(x_{1})^{\trans}\Pi\}\subset\spn\{D\tilde{r}_{1}(x_{1})^{\trans}\}
\]
where $\spn A$ denotes the column span of the matrix $A$. Hence
$D\tilde{r}_{1}(x_{1})$ indeed has rank $G$; an identical argument
yields the same conclusion for $D\tilde{r}_{2}(x_{2})$.

(ii) Claim: $\Delta_{y}(y)$ is identically zero. Let $y\in\reals^{G}$
be given. By \assref{sem}\ass{.}\ref{enu:sem:surj} and \assref{sem}\ass{.}\ref{enu:sem:mode},
there exists $(x_{1}^{\ast},x_{2}^{\ast})\in\reals^{K_{1}+K_{2}}$
(depending on $y$) such that
\[
r_{1}(x_{1}^{\ast})+r_{2}(x_{2}^{\ast})=u^{\ast}-r_{0}(y)
\]
whence $s(y,x_{1}^{\ast},x_{2}^{\ast})=0$. It follows that
\[
M(y,x_{1}^{\ast},x_{2}^{\ast})=\begin{bmatrix}D\tilde{r}_{0}(y)^{\trans} & \Delta_{y}(y)\\
D\tilde{r}_{1}(x_{1}^{\ast})^{\trans} & 0\\
D\tilde{r}_{2}(x_{2}^{\ast})^{\trans} & 0
\end{bmatrix}.
\]
Since $\rank D\tilde{r}_{1}(x_{1}^{\ast})^{\trans}=G$, the preceding
can only have rank $G$ if $\Delta_{y}(y)=0$.

(iii). Claim: $D\tilde{r}_{j}(x_{j})^{\trans}=Dr_{j}(x_{j})^{\trans}Q$
for all $x_{j}\in\reals^{K_{j}}$, for $j\in\{1,2\}$, for some invertible
$Q\in\reals^{G\times G}$. Let $(x_{1},x_{2})$ be given, and let
$y^{i}$ be as in \eqref{ysupi} above. Since $\rank D\tilde{r}_{1}(x_{1})^{\trans}=G$,
for $M(y^{i},x_{1},x_{2})$ to have rank $G$, it must be the case
that the $(K_{1}+K_{2})\times(G+1)$ submatrix
\[
\begin{bmatrix}D\tilde{r}_{1}(x_{1})^{\trans} & Dr_{1}(x_{1})^{\trans}\pi_{i}\\
D\tilde{r}_{2}(x_{2})^{\trans} & Dr_{2}(x_{2})^{\trans}\pi_{i}
\end{bmatrix}
\]
also has rank $G$. Since $\rank Dr_{1}(x_{1})^{\trans}=G$, the final
column of the preceding matrix is nonzero. Therefore, there must exist
a (nonzero) $\lambda_{i}(x_{1},x_{2})\in\reals^{G}$, possibly depending
on $(x_{1},x_{2})$, such that
\[
\begin{bmatrix}D\tilde{r}_{1}(x_{1})^{\trans} & Dr_{1}(x_{1})^{\trans}\pi_{i}\\
D\tilde{r}_{2}(x_{2})^{\trans} & Dr_{2}(x_{2})^{\trans}\pi_{i}
\end{bmatrix}\begin{bmatrix}\lambda_{i}(x_{1},x_{2})\\
-1
\end{bmatrix}=0.
\]
Now suppose we keep the value of $x_{1}$ fixed, while allowing $x_{2}$
to vary freely over $\reals^{K_{2}}$: while doing so, we adjust the
value of $y^{i}$ such that \eqref{ysupi} is maintained. It follows
that
\[
D\tilde{r}_{1}(x_{1})^{\trans}\lambda_{i}(x_{1},x_{2})=Dr_{1}(x_{1})^{\trans}\pi_{i}
\]
for all $x_{2}\in\reals^{K_{2}}$. Since $\rank D\tilde{r}_{1}(x_{1})^{\trans}=G$,
and the r.h.s.\ does not depend on $x_{2}$, it follows that $\lambda_{i}(x_{1},x_{2})$
cannot depend on $x_{2}$. A symmetrical argument (holding $x_{2}$
fixed while varying $x_{1}\in\reals^{K_{2}}$) yields that $\lambda_{i}(x_{1},x_{2})$
cannot depend on $x_{1}$ either. Hence $\lambda_{i}(x_{1},x_{2})=\lambda_{i}$
is constant.

Repeating the above for every $i\in\{1,\ldots,G\}$, and collecting
$\Lambda=[\lambda_{1},\ldots,\lambda_{G}]$, we obtain
\[
D\tilde{r}_{j}(x_{j})^{\trans}\Lambda=Dr_{j}(x_{j})^{\trans}\Pi
\]
for all $x_{j}\in\reals^{K_{j}}$, for $i\in\{1,2\}$. Since $\rank D\tilde{r}_{j}(x_{j})^{\trans}=\rank Dr_{j}(x_{j})^{\trans}=\rank\Gamma=G$,
it follows that $\Lambda$ is invertible, whence the claim holds with
$Q\defeq\Gamma\Lambda^{-1}$.

(iv). Claim: $D\tilde{r}_{0}(y)^{\trans}=Dr_{0}(y)^{\trans}Q$ for
all $y\in\reals^{G}$. Let $y$ be given. By \assref{sem}\ass{.}\ref{enu:sem:surj}
and \assref{sem}\ass{.}\ref{enu:sem:span}, there exist $(x_{1}^{i},x_{2}^{i})\in\reals^{K_{1}+K_{2}}$
such that
\[
r_{1}(x_{1}^{i})+r_{2}(x_{2}^{i})=\abv u_{i}-r_{0}(y)
\]
whence $s(y,x_{1}^{i},x_{2}^{i})=\pi_{i}$. We know from the preceding
arguments that $\lambda_{i}$ is the unique element of $\reals^{G}$
such that
\[
M(y,x_{1}^{i},x_{2}^{i})\begin{bmatrix}\lambda_{i}\\
-1
\end{bmatrix}=\begin{bmatrix}D\tilde{r}_{0}(y)^{\trans} & Dr_{0}(y)^{\trans}\pi_{i}\\
D\tilde{r}_{1}(x_{1}^{i})^{\trans} & Dr_{1}(x_{1}^{i})^{\trans}\pi_{i}\\
D\tilde{r}_{2}(x_{2}^{i})^{\trans} & Dr_{2}(x_{2}^{i})^{\trans}\pi_{i}
\end{bmatrix}\begin{bmatrix}\lambda_{i}\\
-1
\end{bmatrix}=\begin{bmatrix}c\\
0_{K_{1}}\\
0_{K_{2}}
\end{bmatrix}
\]
for some $c\in\reals^{G}$. Since $\rank M(y,x_{1}^{i},x_{2}^{i})=G$,
we must have $c=0$, and therefore
\[
D\tilde{r}_{0}(y)^{\trans}\lambda_{i}=Dr_{0}(y)^{\trans}\pi_{i}.
\]
Since $y$ was arbitrary, this holds for all $y\in\reals^{G}$. Repeating
this for each $i\in\{1,\ldots,G\}$ proves the claim.

(v). It follows from the preceding, in particular from the third and
fourth claims, that
\[
\tilde{r}_{0}(y)+\tilde{r}_{1}(x_{1})+\tilde{r}_{2}(x_{2})=d+Q[r_{0}(y)+r_{1}(x_{1})+r_{2}(x_{2})]
\]
for some $d\in\reals^{G}$, for all $(y,x_{1},x_{2})\in\reals^{G+K_{1}+K_{2}}$.
Deduce that
\[
\tilde{U}=\tilde{r}(Y)+\tilde{r}_{1}(X_{1})+\tilde{r}_{2}(X_{2})=d+Q[r_{0}(Y)+r_{1}(X_{1})+r_{2}(X_{2})]=d+QU,
\]
where $\tilde{U}$ and $U$ have densities $f_{\tilde{U}}$ and $f_{U}$,
respectively. By \assref{ps}\ass{.}\ref{enu:ps:norm},
\[
0=\expect\tilde{U}=d+Q\expect U=d,
\]
whence $d=0$, and
\[
I_{G}=\expect\tilde{U}\tilde{U}^{\trans}=Q\expect UU^{\trans}Q^{\trans}=QQ^{\trans},
\]
whence $Q$ is an orthogonal matrix.
\end{proof}

\subsection{Application to the nonlinear structural VAR}

\label{app:id-VAR}

We now return to the framework of \subsecref{identification}. For
each $i\in\{0,\ldots,k\}$, let $\fespc_{i}$ denote a collection
of functions $\reals^{p}\setmap\reals^{p}$, and $\denspc$ a collection
of density functions on $\reals^{p}$. The parameter space for the
nonlinear VAR \eqref{nlVAR} is formally defined as
\[
\set P\defeq\{(\tilde{c},\{\tilde{\fe}_{i}\}_{i=0}^{k},\tilde{\den})\mid\tilde{c}\in\reals\sep\tilde{\fe}_{i}\in\fespc_{i}\text{ for }i\in\{0,\ldots,k\}\sep\tilde{\den}\in\denspc\},
\]
so that under the parameters $(\tilde{c},\{\tilde{\fe}_{i}\}_{i=0}^{k},\tilde{\den})\in\set P$,
the model is a nonlinear VAR($k$),
\[
\tilde{\fe}_{0}(z_{t})=\tilde{c}+\sum_{i=1}^{k}\tilde{\fe}_{i}(z_{t-i})+\tilde{u}_{t}
\]
where $\{u_{t}\}$ is i.i.d.\ with density $\tilde{\den}$.
\begin{thm}
\label{thm:shockid}Suppose that $\{\fespc_{i}\}_{i=0}^{k}$ and $\denspc$
are such that every:
\begin{enumerate}
\item \label{enu:shockid:smooth}$\tilde{\fe}_{i}\in\fespc_{i}$ is twice
continuously differentiable, with $\tilde{\fe}_{i}(0)=0$, for all
$i\in\{0,\ldots,k\}$;
\item \label{enu:shockid:bij}$\tilde{\fe}_{0}\in\fespc_{0}$ is a bijection
$\reals^{p}\setmap\reals^{p}$, with $\det D\tilde{\fe}_{0}(z)>0$
for every $z\in\reals^{p}$; and
\item \label{enu:shockid:dens}$\tilde{\den}\in\denspc$ is continuously
differentiable with support $\reals^{p}$, with 
\begin{align*}
\int_{\reals^{p}}u\tilde{\den}(u)\diff u & =0 & \int_{\reals^{p}}uu^{\trans}\tilde{\den}(u)\diff u & =I_{p}.
\end{align*}
\end{enumerate}
Suppose also that the parameters $(\{\fe_{i}\}_{i=0}^{k},\den)$ that
generated $\{z_{t}\}$ are such that
\begin{enumerate}[resume]
\item \label{enu:shockid:surj}the image of the map $(\bar{z}_{1},\ldots,\bar{z}_{k})\elmap\sum_{i=1}^{k}\fe_{i}(\bar{z}_{i})$
is $\reals^{p}$;
\item \label{enu:shockid:jac}for some ${\cal I}_{1},{\cal I}_{2}\subset\{1,\ldots,k\}$
with ${\cal I}_{1}\intsect{\cal I}_{2}=\emptyset$, the Jacobian of
$\{\bar{z}_{i}\}_{i\in{\cal I}_{j}}\elmap\sum_{i\in{\cal I}_{j}}\fe_{i}(\abv z_{i})$
has rank $p$ everywhere, for $j\in\{1,2\}$; and
\item \label{enu:shockid:densrank}there exists $\{\abv u_{i}\}_{i=0}^{p}\subset\reals^{p}$
such that $D(\log\den)(\abv u_{0})=0$ and $\spn\{D(\log\den)(\abv u_{i})^{\trans}\}_{i=1}^{p}=\reals^{p}$.
\end{enumerate}
Then if $\tilde{\fe}_{i}\in\fespc_{i}$ for $i\in\{0,\ldots,k\}$,
there exists a $\tilde{\den}\in\denspc$ such that $(\tilde{c},\{\tilde{\fe}_{i}\}_{i=0}^{k},\tilde{\den})$
is observationally equivalent to $(c,\{\fe_{i}\}_{i=0}^{k},\den)$,
if and only if
\begin{align}
\tilde{c} & =\Upsilon c & \tilde{\fe}_{i}(z) & =\Upsilon\fe_{i}(z)\sep\forall z\in\reals^{p}\sep i\in\{0,\ldots,k\}\label{eq:orthogtr}
\end{align}
for some orthogonal matrix $\Upsilon\in\reals^{p\times p}$.
\end{thm}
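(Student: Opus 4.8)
The plan is to obtain \thmref{shockid} as a direct application of \propref{matzkin}, by exhibiting the structural nonlinear VAR as a special case of the additively separable simultaneous equations model \eqref{nlsem}. I would make the identification $G=p$, $Y=z_{t}$, and split the conditioning lags into two exogenous blocks using the disjoint index sets $\mathcal{I}_{1},\mathcal{I}_{2}\subset\{1,\ldots,k\}$ furnished by \enuref{shockid:jac}: let $\mathcal{I}_{0}\defeq\{1,\ldots,k\}\setminus(\mathcal{I}_{1}\cup\mathcal{I}_{2})$, take $X_{1}$ to collect the $z_{t-i}$ with $i\in\mathcal{I}_{1}\cup\mathcal{I}_{0}$ and $X_{2}$ to collect those with $i\in\mathcal{I}_{2}$, and set $r_{0}\defeq\fe_{0}$, $r_{1}(x_{1})\defeq-c-\sum_{i\in\mathcal{I}_{1}\cup\mathcal{I}_{0}}\fe_{i}(z_{t-i})$, $r_{2}(x_{2})\defeq-\sum_{i\in\mathcal{I}_{2}}\fe_{i}(z_{t-i})$ (and likewise for the tilde'd parameters, with $\Gamma_{0}\defeq\fespc_{0}$, $\Gamma_{1},\Gamma_{2}$ the induced function classes, and $\Phi\defeq\denspc$). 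Under this correspondence the model reads $u_{t}=r_{0}(Y)+r_{1}(X_{1})+r_{2}(X_{2})$; since $\{u_{t}\}$ is i.i.d., $u_{t}$ is independent of $\vec{z}_{t-1}$, and because $\fe_{0}$ is a diffeomorphism the conditional density of $z_{t}$ given $\vec{z}_{t-1}$ equals $\den(\fe_{0}(z_{t})-c-\sum_{i=1}^{k}\fe_{i}(z_{t-i}))|\det D\fe_{0}(z_{t})|$, which is exactly the SEM conditional density of $Y$ given $(X_{1},X_{2})$. Hence the two notions of observational equivalence coincide.

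The next step is to check the hypotheses of \propref{matzkin} for this correspondence. \assref{ps} holds on the parameter space: twice continuous differentiability and the normalisations of the unobservable density are \enuref{shockid:smooth} and \enuref{shockid:dens} (the sign flips and additive constant in $r_{1},r_{2}$ clearly preserve smoothness), while bijectivity of $\tilde r_{0}=\tilde{\fe}_{0}$ together with $\det D\tilde{\fe}_{0}>0$ is \enuref{shockid:bij}. For \assref{sem}: its surjectivity requirement on $(x_{1},x_{2})\mapsto r_{1}(x_{1})+r_{2}(x_{2})=-c-\sum_{i=1}^{k}\fe_{i}(z_{t-i})$ is equivalent to \enuref{shockid:surj}, and the existence of a critical point and a spanning family of score vectors of $\log\den$ is \enuref{shockid:densrank} (with $\abv u_{0}$ in the role of $u^{\ast}$). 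The one point needing an argument is the rank condition $\rank Dr_{j}=p$ for $j\in\{1,2\}$: for $j=2$, $Dr_{2}$ is, up to sign, the Jacobian of $\{z_{t-i}\}_{i\in\mathcal{I}_{2}}\mapsto\sum_{i\in\mathcal{I}_{2}}\fe_{i}$, which has rank $p$ everywhere by \enuref{shockid:jac}; for $j=1$, $Dr_{1}$ contains as a sub-block of columns the Jacobian of $\{z_{t-i}\}_{i\in\mathcal{I}_{1}}\mapsto\sum_{i\in\mathcal{I}_{1}}\fe_{i}$, again of rank $p$ by \enuref{shockid:jac}, so $Dr_{1}$ also has full row rank. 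This is where \enuref{shockid:jac} earns its keep: it is precisely what allows the $k$-lag structure to be collapsed into the two exogenous blocks demanded by Matzkin's framework, with the surplus lags in $\mathcal{I}_{0}$ absorbed harmlessly into $X_{1}$. I expect this verification, rather than any deep analytic point, to be the only real obstacle, and it is essentially bookkeeping.

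With the hypotheses in place, \propref{matzkin} delivers an orthogonal $Q\in\reals^{p\times p}$ with $\tilde{\fe}_{0}(y)-\tilde c-\sum_{i=1}^{k}\tilde{\fe}_{i}(x_{i})=Q[\fe_{0}(y)-c-\sum_{i=1}^{k}\fe_{i}(x_{i})]$ for all $(y,x_{1},\ldots,x_{k})\in\reals^{p(k+1)}$. Evaluating at $x_{1}=\dots=x_{k}=0$ and using $\fe_{i}(0)=\tilde{\fe}_{i}(0)=0$ gives $\tilde{\fe}_{0}(y)-\tilde c=Q[\fe_{0}(y)-c]$; putting $y=0$ yields $\tilde c=Qc$, and hence $\tilde{\fe}_{0}=Q\fe_{0}$. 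For each $j\in\{1,\ldots,k\}$, evaluating instead at $y=0$, $x_{j}=w$ and all remaining arguments zero gives $-\tilde c-\tilde{\fe}_{j}(w)=Q[-c-\fe_{j}(w)]$, i.e.\ $\tilde{\fe}_{j}(w)=Q\fe_{j}(w)$. Thus \eqref{orthogtr} holds with $\Upsilon\defeq Q$, which proves the nontrivial implication; this uses the reverse (nontrivial) direction of \propref{matzkin}. The converse is immediate: \eqref{orthogtr} gives $\tilde u_{t}\defeq\tilde{\fe}_{0}(z_{t})-\tilde c-\sum_{i=1}^{k}\tilde{\fe}_{i}(z_{t-i})=\Upsilon u_{t}$, so taking $\tilde{\den}$ to be the law of $\Upsilon u_{t}$ — which satisfies \enuref{shockid:dens} and, being a rotation of $u_{t}$, again lies in $\denspc$ — makes $(\tilde c,\{\tilde{\fe}_{i}\}_{i=0}^{k},\tilde{\den})$ induce the same conditional density of $z_{t}$ given $\vec{z}_{t-1}$ as $(c,\{\fe_{i}\}_{i=0}^{k},\den)$, so the two parametrisations are observationally equivalent.
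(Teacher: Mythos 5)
Your proposal is correct and follows essentially the same route as the paper: both reduce the theorem to \propref{matzkin} by taking $Y=z_{t}$ and splitting the lags into two exogenous blocks built from $\mathcal{I}_{1}$ and $\mathcal{I}_{2}$ (the paper absorbs the leftover lags by taking $\mathcal{I}_{1}\cup\mathcal{I}_{2}=\{1,\ldots,k\}$ without loss of generality, which is exactly your $\mathcal{I}_{0}$ device), and then verify \assref{ps} and \assref{sem} from conditions \enuref{shockid:smooth}--\enuref{shockid:densrank} in the same way. The only differences are cosmetic — you place the intercept in $r_{1}$ rather than $r_{0}$, and you spell out the evaluation-at-zero step (using $\fe_{i}(0)=0$) that extracts \eqref{orthogtr} from the identity delivered by \propref{matzkin}, a step the paper leaves implicit.
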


Arguably the most significant of the preceding conditions are \ref{enu:shockid:surj}
and \ref{enu:shockid:jac}, which we briefly remark on here. Note
that these refer only to the parameters that actually generated $\{z_{t}\}$,
and not the entirety of the model parameter space. Condition~\ref{enu:shockid:surj}
requires that the conditional mean of $\fe_{0}(z_{t})$, as given
by $\sum_{i=1}^{k}\fe_{i}(z_{t-i})$, should be able to take any value
in $\reals^{p}$: we may interpret this is as requiring that there
be sufficient `global' dependence of the r.h.s.\ on the lags of
$z_{t}$. This excludes, for example, the situation where the r.h.s.\ is
identically zero (excepting $u_{t}$), in which case $\fe_{0}$ would
clearly not be identified in any meaningful sense. Condition~\ref{enu:shockid:jac}
is a kind of rank condition, which requires that we be able to partition
the $k$ lags of $z_{t}$ in such a way, between the sets ${\cal I}_{1}$
and ${\cal I}_{2}$, that the Jacobians of both $\sum_{i\in{\cal I}_{1}}\fe_{i}(\abv z_{i})$
and $\sum_{i\in{\cal I}_{2}}\fe_{i}(\abv z_{i})$ have rank $p$:
there is in this sense sufficient `local' dependence of the r.h.s.\ on
the lags of $z_{t}$.
\begin{proof}[Proof of \thmref{shockid}]
The forward implication is trivial; it remains to prove the reverse
implication. Without loss of generality, we may take ${\cal I}_{1}$
and ${\cal I}_{2}$ to be such that ${\cal I}_{1}\union{\cal I}_{2}=\{1,\ldots,k\}$.

Fix $t\in\naturals$. The idea is to apply \propref{matzkin} by taking
\begin{align}
Y & =z_{t}, & X_{j} & =\{z_{t-i}\}_{i\in{\cal I}_{j}}\text{ for }j\in\{1,2\}.\label{eq:matzkinmapping}
\end{align}
We also define
\begin{align*}
\bar{Y} & \defeq\bar{z}_{t} & \bar{X}_{j} & \defeq\{\bar{z}_{t-i}\}_{i\in{\cal I}_{j}}\text{ for }j\in\{1,2\},
\end{align*}
where $\bar{z}_{t-i}\in\reals^{p}$ is non-random, for $i\in\{0,\ldots,k\}$.
For each $(\tilde{c},\{\tilde{\fe}_{i}\}_{i=0}^{k},\tilde{\den})\in\set P$,
define the maps $\{\tilde{r}_{i}\}_{i=0}^{2}$ such that
\begin{align}
\tilde{r}_{0}(\bar{Y}) & =\tilde{\fe}_{0}(\bar{z}_{t})-\tilde{c}, & \tilde{r}_{1}(\bar{X}_{1}) & =-\sum_{i\in{\cal I}_{1}}\tilde{\fe}_{i}(\bar{z}_{t-i}), & \tilde{r}_{2}(\bar{X}_{2}) & =-\sum_{i\in{\cal I}_{2}}\tilde{\fe}_{i}(\bar{z}_{t-i});\label{eq:rtildes}
\end{align}
for all $\bar{z}_{t-i}\in\reals^{p}$, for all $i\in\{0,\ldots,k\}$,
so that under the parameters $(\tilde{c},\{\tilde{\fe}_{i}\}_{i=0}^{k},\tilde{\den})$,
\begin{equation}
\tilde{U}=\tilde{r}_{0}(Y)+\tilde{r}_{1}(X_{1})+\tilde{r}_{2}(X_{2})=\tilde{\fe}_{0}(z_{t})-\tilde{c}-\sum_{i=1}^{k}\tilde{\fe}_{i}(z_{t-i})=\tilde{u}_{t}\label{eq:utilde}
\end{equation}
has density $\tilde{\den}$. Let $\{r_{j}\}_{j=0}^{2}$ be defined
as in \eqref{rtildes}, when $\tilde{\fe}_{i}=\fe_{i}$ for $i\in\{0,\ldots,k\}$.

Conditions \ref{enu:shockid:smooth} and \ref{enu:shockid:bij} above
ensure that each $\tilde{r}_{j}$ satisfies \ref{ass:ps}\ass{.}\ref{enu:ps:twicediff}\ass{--}\ref{enu:ps:jac},
for $j\in\{0,1,2\}$. By condition \ref{enu:shockid:dens}, the density
of the associated $\tilde{U}$ in \eqref{utilde} satisfies \ref{ass:ps}\ass{.}\ref{enu:ps:densdiff}\ass{--}\ref{enu:ps:norm}.
Regarding the actual values of the model parameters, conditions \ref{enu:shockid:surj}
and \ref{enu:shockid:jac} imply that $\{r_{j}\}_{j=0}^{2}$ satisfy
\ref{ass:sem}\ass{.}\ref{enu:sem:surj}--\ref{enu:sem:rank}, while
\ref{enu:shockid:densrank} implies that the density of $U$ satisfies
\ref{ass:sem}\ass{.}\ref{enu:sem:mode}\ass{--}\ref{enu:sem:span}.
Thus the requirements of \propref{matzkin} are satisfied.

Observe that the conditional density of $Y$ given $(X_{1},X_{2})$
is the same as the conditional density of $z_{t}$ given $(z_{t-1},\ldots,z_{t-k})$,
under the mapping \eqref{matzkinmapping}. Thus if $(\tilde{c},\{\tilde{\fe}_{i}\}_{i=0}^{k},\tilde{\den})$
and $(c,\{\fe_{i}\}_{i=0}^{k},\den)$ are observationally equivalent,
so too are the associated $(\{\tilde{r}_{j}\}_{j=0}^{2},f_{\tilde{U}})$
and $(\{r_{j}\}_{j=0}^{2},f_{U})$. Hence by \propref{matzkin}, in
that case there exists an orthogonal $\Upsilon\in\reals^{p\times p}$
such that
\begin{align*}
\tilde{\fe}_{0}(\bar{z}_{t})-\tilde{c}-\sum_{i=1}^{k}\tilde{\fe}_{i}(\bar{z}_{t-i}) & =\tilde{r}_{0}(\bar{Y})+\tilde{r}_{1}(\bar{X}_{1})+\tilde{r}_{2}(\bar{X}_{2})\\
 & =\Upsilon[r_{0}(\bar{Y})+r_{1}(\bar{X}_{1})+r_{2}(\bar{X}_{2})]=\Upsilon\left[\fe_{0}(\bar{z}_{t})-c-\sum_{i=1}^{k}\fe_{i}(\bar{z}_{t-i})\right]
\end{align*}
for all $\bar{z}_{t-i}\in\reals^{p}$, for all $i\in\{0,\ldots,k\}$.
\end{proof}

\section{Auxiliary lemmas}

\label{app:auxiliary}

As noted in \remref{vecm-representation}\ref{subrem:higher-order},
we shall not generally indicate how our results may specialise in
cases where $k\in\{1,2\}$; the reader is accordingly advised to treat
these models as particular degenerate instances of the VAR($k$) model
with $k\geq3$.
\begin{lem}
\label{lem:lipcondition}Suppose $f:\reals^{d_{x}}\setmap\reals^{d_{y}}$.
Then the following are equivalent:
\begin{enumerate}
\item $f$ is Lipschitz, with Lipschitz constant $K$;
\item for every $x,x^{\prime}\in\reals^{d_{x}}$, there exists a $\beta\in\reals^{d_{x}\times d_{y}}$
with $\smlnorm{\beta}\leq K$ such that
\begin{equation}
f(x)-f(x^{\prime})=\beta^{\trans}(x-x^{\prime})\label{eq:f-x}
\end{equation}
\end{enumerate}
\end{lem}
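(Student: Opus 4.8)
The plan is to prove the two implications separately; both are elementary.

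For \textup{(ii)} $\Rightarrow$ \textup{(i)}, I would fix $x,x^{\prime}\in\reals^{d_{x}}$, invoke \textup{(ii)} to obtain some $\beta$ with $\smlnorm{\beta}\leq K$ and $f(x)-f(x^{\prime})=\beta^{\trans}(x-x^{\prime})$, and then bound
\[
\smlnorm{f(x)-f(x^{\prime})}=\smlnorm{\beta^{\trans}(x-x^{\prime})}\leq\smlnorm{\beta}\,\smlnorm{x-x^{\prime}}\leq K\smlnorm{x-x^{\prime}},
\]
using that the matrix norm in force is the one induced by the Euclidean norm, so that $\smlnorm{\beta^{\trans}}=\smlnorm{\beta}$; this is exactly the Lipschitz bound.

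For \textup{(i)} $\Rightarrow$ \textup{(ii)}, fix $x,x^{\prime}$. When $x=x^{\prime}$ both sides of \eqref{f-x} vanish, so $\beta=0$ works. When $x\neq x^{\prime}$, I would set $\delta\defeq f(x)-f(x^{\prime})$ and exhibit the explicit rank-one matrix
\[
\beta^{\trans}\defeq\frac{1}{\smlnorm{x-x^{\prime}}^{2}}\,\delta\,(x-x^{\prime})^{\trans},
\]
noting $\beta\in\reals^{d_{x}\times d_{y}}$ and $\beta^{\trans}(x-x^{\prime})=\delta$, which verifies \eqref{f-x}. To bound $\smlnorm{\beta}$, I would observe that for any unit vector $w$, Cauchy--Schwarz gives $\smlnorm{\beta^{\trans}w}=\smlnorm{x-x^{\prime}}^{-2}\abs{(x-x^{\prime})^{\trans}w}\,\smlnorm{\delta}\leq\smlnorm{x-x^{\prime}}^{-1}\smlnorm{\delta}$, whence $\smlnorm{\beta}=\smlnorm{\beta^{\trans}}\leq\smlnorm{f(x)-f(x^{\prime})}/\smlnorm{x-x^{\prime}}\leq K$ by the Lipschitz hypothesis.

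I do not anticipate any genuine obstacle; the only step requiring a moment's care is checking that the rank-one matrix constructed above has spectral norm no larger than the difference quotient $\smlnorm{f(x)-f(x^{\prime})}/\smlnorm{x-x^{\prime}}$, which is immediate from Cauchy--Schwarz.
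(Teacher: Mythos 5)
Your proof is correct and is essentially the paper's own argument: the same rank-one construction $\beta^{\trans}=\delta_{f}(\delta_{x}^{\trans}\delta_{x})^{-1}\delta_{x}^{\trans}$ for the nontrivial direction, with the same norm bound (the paper computes the supremum exactly rather than invoking Cauchy--Schwarz, but that is the same calculation). Your explicit treatment of the case $x=x^{\prime}$ is a minor tidying-up that the paper leaves implicit.
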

Recall the definitions of $\{\ga_{i}\}_{i=0}^{k-1}$, $\pi$ and $\co$
given in \eqref{ge}, \eqref{pi-def} and \eqref{co-map}.
\begin{lem}
\label{lem:unstruct}Suppose $(c,\{\fe_{i}\}_{i=0}^{k})$ belongs
to class $\cvar_{r}(\alpha,\abv b,\abv{\rho})$. Then the model $(c^{\ast},\{\fe_{i}^{\ast}\}_{i=0}^{k})$,
with $c^{\ast}=c$ and $\fe_{i}^{\ast}=\fe_{i}\compose\fe_{0}^{-1}$
for $i\in\{0,\ldots,k\}$ belongs to class $\cvar_{r}^{\ast}(\alpha,\abv b,\abv{\rho})$,
with $\ga_{i}^{\ast}=\ga_{i}\compose\fe_{0}^{-1}$ for $i\in\{0,\ldots,k-1\}$,
$\pi^{\ast}=\pi\compose\fe_{0}^{-1}$, and
\[
\co^{\ast}=\begin{bmatrix}\ct^{\ast}\\
\eq^{\ast}
\end{bmatrix}=\begin{bmatrix}\ct\compose\fe_{0}^{-1}\\
\eq\compose\fe_{0}^{-1}
\end{bmatrix}=\co\compose\fe_{0}^{-1}.
\]
Moreover, we may take ${\cal B^{\ast}}={\cal B}$, where ${\cal B}$
and ${\cal B}^{\ast}$ satisfy \ref{enu:cvar:jsr} for models $(c,\{\fe_{i}\}_{i=0}^{k})$
and $(c^{\ast},\{\fe_{i}^{\ast}\}_{i=0}^{k})$ respectively.
\end{lem}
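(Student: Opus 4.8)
The plan is to verify directly that $(c^{\ast},\{\fe_i^{\ast}\}_{i=0}^k)$ satisfies conditions \ref{enu:cvar:homeo}--\ref{enu:cvar:jsr} of \defref{cvar} with the parameters $(\alpha,\abv b,\abv\rho)$, and in addition that $\fe_0^{\ast}=I_p$, so that the model belongs to the starred subclass $\cvar_r^{\ast}(\alpha,\abv b,\abv\rho)$; everything except \ref{enu:cvar:jsr} is immediate once the relevant definitions are composed with $\fe_0^{-1}$ on the right. First, $\fe_0^{\ast}=\fe_0\compose\fe_0^{-1}=I_p$, which is trivially a homeomorphism: this gives \ref{enu:cvar:homeo} and, once the rest is checked, places the model in the starred subclass. (Since $\fe_0(0)=0$ forces $\fe_0^{-1}(0)=0$, we also get $\fe_i^{\ast}(0)=\fe_i(\fe_0^{-1}(0))=\fe_i(0)=0$, so the location normalisation carries over.) Substituting $\fe_i^{\ast}=\fe_i\compose\fe_0^{-1}$ into \eqref{ge} yields $\ga_j^{\ast}=-\sum_{i=j+1}^k\fe_i\compose\fe_0^{-1}=\ga_j\compose\fe_0^{-1}$ for $j\in\{0,\ldots,k-1\}$, and substituting into \eqref{pi-def} yields, for every $z\in\reals^p$,
\[
\pi^{\ast}(z)=-\fe_0^{\ast}(z)+\sum_{i=1}^k\fe_i^{\ast}(z)=-\fe_0(\fe_0^{-1}(z))+\sum_{i=1}^k\fe_i(\fe_0^{-1}(z))=\pi(\fe_0^{-1}(z)),
\]
i.e.\ $\pi^{\ast}=\pi\compose\fe_0^{-1}$.

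For \ref{enu:cvar:rank}, the hypothesis provides $\mu\in\reals^r$ and $\eq\colon\reals^p\setmap\reals^r$ with $c=\alpha\mu$ and $\pi=\alpha\eq$; taking $\mu^{\ast}=\mu$ and $\eq^{\ast}\defeq\eq\compose\fe_0^{-1}$ gives $c^{\ast}=c=\alpha\mu^{\ast}$ and $\pi^{\ast}(z)=\pi(\fe_0^{-1}(z))=\alpha\eq(\fe_0^{-1}(z))=\alpha\eq^{\ast}(z)$, as required (and this $\eq^{\ast}$ is the unique such map, since $\rank\alpha=r$). Feeding these into \eqref{ct-def} and \eqref{co-map},
\[
\ct^{\ast}=\alpha_\perp^\trans\Bigl[\fe_0^{\ast}-\sum_{i=1}^{k-1}\ga_i^{\ast}\Bigr]=\Bigl(\alpha_\perp^\trans\Bigl[\fe_0-\sum_{i=1}^{k-1}\ga_i\Bigr]\Bigr)\compose\fe_0^{-1}=\ct\compose\fe_0^{-1},
\]
and hence $\co^{\ast}=\co\compose\fe_0^{-1}$, which is the ``moreover'' claim.

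The crux is \ref{enu:cvar:jsr}. Writing $\b{\eq}^{\ast}=(\eq^{\ast\trans},\b{\ga}^{\ast\trans})^\trans$ with $\b{\ga}^{\ast}=[\ga_j^{\ast}]_{j=1}^{k-1}$, the computations above give $\b{\eq}^{\ast}=\b{\eq}\compose\fe_0^{-1}$; since $(\fe_0^{\ast})^{-1}=I_p$, this yields $\b{\eq}^{\ast}\compose(\fe_0^{\ast})^{-1}=\b{\eq}\compose\fe_0^{-1}$. Moreover $\b{\alpha}$ of \eqref{bold-alpha} and $\b D_0$ of \eqref{bold-theta} are assembled only from $\alpha$ and the fixed matrices $E,D$, hence are identical for the two models. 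Consequently the left-hand side of \eqref{gradient}, and the matrix $I_{p(k-1)+r}+\b{\beta}^\trans\b{\alpha}$ appearing in \eqref{jsr}, are the same for $(c,\{\fe_i\}_{i=0}^k)$ and for $(c^{\ast},\{\fe_i^{\ast}\}_{i=0}^k)$. Therefore the closed set ${\cal B}$ that \ref{enu:cvar:jsr} furnishes for the original model, with $\max_{\b{\beta}\in{\cal B}}\smlnorm{\b{\beta}}\leq\abv b$ and $\rho_{\jsr}(\{I_{p(k-1)+r}+\b{\beta}^\trans\b{\alpha}\mid\b{\beta}\in{\cal B}\})\leq\abv\rho$, serves verbatim for the starred model, so we may take ${\cal B^{\ast}}={\cal B}$. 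This completes the verification that $(c^{\ast},\{\fe_i^{\ast}\}_{i=0}^k)\in\cvar_r^{\ast}(\alpha,\abv b,\abv\rho)$.

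I do not expect a genuinely hard step: the lemma is essentially bookkeeping, and its only real content is that \ref{enu:cvar:jsr} was phrased in terms of the composite $\b{\eq}\compose\fe_0^{-1}$ precisely so as to be invariant under the ``unstructuring'' map $\fe_i\mapsto\fe_i\compose\fe_0^{-1}$. The one thing to watch is the handling of domains: one should note that $\fe_0^{-1}$ is well-defined and onto $\reals^p$ (by \ref{enu:cvar:homeo} for the original model), so that as $z$ ranges over $\reals^p$ the argument $\fe_0^{-1}(z)$ also ranges over all of $\reals^p$, and no instance of the defining conditions is lost in translation.
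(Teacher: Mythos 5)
Your proposal is correct and follows essentially the same route as the paper's own proof: verify $\fe_{0}^{\ast}=I_{p}$ for \ref{enu:cvar:homeo}, compose with $\fe_{0}^{-1}$ to get \ref{enu:cvar:rank}, and observe that $\b{\eq}^{\ast}\compose(\fe_{0}^{\ast})^{-1}=\b{\eq}\compose\fe_{0}^{-1}$ so that \ref{enu:cvar:jsr} holds verbatim with ${\cal B}^{\ast}={\cal B}$. Your closing remarks on the location normalisation and the surjectivity of $\fe_{0}^{-1}$ are sensible additional bookkeeping that the paper leaves implicit.
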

Recall the definition of $\b{\alpha}_{\perp}$ from \eqref{bold-alpha}.
\begin{lem}
\label{lem:invert}Let $\alpha\in\reals^{p\times r}$ with $\rank\alpha=r$,
$\abv b<\infty$ and $\abv{\rho}<1$. Then 
\[
\sup_{\b{\beta}\in{\cal B}}\norm{\begin{bmatrix}\b{\alpha}_{\perp}^{\trans}\\
\b{\beta}^{\trans}
\end{bmatrix}^{-1}}<\infty
\]
\end{lem}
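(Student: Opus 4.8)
The plan is to deduce the uniform bound from compactness of $\mathcal{B}$ together with a continuity argument, so that the substance of the proof reduces to showing that each stacked matrix $\begin{bmatrix}\b{\alpha}_{\perp}^{\trans}\\\b{\beta}^{\trans}\end{bmatrix}$, with $\b{\beta}\in\mathcal{B}$ (where $\mathcal B$ is a set satisfying \ref{enu:cvar:jsr} with parameters $(\alpha,\abv b,\abv{\rho})$), is invertible. First I would note that $\mathcal{B}$ is compact: it is closed by \ref{enu:cvar:jsr}, and the bound $\max_{\b{\beta}\in\mathcal{B}}\smlnorm{\b{\beta}}\leq\abv{b}<\infty$ confines it to a bounded ball. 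The map $\b{\beta}\mapsto\begin{bmatrix}\b{\alpha}_{\perp}^{\trans}\\\b{\beta}^{\trans}\end{bmatrix}$ is affine (hence continuous), matrix inversion is continuous on the open set of invertible matrices, and $\smlnorm{\cdot}$ is continuous; so once every such matrix is known to be invertible, $\b{\beta}\mapsto\smlnorm{\begin{bmatrix}\b{\alpha}_{\perp}^{\trans}\\\b{\beta}^{\trans}\end{bmatrix}^{-1}}$ is a continuous real-valued function on the compact set $\mathcal{B}$, and therefore attains a finite supremum.

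It remains to verify invertibility of the $pk\times pk$ matrix $\begin{bmatrix}\b{\alpha}_{\perp}^{\trans}\\\b{\beta}^{\trans}\end{bmatrix}$ for fixed $\b{\beta}\in\mathcal{B}$. I would record three elementary facts about the augmented matrices of \eqref{bold-alpha}: (a) $\b{\alpha}_{\perp}^{\trans}\b{\alpha}=0$, by a direct block computation using $\alpha_{\perp}^{\trans}\alpha=0$ and the cancellation $E^{\trans}I_{p(k-1)}=I_{p}E^{\trans}$ of the off-diagonal blocks; (b) $\rank\b{\alpha}_{\perp}=q=p-r$ and $\rank\b{\alpha}=p(k-1)+r$, by inspection of the block structure (the embedded identity blocks force full column rank); and, combining (a) and (b) with $q+[p(k-1)+r]=pk$, (c) $\ker\b{\alpha}_{\perp}^{\trans}=\col\b{\alpha}$ (the inclusion $\supseteq$ is (a), and the two subspaces have the same dimension $pk-q=p(k-1)+r$ by (b)). I would also need that $\b{\beta}^{\trans}\b{\alpha}$, a square matrix of size $p(k-1)+r$, is invertible: this follows from \eqref{jsr}, since $\rho(I_{p(k-1)+r}+\b{\beta}^{\trans}\b{\alpha})\leq\abv{\rho}<1$ (the joint spectral radius dominates the spectral radius of every member of the set), so $1$ is not an eigenvalue of $I_{p(k-1)+r}+\b{\beta}^{\trans}\b{\alpha}$, equivalently $0$ is not an eigenvalue of $\b{\beta}^{\trans}\b{\alpha}$.

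Given these facts, invertibility follows from a short kernel argument: if $\b{\alpha}_{\perp}^{\trans}x=0$ and $\b{\beta}^{\trans}x=0$, then $x\in\ker\b{\alpha}_{\perp}^{\trans}=\col\b{\alpha}$ by (c), so $x=\b{\alpha}w$ for some $w\in\reals^{p(k-1)+r}$; substituting into $\b{\beta}^{\trans}x=0$ gives $(\b{\beta}^{\trans}\b{\alpha})w=0$, whence $w=0$ and $x=0$. Since $\begin{bmatrix}\b{\alpha}_{\perp}^{\trans}\\\b{\beta}^{\trans}\end{bmatrix}$ is square, this injectivity is equivalent to invertibility, which completes the reduction and hence the proof.

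None of the individual steps is genuinely hard; the main things to be careful about are the dimension bookkeeping — confirming that the stacked matrix is really $pk\times pk$ and that $\b{\beta}^{\trans}\b{\alpha}$ is precisely the square matrix on which the joint spectral radius condition \eqref{jsr} acts — and the one substantive input, the passage from the joint spectral radius bound to the invertibility of $\b{\beta}^{\trans}\b{\alpha}$ for each individual $\b{\beta}\in\mathcal B$.
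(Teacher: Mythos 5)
Your proof is correct, but it takes a genuinely different route from the paper's. The paper factorises $\bigl[\begin{smallmatrix}\b{\alpha}_{\perp}^{\trans}\\ \b{\beta}^{\trans}\end{smallmatrix}\bigr]$ explicitly into a product of block matrices, reduces everything to the invertibility of $\b{\beta}^{\trans}\b{\alpha}$, and then bounds $\smlnorm{(\b{\beta}^{\trans}\b{\alpha})^{-1}}$ \emph{uniformly} by passing to an extremal norm $\smlnorm{\cdot}_{\delta}$ (via Jungers' Prop.~1.4) under which $\smlnorm{I+\b{\beta}^{\trans}\b{\alpha}}_{\delta}\leq\abv{\rho}+\delta<1$, and summing the Neumann series. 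You instead argue pointwise invertibility (your kernel argument, which is sound: $\b{\alpha}_{\perp}^{\trans}\b{\alpha}=0$, the rank counts, and the fact that the JSR dominates each individual spectral radius so that $0\notin\sigma(\b{\beta}^{\trans}\b{\alpha})$) and then invoke compactness of ${\cal B}$ plus continuity of $A\mapsto\smlnorm{A^{-1}}$ on the invertible matrices. Your version is more elementary — it needs neither the block factorisation nor the existence of an extremal norm. What the paper's argument buys is a quantitative bound depending only on $(\alpha,\abv b,\abv{\rho})$ and not on the particular closed set ${\cal B}$; this stronger uniformity is what is actually invoked in \lemref{co-cts}, where the bi-Lipschitz constant of $\co$ is asserted to depend only on $\alpha$, $\abv b$ and $\abv{\rho}$. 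Your compactness argument, as written, only delivers finiteness for the given ${\cal B}$ (which is all the lemma literally claims); to recover the uniform version you would run the same argument over the maximal admissible set $\{\b{\beta}\mid\smlnorm{\b{\beta}}\leq\abv b\sep\rho(I_{p(k-1)+r}+\b{\beta}^{\trans}\b{\alpha})\leq\abv{\rho}\}$, which is compact and contains every ${\cal B}$ satisfying \eqref{jsr}, since the JSR condition forces the pointwise spectral radius bound. One incidental point in your favour: your dimension bookkeeping ($\b{\beta}\in\reals^{pk\times[p(k-1)+r]}$, stacked matrix $pk\times pk$) is right, and silently corrects the typo ${\cal B}\subset\reals^{p\times[p(k-1)+r]}$ in the statement of \enuref{cvar:jsr}.
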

Recall the definition of a bi-Lipschitz function given in \subsecref{high-level}.
\begin{lem}
\label{lem:co-cts}Suppose $(c,\{f_{i}\}_{i=0}^{k}\}$ belongs to
class $\cvar_{r}(\alpha,\abv b,\abv{\rho})$, for some $\abv b<\infty$
and $\abv{\rho}<1$. Then $\fe_{i}$ is continuous for $i\in\{1,\ldots,k-1\}$,
and the map $\co:\reals^{p}\setmap\reals^{p}$ in \eqref{co-map}
is a homeomorphism. If in addition $\fe_{0}$ is bi-Lipschitz, with
bi-Lipschitz constant $C_{\fe_{0}}$, then $\co$ is also bi-Lipschitz,
with bi-Lipschitz constant that depends only on $\alpha$, $\abv b$,
$\abv{\rho}$ and $C_{\fe_{0}}$.
\end{lem}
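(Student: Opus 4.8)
The plan is to remove $\fe_{0}$ from the problem using \lemref{unstruct}, reducing everything to the case $\fe_{0}=I_{p}$, and then, in that case, to realise $\co$ as (essentially) the restriction of an auxiliary bi-Lipschitz self-map of the larger space $\reals^{p}\times\reals^{p(k-1)}$. Concretely, \lemref{unstruct} tells us that if $(c,\{\fe_{i}\}_{i=0}^{k})\in\cvar_{r}(\alpha,\abv b,\abv{\rho})$ then the unstructured model $(c,\{\fe_{i}\compose\fe_{0}^{-1}\}_{i=0}^{k})$ lies in $\cvar_{r}^{\ast}(\alpha,\abv b,\abv{\rho})$, with $\ga_{i}^{\ast}=\ga_{i}\compose\fe_{0}^{-1}$ and $\co^{\ast}=\co\compose\fe_{0}^{-1}$. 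Granting the lemma when $\fe_{0}=I_{p}$ -- so that there $\fe_{i}^{\ast}$ is continuous for $i\in\{1,\dots,k-1\}$ and $\co^{\ast}$ is bi-Lipschitz with a constant $C^{\ast}$ depending only on $\alpha,\abv b,\abv{\rho}$ (as $\fe_{0}^{\ast}=I_{p}$ is bi-Lipschitz with constant $1$) -- the general statement follows by composition: $\fe_{i}=\fe_{i}^{\ast}\compose\fe_{0}$ is continuous because $\fe_{0}$ is a homeomorphism by \ref{enu:cvar:homeo}; $\co=\co^{\ast}\compose\fe_{0}$ is a composition of homeomorphisms, hence a homeomorphism; and if $\fe_{0}$ is moreover bi-Lipschitz with constant $C_{\fe_{0}}$, then $\co=\co^{\ast}\compose\fe_{0}$ is bi-Lipschitz with constant at most $C^{\ast}C_{\fe_{0}}$, which depends only on $\alpha,\abv b,\abv{\rho},C_{\fe_{0}}$.

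So suppose now $\fe_{0}=I_{p}$. The upper (Lipschitz) half of the bi-Lipschitz bound is routine: \eqref{gradient} says exactly that $\b z\mapsto\b{\eq}(z)+\b D_{0}\b z$ admits the increment representation of part (ii) of \lemref{lipcondition} with bound $\abv b$, hence is $\abv b$-Lipschitz; specialising to $\b{\zeta}=0$, on which $\b D_{0}$ acts as zero, shows $\b{\eq}$ is $\abv b$-Lipschitz, so $\eq$ and each $\ga_{j}$, $j\in\{1,\dots,k-1\}$, are Lipschitz. By \ref{enu:cvar:rank} and \eqref{pi-def}, $\ga_{0}(z)=-\alpha\eq(z)-z$ is Lipschitz as well, and since $\fe_{j}=\ga_{j}-\ga_{j-1}$ for $j\in\{1,\dots,k-1\}$ by \eqref{ge}, every such $\fe_{j}$ is Lipschitz; consequently $\ct$ (from \eqref{ct-def}) and $\co$ are Lipschitz, with constant depending only on $\alpha,\abv b$ (and $p,r,k$).

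For the lower bound -- the heart of the matter -- I would introduce $\b G:\reals^{p}\times\reals^{p(k-1)}\setmap\reals^{p}\times\reals^{p(k-1)}$ defined by $\b G(\b z)\defeq(\b{\alpha}_{\perp}^{\trans}\b z,\b{\eq}(z)+\b D_{0}\b z)$, with $\b{\alpha}_{\perp}$ as in \eqref{bold-alpha}. Given $\b z,\b z'$, \eqref{gradient} yields a $\b{\beta}\in\mathcal{B}$ with $\b G(\b z)-\b G(\b z')=\bigl[\begin{smallmatrix}\b{\alpha}_{\perp}^{\trans}\\ \b{\beta}^{\trans}\end{smallmatrix}\bigr](\b z-\b z')$, and \lemref{invert} supplies a finite $M$, depending only on $\alpha,\abv b,\abv{\rho}$, with $\smlnorm{\bigl[\begin{smallmatrix}\b{\alpha}_{\perp}^{\trans}\\ \b{\beta}^{\trans}\end{smallmatrix}\bigr]^{-1}}\le M$ for every $\b{\beta}\in\mathcal{B}$; hence $\smlnorm{\b G(\b z)-\b G(\b z')}\ge M^{-1}\smlnorm{\b z-\b z'}$. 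Next I would set $\iota(z)\defeq(z,\b{\zeta}(z))$ with $\b{\zeta}(z)\defeq[\sum_{i=j}^{k-1}\ga_{i}(z)]_{j=1}^{k-1}$; a short computation using the bidiagonal telescoping structure of the matrix $D$ (see \lemref{vecm}) gives $D\b{\zeta}(z)=-\b{\ga}(z)$, so the last $p(k-1)$ coordinates of $\b G(\iota(z))$ vanish while the first $q+r=p$ reduce precisely to $\ct(z)$ and $\eq(z)$ -- that is, $\b G(\iota(z))=(\co(z),0)$ for all $z$. Since $\smlnorm{\iota(z)-\iota(z')}\ge\smlnorm{z-z'}$, this gives $\smlnorm{\co(z)-\co(z')}=\smlnorm{\b G(\iota(z))-\b G(\iota(z'))}\ge M^{-1}\smlnorm{z-z'}$, so $\co$ is bi-Lipschitz with the asserted constant. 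Finally, a bi-Lipschitz injection $\reals^{p}\setmap\reals^{p}$ has image that is both closed (it preserves Cauchy sequences in both directions) and open (by invariance of domain); connectedness of $\reals^{p}$ then forces the image to be all of $\reals^{p}$, so $\co$ is a homeomorphism.

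The step I expect to require the most care is extracting from \lemref{invert} a bound $M$ that is genuinely uniform over all admissible sets $\mathcal{B}$ -- and hence depends on $\alpha,\abv b,\abv{\rho}$ alone -- since it is exactly this uniformity that carries over into the claimed dependence of the final bi-Lipschitz constant; the book-keeping behind the telescoping identity $D\b{\zeta}(z)=-\b{\ga}(z)$, and the verification that the first $p$ coordinates of $\b G\compose\iota$ collapse to $\co$, is the other spot where one must be slightly careful with block indices.
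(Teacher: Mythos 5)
Your proof is correct and follows essentially the same route as the paper's: the same reduction to $\fe_{0}=I_{p}$ via \lemref{unstruct}, the same auxiliary map on $\reals^{kp}$ (the paper's $\b{\co}$ is your $\b G$), the same embedding $z\mapsto(z,-D^{-1}\b{\ga}(z))$ collapsing $\b G$ to $(\co(z),0)$, and the same appeal to \lemref{lipcondition} and \lemref{invert} for the two halves of the bi-Lipschitz bound. Your explicit surjectivity argument (closed plus open image via invariance of domain, then connectedness) is a welcome addition that the paper leaves implicit when passing from ``bi-Lipschitz'' to ``homeomorphism''.
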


\section{Proofs of auxiliary lemmas}

\label{app:auxproof}

Since some of the proofs rely on the representation given in \lemref{vecm},
the proof of this result is given first.
\begin{proof}[Proof of \lemref{vecm}]
 By direct calculation,
\begin{align*}
\sum_{i=1}^{k}\fe_{i}(z_{t-i})+\ga_{0}(z_{t-1}) & =-\sum_{i=1}^{k}[\fe_{i}(z_{t-1})-\fe_{i}(z_{t-i})]\\
 & =-\sum_{i=1}^{k}\sum_{j=1}^{i-1}\Delta\fe_{i}(z_{t-j})=-\sum_{j=1}^{k-1}\sum_{i=j+1}^{k}\Delta\fe_{i}(z_{t-j})\\
 & =\sum_{j=1}^{k-1}\Delta\left[-\sum_{i=j+1}^{k}\fe_{i}(z_{t-j})\right]=\sum_{j=1}^{k-1}\Delta\ga_{j}(z_{t-j}),
\end{align*}
whence it follows from \eqref{nlVAR} and the definition of $\pi(z)$
that
\begin{align*}
\Delta\fe_{0}(z_{t}) & =c-\fe_{0}(z_{t-1})+\sum_{i=1}^{k}\fe_{i}(z_{t-i})+u_{t}\\
 & =c-[\fe_{0}(z_{t-1})+\ga_{0}(z_{t-1})]+\sum_{i=1}^{k-1}\Delta\ga_{i}(z_{t-i})+u_{t}\\
 & =c+\pi(z_{t-1})+\sum_{i=1}^{k-1}\Delta\ga_{i}(z_{t-i})+u_{t}
\end{align*}
which yields \eqref{VECM-first}. Further, since $\zeta_{j,t}=\sum_{i=j}^{k-1}\ga_{i}(z_{t-i+j})$,
\begin{align*}
\sum_{i=1}^{k-1}\Delta\ga_{i}(z_{t-i}) & =\sum_{i=1}^{k-1}\ga_{i}(z_{t-i})-\sum_{i=1}^{k-1}\ga_{i}(z_{t-1-i})\\
 & =\ga_{1}(z_{t-1})+\sum_{i=2}^{k-1}\ga_{i}(z_{t-i})-\sum_{i=1}^{k-1}\ga_{i}(z_{t-1-i})\\
 & =\ga_{1}(z_{t-1})-\sum_{i=1}^{k-1}\ga_{i}(z_{(t-2)-i+1})+\sum_{i=2}^{k-1}\ga_{i}(z_{(t-2)-i+2})\\
 & =\ga_{1}(z_{t-1})-\zeta_{1,t-2}+\zeta_{2,t-2}=\ga_{1}(z_{t-1})+D_{1}\b{\zeta}_{t-2}
\end{align*}
so that
\[
\Delta\fe_{0}(z_{t})=c+[\pi(z_{t-1})+\ga_{1}(z_{t-1})]+D_{1}\b{\zeta}_{t-2}+u_{t}
\]
as per the first $p$ elements of \eqref{VECM-system}. Finally, we
note that for $j\in\{1,\ldots,k-2\}$,
\begin{align*}
\Delta\zeta_{j,t-1} & =\zeta_{j,t-1}-\zeta_{j,t-2}=\sum_{i=j}^{k-1}\ga_{i}(z_{(t-1)-i+j})-\zeta_{j,t-2}\\
 & =\ga_{j}(z_{t-1})+\sum_{i=j+1}^{k-1}\ga_{i}(z_{(t-1)-i+j})-\zeta_{j,t-2}\\
 & =\ga_{j}(z_{t-1})+\sum_{i=j+1}^{k-1}\ga_{i}(z_{(t-2)-i+(j+1)})-\zeta_{j,t-2}\\
 & =\ga_{j}(z_{t-1})-\zeta_{j,t-2}+\zeta_{j+1,t-2}
\end{align*}
and
\[
\Delta\zeta_{k-1,t-1}=\ga_{k-1}(z_{t-1})-\zeta_{k-1,t-2},
\]
so that stacking these equations as $j$ ranges over $\{1,\ldots,k-1\}$
yields the remaining elements of \eqref{VECM-system}.
\end{proof}
\begin{proof}[Proof of \lemref{lipcondition}]
(ii) trivially implies (i); it remains to prove the reverse implication.
Let $x,x^{\prime}\in\reals^{d_{y}}$ with $x\neq x^{\prime}$ be given,
and let $\delta_{x}\defeq x-x^{\prime}\neq0$ and $\delta_{f}\defeq f(x)-f(x^{\prime})$,
so that $\smlnorm{\delta_{f}}\leq K\smlnorm{\delta_{x}}$ by (i).
Taking $\beta^{\trans}\defeq\delta_{f}(\delta_{x}^{\trans}\delta_{x})^{-1}\delta_{x}^{\trans}$
satisfies \eqref{f-x}. Finally, we note
\[
\smlnorm{\beta^{\trans}}=\smlnorm{\beta}=\sup_{\smlnorm z=1}\smlnorm{\delta_{f}(\delta_{x}^{\trans}\delta_{x})^{-1}\delta_{x}^{\trans}z}=\smlnorm{\delta_{f}}\smlabs{\delta_{x}^{\trans}\delta_{x}}^{-1}\sup_{\smlnorm z=1}\smlabs{\delta_{x}^{\trans}z}=_{(1)}\smlnorm{\delta_{f}}/\smlnorm{\delta_{x}}\leq K,
\]
where $=_{(1)}$ holds because the maximum is achieved by taking $z=\delta_{x}/\smlnorm{\delta_{x}}$. 
\end{proof}
\begin{proof}[Proof of \lemref{unstruct}]
The claimed forms of $\{\ga_{i}^{\ast}\}_{i=1}^{k-1}$, $\pi^{\ast}$
and $\co^{\ast}$ follow immediately from the definitions of these
objects, e.g.\ from \eqref{ge} we have
\[
\ga_{j}^{\ast}=-\sum_{i=j+1}^{k}\fe_{i}^{\ast}=-\sum_{i=j+1}^{k}\fe_{i}\compose\fe_{0}^{-1}=\left(-\sum_{i=j+1}^{k}\fe_{i}\right)\compose\fe_{0}^{-1}=\ga_{j}\compose\fe_{0}^{-1},
\]
etc. It remains to verify the three conditions required for $(c^{\ast},\{\fe_{i}^{\ast}\}_{i=0}^{k})$
to belong to class $\cvar_{r}^{\ast}(\alpha,\abv b,\abv{\rho})$,
as per \defref{cvar}.

\textbf{\enuref{cvar:homeo}.} $\fe_{0}^{\ast}$ is the identity by
construction, and thus trivially a homeomorphism.

\textbf{\enuref{cvar:rank}.} Holds trivially since $c^{\ast}=c=\alpha\mu$
and 
\[
\pi^{\ast}(z)=\pi[\fe_{0}^{-1}(z)]=\alpha\eq[\fe_{0}^{-1}(z)]=\alpha\eq^{\ast}(z).
\]
It follows, moreover, that $\alpha^{\ast}=\alpha$.

\textbf{\enuref{cvar:jsr}.} We have $\b{\eq}^{\ast}\compose(\fe_{0}^{\ast})^{-1}=\b{\eq}^{\ast}=\b{\eq}\compose\fe_{0}^{-1}$,
and so
\[
[\b{\eq}^{\ast}\compose(\fe_{0}^{\ast})^{-1}](z)+\b D_{0}\b z=(\b{\eq}\compose\fe_{0}^{-1})(z)+\b D_{0}\b z
\]
for every $\b z=(z^{\trans},\b{\zeta}^{\trans})^{\trans}\in\reals^{kp}$.
Since the r.h.s.\ satisfies \eqref{gradient}, and $\alpha^{\ast}=\alpha$,
we may deduce that \enuref{cvar:jsr} holds for $(c^{\ast},\{\fe_{i}^{\ast}\}_{i=0}^{k})$
with the same ${\cal B}$, $\abv b$ and $\abv{\rho}$ as for $(c,\{\fe_{i}\}_{i=0}^{k})$.
\end{proof}
\begin{proof}[Proof of \lemref{invert}]
Let $\delta>0$ be such that $\abv{\rho}+\delta<1$, and ${\cal B}\in\cvar_{r}(\alpha,\abv b,\abv{\rho})$.
For any $\b{\beta}\in{\cal B}$,
\[
\begin{bmatrix}\b{\alpha}_{\perp}^{\trans}\\
\b{\beta}^{\trans}
\end{bmatrix}=\begin{bmatrix}I_{r} & 0\\
\b{\beta}^{\trans}\b{\alpha}_{\perp}(\b{\alpha}_{\perp}^{\trans}\b{\alpha}_{\perp})^{-1} & \b{\beta}^{\trans}\b{\alpha}
\end{bmatrix}\begin{bmatrix}\b{\alpha}_{\perp}^{\trans}\\
(\b{\alpha}^{\trans}\b{\alpha})^{-1}\b{\alpha}^{\trans}
\end{bmatrix},
\]
where $\b{\alpha}$ and $\b{\alpha}_{\perp}$ are as defined in \eqref{bold-alpha}.
By construction, $\b{\alpha}\in\reals^{kp\times[p(k-1)+r]}$ and $\b{\alpha}_{\perp}\in\reals^{kp\times q}$
have full column rank, and $\b{\alpha}_{\perp}^{\trans}\b{\alpha}=0$.
They thus span the whole of $\reals^{kp}$, whence the second r.h.s.\ matrix
is invertible. It will be shown below that $\b{\beta}^{\trans}\b{\alpha}$
is invertible, and hence so too is the first r.h.s.\ matrix. Therefore
\begin{align*}
\begin{bmatrix}\b{\alpha}_{\perp}^{\trans}\\
\b{\beta}^{\trans}
\end{bmatrix}^{-1} & =\begin{bmatrix}\b{\alpha}_{\perp}^{\trans}\\
(\b{\alpha}^{\trans}\b{\alpha})^{-1}\b{\alpha}^{\trans}
\end{bmatrix}^{-1}\begin{bmatrix}I_{r} & 0\\
-(\b{\beta}^{\trans}\b{\alpha})^{-1}\b{\beta}^{\trans}\b{\alpha}_{\perp}(\b{\alpha}_{\perp}^{\trans}\b{\alpha}_{\perp})^{-1} & (\b{\beta}^{\trans}\b{\alpha})^{-1}
\end{bmatrix}\\
 & =\begin{bmatrix}\b{\alpha}_{\perp}^{\trans}\\
(\b{\alpha}^{\trans}\b{\alpha})^{-1}\b{\alpha}^{\trans}
\end{bmatrix}^{-1}\begin{bmatrix}I_{r} & 0\\
0 & (\b{\beta}^{\trans}\b{\alpha})^{-1}
\end{bmatrix}\begin{bmatrix}I_{r} & 0\\
-\b{\beta}^{\trans}\b{\alpha}_{\perp}(\b{\alpha}_{\perp}^{\trans}\b{\alpha}_{\perp})^{-1} & I_{p(k-1)}
\end{bmatrix}\\
 & \eqdef M_{1}M_{2}(\b{\beta})M_{3}(\b{\beta}).
\end{align*}

Let $\abv B(0,\abv b)$ denote the closed ball centred at zero, of
radius $\abv b$, in $\reals^{pk\times[p(k-1)+r]}$, and
\[
{\cal K}\defeq\{I_{p(k-1)+r}+\b{\beta}^{\trans}\b{\alpha}\mid\b{\beta}\in\abv B(0,\abv b)\}.
\]
Then $\{I_{p(k-1)+r}+\b{\beta}^{\trans}\b{\alpha}\mid\b{\beta}\in{\cal B}\}$
is a closed subset of ${\cal K}$. Suppose that ${\cal B}\subset\reals^{m\times m}$
is compact, with $\rho_{\jsr}({\cal B})\leq\abv{\rho}$. It follows
by \citet[Prop.~1.4]{Jungers09} that that there exists a norm $\smlnorm{\cdot}_{\delta}$
on $\reals^{p(k-1)+r}$ such that
\[
\max_{\b{\beta}\in{\cal B}}\smlnorm{I_{p(k-1)+r}+\b{\beta}^{\trans}\b{\alpha}}_{\delta}\leq\abv{\rho}+\delta<1.
\]
Hence, by the equivalence of $\smlnorm{\cdot}_{\delta}$ and $\smlnorm{\cdot}$,
there exists a $C_{0}<\infty$ such that for all $\b{\beta}\in{\cal B}$,
\begin{multline*}
\smlnorm{(\b{\beta}^{\trans}\b{\alpha})^{-1}}\leq_{(1)}C_{0}\smlnorm{(\b{\beta}^{\trans}\b{\alpha})^{-1}}_{\delta}=C_{0}\norm{\sum_{k=0}^{\infty}(I+\b{\beta}^{\trans}\b{\alpha})^{k}}_{\delta}\\
\leq C_{0}\sum_{k=0}^{\infty}\smlnorm{I+\b{\beta}^{\trans}\b{\alpha}}_{\delta}^{k}\leq\frac{C_{0}}{1-(\abv{\rho}+\delta)}<\infty,
\end{multline*}
where $\leq_{(1)}$ holds by Theorem~5.6.18 in \citet{HJ13book}.
Thus $\b{\beta}^{\trans}\b{\alpha}$ is indeed invertible as claimed.
Since the final bound is independent of ${\cal B}$, it follows that
\[
\sup_{\b{\beta}\in{\cal B}}\smlnorm{M_{2}(\b{\beta})}<\infty,
\]
while also
\[
\sup_{\b{\beta}\in{\cal B}}\smlnorm{M_{3}(\b{\beta})}\leq\sup_{\b{\beta}\in\abv B(0,\abv b)}\smlnorm{M_{3}(\b{\beta})}<\infty.
\]
 Since $M_{1}$ does not depend on $\b{\beta}$, the claim follows.
\end{proof}
\begin{proof}[Proof of \lemref{co-cts}]
We first suppose that $\fe_{0}=I_{p}$, i.e.\ the model is in class
$\cvar_{r}^{\ast}(\alpha,\abv b,\abv{\rho})$. We shall show in this
case that each $\{\fe_{i}\}_{i=1}^{k}$ is Lipschitz; and that $\co:\reals^{p}\setmap\reals^{p}$
is bi-Lipschitz, and therefore a homeomorphism.

To that end, define $\b{\co}:\reals^{kp}\setmap\reals^{kp}$ as
\begin{equation}
\b{\co}(\b z)\defeq\begin{bmatrix}\b{\alpha}_{\perp}^{\trans}\b z\\
\b{\eq}(z)+\b D_{0}\b z
\end{bmatrix}=\begin{bmatrix}\alpha_{\perp}^{\trans}[z-E^{\trans}\b{\zeta}]\\
\eq(z)\\
\b{\ga}(z)+D\b{\zeta}
\end{bmatrix}\label{eq:bold-chi}
\end{equation}
where $\b z=(z^{\trans},\b{\zeta}^{\trans})^{\trans}$ for $z\in\reals^{p}$
and $\b{\zeta}\in\reals^{p(k-1)}$. By evaluating \eqref{bold-chi}
when $\b{\zeta}=0$, and applying \lemref{lipcondition}, we see that
\ref{enu:cvar:jsr} implies that $\eq$ and $\b{\ga}$ are Lipschitz,
with Lipschitz constant depending only on $\abv b$. Hence $\ga_{i}$
is Lipschitz for $i\in\{1,\ldots,k-1\}$, and moreover so too is
\[
\ga_{0}(z)=-\pi(z)-\fe_{0}(z)=-\alpha\eq(z)-z.
\]
Since $\fe_{i}=\ga_{i}-\ga_{i-1}$ by \eqref{ge}, it follows that
$\fe_{i}$ is Lipschitz for $i\in\{1,\ldots,k\}$.

We now turn to $\co$. Let $z_{0}\in\reals^{p}$, $\b{\zeta}_{0}\defeq-D^{-1}\b{\ga}(z_{0})$,
and $\b z_{0}\defeq(z_{0}^{\trans},\b{\zeta}_{0}^{\trans})^{\trans}$.
Then
\[
E^{\trans}\b{\zeta}_{0}=-E^{\trans}D^{-1}\b{\ga}(z_{0})=\begin{bmatrix}I_{p} & 0_{p\times p} & \cdots & 0_{p\times p}\end{bmatrix}\begin{bmatrix}I_{p} & I_{p} & I_{p} & I_{p}\\
 & I_{p} & I_{p} & I_{p}\\
 &  & \ddots & \vdots\\
 &  &  & I_{p}
\end{bmatrix}\begin{bmatrix}\ga_{1}(z_{0})\\
\ga_{2}(z_{0})\\
\vdots\\
\ga_{k-1}(z_{0})
\end{bmatrix}=\sum_{i=1}^{k-1}\ga_{i}(z_{0})
\]
and hence
\[
\b{\co}(\b z_{0})=\b{\co}(z_{0},\b{\zeta}_{0})=\begin{bmatrix}\alpha_{\perp}^{\trans}[z_{0}-\sum_{i=1}^{k-1}\ga_{i}(z_{0})]\\
\eq(z_{0})\\
0_{p(k-1)}
\end{bmatrix}=\begin{bmatrix}\ct(z_{0})\\
\eq(z_{0})\\
0_{p(k-1)}
\end{bmatrix}=\begin{bmatrix}\co(z_{0})\\
0_{p(k-1)}
\end{bmatrix}.
\]
Let $z_{1}\in\reals^{p}$, $\b{\zeta}_{1}\defeq-D^{-1}\b{\ga}(z_{1})$
and $\b z_{1}\defeq(z_{1}^{\trans},\b{\zeta}_{1}^{\trans})^{\trans}$.
Then by the preceding (recalling $\smlnorm{\cdot}$ denotes the Euclidean
norm),
\[
\smlnorm{\b{\co}(\b z_{0})-\b{\co}(\b z_{1})}=\smlnorm{\co(z_{0})-\co(z_{1})},
\]
while by \ref{enu:cvar:jsr} (with $\fe_{0}=I_{p}$), there exists
a $\b{\beta}\in{\cal B}$ such that
\begin{equation}
\b{\co}(\b z_{0})-\b{\co}(\b z_{1})=\begin{bmatrix}\b{\alpha}_{\perp}^{\trans}\\
\b{\beta}^{\trans}
\end{bmatrix}(\b z_{0}-\b z_{1}).\label{eq:boldco}
\end{equation}
Since $\smlnorm{\b{\beta}}\leq\abv b$ for all $\b{\beta}\in{\cal B}$,
there exists a $C_{0}<\infty$ (depending only on $\abv b$) such
that
\begin{align}
\smlnorm{\co(z_{0})-\co(z_{1})} & =\smlnorm{\b{\co}(\b z_{0})-\b{\co}(\b z_{1})}\leq C_{0}\smlnorm{\b z_{0}-\b z_{1}}.\label{eq:co-diff-1}
\end{align}
As noted above, $\b{\ga}$ is Lipschitz: hence there exists a $C_{1}<\infty$
(depending only on $\abv b$) such that 
\begin{equation}
\smlnorm{\b z_{0}-\b z_{1}}=\norm{\begin{bmatrix}z_{0}-z_{1}\\
-D^{-1}[\b{\ga}(z_{0})-\b{\ga}(z_{1})]
\end{bmatrix}}\leq C_{1}\smlnorm{z_{0}-z_{1}}.\label{eq:co-diff-2}
\end{equation}
By \lemref{invert}, the matrix on the r.h.s.\ of \eqref{boldco}
is invertible; hence
\[
\smlnorm{\b z_{0}-\b z_{1}}\leq\norm{\begin{bmatrix}\b{\alpha}_{\perp}^{\trans}\\
\b{\beta}^{\trans}
\end{bmatrix}^{-1}}\smlnorm{\b{\co}(\b z_{0})-\b{\co}(\b z_{1})}.
\]
That lemma further implies that there exists a $C_{2}<\infty$, depending
only on $\alpha$, $\abv b$ and $\abv{\rho}$, such that 
\begin{align}
\smlnorm{\co(z_{0})-\co(z_{1})} & =\smlnorm{\b{\co}(\b z_{0})-\b{\co}(\b z_{1})}\nonumber \\
 & \geq\inf_{\b{\beta}\in{\cal B}}\norm{\begin{bmatrix}\b{\alpha}_{\perp}^{\trans}\\
\b{\beta}^{\trans}
\end{bmatrix}^{-1}}^{-1}\smlnorm{\b z_{0}-\b z_{1}}\nonumber \\
 & =\left(\sup_{\b{\beta}\in{\cal B}}\norm{\begin{bmatrix}\b{\alpha}_{\perp}^{\trans}\\
\b{\beta}^{\trans}
\end{bmatrix}^{-1}}\right)^{-1}\smlnorm{\b z_{0}-\b z_{1}}\nonumber \\
 & \geq C_{2}^{-1}\smlnorm{\b z_{0}-\b z_{1}}\nonumber \\
 & \geq_{(1)}C_{2}^{-1}\smlnorm{z_{0}-z_{1}}\label{eq:co-diff-3}
\end{align}
where $\geq_{(1)}$ follows since
\[
\smlnorm{\b z_{0}-\b z_{1}}^{2}=\smlnorm{z_{0}-z_{1}}^{2}+\smlnorm{\b{\zeta}_{0}-\b{\zeta}_{1}}^{2}\geq\smlnorm{z_{0}-z_{1}}^{2}.
\]
Deduce from \eqref{boldco}--\eqref{co-diff-3} that there exists
a $C<\infty$, depending only on $\alpha$, $\abv b$ and $\abv{\rho}$,
such that
\[
C^{-1}\smlnorm{z_{0}-z_{1}}\leq\smlnorm{\co(z_{0})-\co(z_{1})}\leq C\smlnorm{z_{0}-z_{1}}
\]
for all $z_{0},z_{1}\in\reals^{p}$. Hence $\co$ is bi-Lipschitz,
as required. 

Finally, suppose that $\fe_{0}$ is not the identity, but merely a
homeomorphism (as per \ref{enu:cvar:homeo}). By \lemref{unstruct},
the model $(c^{\ast},\{\fe_{i}^{\ast}\}_{i=0}^{k})$, with $c^{\ast}=c$
and $\fe_{i}^{\ast}=\fe_{i}\compose\fe_{0}^{-1}$ for $i\in\{0,\ldots,k\}$
is in class $\cvar_{r}^{\ast}(\alpha,\abv b,\abv{\rho})$, and it
follows by the preceding that each $\{\fe_{i}^{\ast}\}_{i=1}^{k}$
is Lipschitz, and the associated $\co^{\ast}$ is bi-Lipschitz. Clearly
$\fe_{i}=\fe_{i}^{\ast}\compose\fe_{0}$ is continuous for each $i\in\{1,\ldots,k\}$.
Moreover $\co^{\ast}=\co\compose\fe_{0}^{-1}$, whence $\co=\co^{\ast}\compose\fe_{0}$
is also a homeomorphism. If $\fe_{0}$ is bi-Lipschitz with bi-Lipschitz
constant $C_{\fe_{0}}$, then $\co=\co^{\ast}\compose\fe_{0}$ is
bi-Lipschitz, with bi-Lipschitz constant depending only on $C_{\fe_{0}}$,
$\alpha$, $\abv b$ and $\abv{\rho}$.
\end{proof}

\section{Proofs of high-level results}

This section provides proofs of our main results under the high-level
conditions provided by the definition of class $\cvar_{r}$ (see \defref{cvar}). 
\begin{proof}[Proof of \thmref{gjrt}]
 We note that $\co$ is a homeomorphism by \lemref{co-cts}.

We first prove the result when $\fe_{0}=I_{p}$. By \lemref{vecm},
we can rewrite the model as
\begin{align}
\Delta\b z_{t} & =\b c+\b{\pi}(z_{t-1})+\b D\b z_{t-1}+\b u_{t}\label{eq:DZ}
\end{align}
where
\begin{equation}
\b{\pi}(z)=\begin{bmatrix}\pi(z)+\ga_{1}(z)\\
\b{\ga}(z)
\end{bmatrix}=\begin{bmatrix}\alpha & E^{\trans}\\
0 & I_{p(k-1)}
\end{bmatrix}\begin{bmatrix}\eq(z)\\
\b{\ga}(z)
\end{bmatrix}=\b{\alpha}\b{\eq}(z),\label{eq:PI}
\end{equation}
$\b D=\b{\alpha}\b D_{0}$, and
\begin{equation}
\b c=\begin{bmatrix}c\\
0_{p(k-1)}
\end{bmatrix}=\begin{bmatrix}\alpha & E^{\trans}\\
0 & I_{p(k-1)}
\end{bmatrix}\begin{bmatrix}\cn\\
0_{p(k-1)}
\end{bmatrix}\eqdef\b{\alpha}\b{\cn}.\label{eq:Crep}
\end{equation}

To obtain \eqref{gjrt-z}, we first evaluate each of $\eq(z_{t})$
and $\ct(z_{t})$ in turn. Regarding the former, we observe that by
\eqref{DZ}--\eqref{Crep},
\begin{align*}
\Delta\b z_{t} & =\b{\alpha}[\b{\cn}+\b{\eq}(z_{t-1})+\b D_{0}\b z_{t-1}]+\b u_{t}=\b{\alpha}\b{\xi}_{t-1}+\b u_{t}.
\end{align*}
Since the VAR belongs to class $\cvar_{r}$, for each $t\in\naturals$
there exists a $\b{\beta}_{t}\in{\cal B}$ such that
\[
\Delta\b{\xi}_{t}=[\b{\eq}(z_{t})+\b D_{0}\b z_{t}]-[\b{\eq}(z_{t-1})+\b D_{0}\b z_{t-1}]=\b{\beta}_{t}^{\trans}\Delta\b z_{t},
\]
and hence
\begin{equation}
\b{\xi}_{t}=(I_{p(k-1)+r}+\b{\beta}_{t}^{\trans}\b{\alpha})\b{\xi}_{t-1}+\b{\beta}_{t}^{\trans}\b u_{t}\label{eq:xi-lom}
\end{equation}
as per \eqref{gjrt-xi}. By definition of $\b{\xi}_{t}$ and $\xi_{t}$,
\begin{equation}
\b{\xi}_{t}=\b{\mu}+\b{\eq}(z_{t-1})+\b D_{0}\b z_{t-1}=\begin{bmatrix}\mu+\eq(z_{t})\\
\b{\ga}(z_{t})+D\b{\zeta}_{t-1}
\end{bmatrix}=\begin{bmatrix}\xi_{t}\\
\Delta\b{\zeta_{t}}
\end{bmatrix}\label{eq:xi-components}
\end{equation}
where the final equality follows from the final $p(k-1)$ equations
in \eqref{VECM-system}, whence
\begin{equation}
\eq(z_{t})=-\mu+S_{\eq}^{\trans}\b{\xi}_{t}\label{eq:eqrep}
\end{equation}
for $S_{\eq}^{\trans}\defeq[I_{r},0_{r\times p(k-1)}].$

We next turn to $\ct(z_{t})$. By \eqref{DZ}--\eqref{Crep}, and
the definition of $\b{\alpha}_{\perp}$ in \eqref{bold-alpha},
\[
\Delta\alpha_{\perp}^{\trans}(z_{s}-\zeta_{1,s-1})=\alpha_{\perp}^{\trans}\begin{bmatrix}I_{p}, & -E^{\trans}\end{bmatrix}\begin{bmatrix}\Delta z_{s}\\
\Delta\b{\zeta}_{s-1}
\end{bmatrix}=\b{\alpha}_{\perp}^{\trans}\begin{bmatrix}\Delta z_{s}\\
\Delta\b{\zeta}_{s-1}
\end{bmatrix}=\b{\alpha}_{\perp}^{\trans}\b u_{s}=\alpha_{\perp}^{\trans}u_{s},
\]
whence
\begin{align}
\alpha_{\perp}^{\trans}(z_{t}-\zeta_{1,t-1}) & =\alpha_{\perp}^{\trans}(z_{0}-\zeta_{1,-1})+\sum_{s=1}^{t}\Delta\alpha_{\perp}^{\trans}(z_{s}-\zeta_{1,s-1})\nonumber \\
 & =\alpha_{\perp}^{\trans}\left[z_{0}-\sum_{i=1}^{k-1}\ga_{i}(z_{-i})\right]+\alpha_{\perp}^{\trans}\sum_{s=1}^{t}u_{s}\nonumber \\
 & =\alpha_{\perp}^{\trans}\abv{\srfn}(\vec{z}_{0})+\alpha_{\perp}^{\trans}\sum_{s=1}^{t}u_{s}.\label{eq:ctsum}
\end{align}
We need to relate the l.h.s.\ to
\[
\ct(z_{t})=\alpha_{\perp}^{\trans}\left[z_{t}-\sum_{i=1}^{k-1}\ga_{i}(z_{t})\right]=\alpha_{\perp}^{\trans}[z_{t}+E^{\trans}D^{-1}\b{\ga}(z_{t})],
\]
where the second equality follows since
\begin{equation}
-D^{-1}=\begin{bmatrix}I_{p} & -I_{p}\\
 & \ddots & \ddots\\
 &  & I_{p} & -I_{p}\\
 &  &  & I_{p}
\end{bmatrix}^{-1}=\begin{bmatrix}I_{p} & I_{p} & I_{p} & I_{p}\\
 & \ddots & \vdots & \vdots\\
 &  & I_{p} & I_{p}\\
 &  &  & I_{p}
\end{bmatrix}.\label{eq:Dinv}
\end{equation}
To that end, we note that
\begin{align*}
\ct(z_{t})-\alpha_{\perp}^{\trans}(z_{t}-\zeta_{1,t-1}) & =\alpha_{\perp}^{\trans}[E^{\trans}D^{-1}\b{\ga}(z_{t})+\zeta_{1,t-1}]\\
 & =\alpha_{\perp}^{\trans}E^{\trans}[D^{-1}\b{\ga}(z_{t})+\b{\zeta}_{t-1}]\\
 & =\alpha_{\perp}^{\trans}E^{\trans}D^{-1}[\b{\ga}(z_{t})+D\b{\zeta}_{t-1}]\\
 & =\alpha_{\perp}^{\trans}S_{\ct}^{\trans}\b{\xi}_{t}
\end{align*}
for where the final equality follows by \eqref{xi-components}, for
$S_{\ct}^{\trans}$ the $p\times p(k-1)+r$ matrix of the form
\[
S_{\ct}^{\trans}\defeq\begin{bmatrix}0_{p\times r} & E^{\trans}D^{-1}\end{bmatrix}=-\begin{bmatrix}0_{p\times r} & I_{p} & \cdots & I_{p}\end{bmatrix}.
\]
Hence by \eqref{ctsum},
\begin{equation}
\ct(z_{t})=\alpha_{\perp}^{\trans}\abv{\srfn}(\vec{z}_{0})+\alpha_{\perp}^{\trans}\sum_{s=1}^{t}u_{s}+\alpha_{\perp}^{\trans}S_{\ct}^{\trans}\b{\xi}_{t}\label{eq:ctrep}
\end{equation}
It then follows from \eqref{eqrep} and \eqref{ctrep} that
\[
\co(z_{t})=\begin{bmatrix}\ct(z_{t})\\
\eq(z_{t})
\end{bmatrix}=\begin{bmatrix}\alpha_{\perp}^{\trans}\abv{\srfn}(\vec{z}_{0})+\alpha_{\perp}^{\trans}\sum_{s=1}^{t}u_{s}\\
-\mu
\end{bmatrix}+\begin{bmatrix}\alpha_{\perp}^{\trans}S_{\ct}^{\trans}\\
S_{\eq}^{\trans}
\end{bmatrix}\b{\xi}_{t}
\]
which yields \eqref{gjrt-z} with $S_{\co}^{\trans}=[S_{\ct}\alpha_{\perp},S_{\theta}]^{\trans}$.

Finally, suppose that $\fe_{0}$ is a general homeomorphism. Observe
that $z_{t}^{\ast}\defeq\fe_{0}(z_{t})$ satisfies
\[
z_{t}^{\ast}=c+\sum_{i=1}^{k}(\fe_{i}\compose\fe_{0}^{-1})(z_{t-i}^{\ast})+u_{t}\eqdef c^{\ast}+\sum_{i=1}^{k}\fe_{i}^{\ast}(z_{t-i}^{\ast})+u_{t}
\]
i.e.\ it is generated by a model $(c^{\ast},\{\fe_{i}^{\ast}\}_{i=0}^{k})$
with $\fe_{0}^{\ast}=I_{p}$. By \lemref{unstruct}, this model belongs
to class $\cvar_{r}^{\ast}$, with the same $\alpha$ and $\mu$.
Therefore, by the arguments given above, \eqref{gjrt-z} holds in
the form
\[
\co^{\ast}(z_{t}^{\ast})=\begin{bmatrix}\alpha_{\perp}^{\trans}\abv{\srfn}^{\ast}(\vec{z}_{0}^{\ast})+\alpha_{\perp}^{\trans}\sum_{s=1}^{t}u_{s}\\
-\mu
\end{bmatrix}+S_{\co}^{\trans}\b{\xi}_{t}^{\ast}
\]
It additionally follows from \lemref{unstruct} that the l.h.s.\ is
equal to $\co(z_{t})$, while
\[
\abv{\srfn}^{\ast}(\vec{z}_{0}^{\ast})=z_{0}^{\ast}-\sum_{i=1}^{k-1}\ga_{i}^{\ast}(z_{-i}^{\ast})=\fe_{0}(z_{0})-\sum_{i=1}^{k-1}\ga_{i}(z_{-i})=\abv{\srfn}(\vec{z}_{0})
\]
and, recalling \eqref{xi-components} above,
\begin{equation}
\b{\xi}_{t}^{\ast}=\begin{bmatrix}\mu+\eq^{\ast}(z_{t}^{\ast})\\
\Delta\b{\zeta}_{t}^{\ast}
\end{bmatrix}=\begin{bmatrix}\mu+\eq(z_{t})\\
\Delta\b{\zeta}_{t}
\end{bmatrix}=\b{\xi}_{t}\label{eq:xi-ast-xi}
\end{equation}
since $\zeta_{j,t}^{\ast}=\sum_{i=j}^{k-1}\ga_{i}^{\ast}(z_{t-i+j}^{\ast})=\sum_{i=j}^{k-1}\ga_{i}(z_{t-i+j})=\zeta_{j,t}$.
Since \eqref{xi-lom} holds as
\[
\b{\xi}_{t}^{\ast}=(I_{p(k-1)+r}+\b{\beta}_{t}^{\trans}\b{\alpha})\b{\xi}_{t-1}^{\ast}+\b{\beta}_{t}^{\trans}\b u_{t}
\]
where $\b{\beta}_{t}\in{\cal B}^{\ast}$ for each $t\in\naturals$,
it then follows by \eqref{xi-ast-xi} and \lemref{unstruct} that
\[
\b{\xi}_{t}=(I_{p(k-1)+r}+\b{\beta}_{t}^{\trans}\b{\alpha})\b{\xi}_{t-1}+\b{\beta}_{t}^{\trans}\b u_{t}
\]
where $\b{\beta}_{t}\in{\cal B}$ for each $t\in\naturals$, since
${\cal B}={\cal B}^{\ast}$.
\end{proof}
\begin{proof}[Proof of \thmref{stab}]
 Without loss of generality, we may take $\tau=1$. We note that
$\co$ is a homeomorphism by \lemref{co-cts}.

\textbf{\ref{enu:stab:cvg}.} Regarding \eqref{zt-det-lim}, we have
from \thmref{gjrt} with $\vec{z}_{0}=\state$ and $\sum_{s=1}^{t}u_{s}=u$
that
\begin{align*}
\co[z_{t}(u;\state)] & =\begin{bmatrix}\alpha_{\perp}^{\trans}\abv{\srfn}(\state)+\alpha_{\perp}^{\trans}u\\
-\mu
\end{bmatrix}+S_{\co}^{\trans}\b{\xi}_{t},
\end{align*}
where $\b{\xi}_{t}$ satisfies \eqref{gjrt-z} with $\b u_{t}=0$
for all $t\geq2$; whence
\[
\b{\xi}_{t}=\left[\prod_{s=2}^{t}(I_{p(k-1)+r}+\b{\beta}_{s}^{\trans}\b{\alpha})\right]\b{\xi}_{1}
\]
for some $\{\b{\beta}_{s}\}\subset{\cal B}$. Since $\rho_{\jsr}(\{I_{p(k-1)+r}+\b{\beta}^{\trans}\b{\alpha}\mid\b{\beta}\in{\cal B}\})\leq\abv{\rho}<1$,
it follows by \citet[Prop.~1.4]{Jungers09} that there exists a norm
$\smlnorm{\cdot}_{\ast}$ such that
\[
\smlnorm{\b{\xi}_{t}}_{\ast}\leq\left[\prod_{s=2}^{t}\smlnorm{I_{p(k-1)+r}+\b{\beta}_{s}^{\trans}\b{\alpha}}_{\ast}\right]\smlnorm{\b{\xi}_{1}}_{\ast}\leq\abv{\rho}^{t-1}\smlnorm{\b{\xi}_{1}}_{\ast}\goesto0
\]
as $t\goesto\infty$. \eqref{zt-det-lim} then follows.

\textbf{\ref{enu:stab:dom}.} By \eqref{zt-det-lim}, $u\in\set U(z;\state)$
if and only if
\[
\begin{bmatrix}\alpha_{\perp}^{\trans}\srfn(z)\\
\eq(z)
\end{bmatrix}=\co(z)=\begin{bmatrix}\alpha_{\perp}^{\trans}[\abv{\srfn}(\state)+u]\\
-\mu
\end{bmatrix},
\]
whence
\begin{align*}
\set U(z;\state) & =\{u\in\reals^{p}\mid\alpha_{\perp}^{\trans}[u+\abv{\srfn}(\state)-\srfn(z)]=0\}\\
 & =(\spn\alpha)+[\srfn(z)-\abv{\srfn}(\state)].
\end{align*}
Since $\set U(z;\state)$ is non-empty for every $(z,\state)\in\ctspc_{\mu}\times\reals^{kp}$,
it follows that $\ctspc_{\mu}$ is strictly stable.
\end{proof}
\begin{proof}[Proof of \thmref{multipliers}]
\textbf{\ref{enu:mult:set}.} We note that $z_{\infty}(u;\state)$
depends on $\state$ only through $\alpha_{\perp}^{\trans}\abv{\srfn}(\state)$,
and that setting
\[
z_{\state}\defeq\co^{-1}\begin{bmatrix}\alpha_{\perp}^{\trans}\abv{\srfn}(\state)\\
-\mu
\end{bmatrix}
\]
yields a $z_{\state}\in\ctspc_{\mu}$ with $\ct(z_{\state})=\alpha_{\perp}^{\trans}\abv{\srfn}(\state)$.
Since the l.h.s.\ of 
\[
z_{\infty}(u;\state)=\co^{-1}\begin{bmatrix}\alpha_{\perp}^{\trans}\abv{\srfn}(\state)+\alpha_{\perp}^{\trans}u\\
-\mu
\end{bmatrix}=\co^{-1}\begin{bmatrix}\ct(z)+\alpha_{\perp}^{\trans}u\\
-\mu
\end{bmatrix}
\]
is differentiable with respect to $u$ (at $u=0$) if and only if
the r.h.s.\ is, the long-run multipliers are only well defined when
$z_{\state}\in\ctspc_{\mu}\backslash N$, and the full set of long-run
multipliers can thus be recovered by computing 
\[
\Theta_{\infty}(z)\defeq\partial_{u}\left.\co^{-1}\begin{bmatrix}\ct(z)+\alpha_{\perp}^{\trans}u\\
-\mu
\end{bmatrix}\right|_{u=0}
\]
for each $z\in\ctspc_{\mu}\backslash N$. The second equality in \eqref{lrmult}
then follows by the chain rule.

\textbf{\ref{enu:mult:rank}.} If $z\in\ctspc_{\mu}\backslash N_{0}$,
then
\[
\partial_{v}\left.\co^{-1}\begin{bmatrix}\ct(z)+v\\
-\mu
\end{bmatrix}\right|_{v=0}
\]
is invertible; the result follows since $\rank\alpha_{\perp}=q$.
\end{proof}

\section{Proofs relating to examples}

\label{app:examples}

\subsection{Linear systems}
\begin{proof}[Proof of \propref{lin-rep-var}]
We verify the conditions required for a model to be in class $\cvar_{r}$
(see \defref{cvar}); the other claims follow either in the course
of the subsequent arguments, or directly from the relevant definitions.

\textbf{\ref{enu:cvar:homeo}.} Since $\Phi_{0}$ is invertible, $\fe_{0}(z)=\Phi_{0}z$
is trivially a homeomorphism.

\textbf{\ref{enu:cvar:rank}.} Since $\rank\Phi(1)=r$, there exist
$\alpha,\beta\in\reals^{p\times r}$ having full column rank, such
that $\alpha\beta^{\trans}=-\Phi(1)$. Then $c\in\spn\Phi(1)=\alpha$
implies that $c=\alpha\cn$ for some $\cn\in\reals^{r}$. 

\textbf{\ref{enu:cvar:jsr}.} First recognise that 
\[
\pi(z)=-\Phi(1)z=\alpha\beta^{\trans}z,
\]
so that $\eq(z)=\beta^{\trans}$, and that 
\[
\ga_{j}(z)=-\sum_{i=j+1}^{k}\fe_{i}(z)=-\sum_{i=j+1}^{k}\Phi_{i}z=\Gamma_{j}z.
\]
Letting $\b{\Gamma}\defeq[\Gamma_{i}]_{i=1}^{k-1}$, so that $\b{\ga}(z)=\b{\Gamma}z,$
we thus have
\[
(\b{\eq}\compose\fe_{0}^{-1})(z)+\b D_{0}\b z=\begin{bmatrix}\beta^{\trans}\\
\b{\Gamma}
\end{bmatrix}\Phi_{0}^{-1}z+\b D_{0}\b z=\begin{bmatrix}\beta^{\trans}\Phi_{0}^{-1} & 0\\
\b{\Gamma}\Phi_{0}^{-1} & D
\end{bmatrix}\begin{bmatrix}z\\
\b{\zeta}
\end{bmatrix}\eqdef\b{\beta}^{\trans}\b z
\]
which is linear, and so \eqref{jsr} will be satisfied if the eigenvalues
of $I_{p(k-1)+r}+\b{\beta}^{\trans}\b{\alpha}$ are less than some
$\abv{\rho}<1$ in modulus.

By the Weinstein--Aronszajn identity, the non-unit eigenvalues of
$I_{pk}+\b{\alpha}\b{\beta}^{\trans}$ coincide with those of $I_{p(k-1)+r}+\b{\beta}^{\trans}\b{\alpha}$.
Therefore, consider
\begin{equation}
I_{pk}+\b{\alpha}\b{\beta}^{\trans}=I_{pk}+\begin{bmatrix}\alpha & E^{\trans}\\
0 & I_{p(k-1)}
\end{bmatrix}\begin{bmatrix}\beta^{\trans}\Phi_{0}^{-1} & 0\\
\b{\Gamma}\Phi_{0}^{-1} & D
\end{bmatrix}=\begin{bmatrix}I_{p}+(\alpha\beta^{\trans}+\Gamma_{1})\Phi_{0}^{-1} & E^{\trans}D\\
\b{\Gamma}\Phi_{0}^{-1} & I_{p(k-1)}+D
\end{bmatrix}\label{eq:bbeta-lin}
\end{equation}
whence
\begin{align*}
\begin{bmatrix}I_{p} & 0\\
0 & D
\end{bmatrix}(I_{pk}+\b{\alpha}\b{\beta}^{\trans})\begin{bmatrix}I_{p} & 0\\
0 & D^{-1}
\end{bmatrix} & =\begin{bmatrix}I_{p} & 0\\
0 & D
\end{bmatrix}\begin{bmatrix}I_{p}+(\alpha\beta^{\trans}+\Gamma_{1})\Phi_{0}^{-1} & E^{\trans}\\
\b{\Gamma}\Phi_{0}^{-1} & D^{-1}+I_{p(k-1)}
\end{bmatrix}\\
 & =\begin{bmatrix}I_{p}+(\alpha\beta^{\trans}+\Gamma_{1})\Phi_{0}^{-1} & E^{\trans}\\
D\b{\Gamma}\Phi_{0}^{-1} & I_{p(k-1)}+D
\end{bmatrix}.
\end{align*}
Noting that 
\[
\alpha\beta^{\trans}+\Gamma_{1}=-\Phi(1)+\Gamma_{1}=-\left(\Phi_{0}-\sum_{i=1}^{k}\Phi_{i}\right)+\left(-\sum_{i=2}^{k}\Phi_{i}\right)=\Phi_{1}-\Phi_{0}
\]
and
\[
D\b{\Gamma}=\begin{bmatrix}-I_{p} & I_{p}\\
 & \ddots & \ddots\\
 &  & -I_{p} & I_{p}\\
 &  &  & -I_{p}
\end{bmatrix}\begin{bmatrix}\Gamma_{1}\\
\vdots\\
\Gamma_{k-2}\\
\Gamma_{k-1}
\end{bmatrix}=\begin{bmatrix}\Gamma_{2}-\Gamma_{1}\\
\vdots\\
\Gamma_{k-1}-\Gamma_{k-2}\\
-\Gamma_{k-1}
\end{bmatrix}=\begin{bmatrix}\Phi_{2}\\
\vdots\\
\Phi_{k-1}\\
\Phi_{k}
\end{bmatrix}\eqdef\b{\Phi}_{(2)}
\]
it follows that
\[
\begin{bmatrix}I_{p}+(\alpha\beta^{\trans}+\Gamma_{1})\Phi_{0}^{-1} & E^{\trans}\\
D\b{\Gamma}\Phi_{0}^{-1} & I_{p(k-1)}+D
\end{bmatrix}=\begin{bmatrix}\Phi_{1}\Phi_{0}^{-1} & E^{\trans}\\
\b{\Phi}_{(2)}\Phi_{0}^{-1} & I_{p(k-1)}+D
\end{bmatrix}.
\]
The transpose of this matrix is the companion form for a VAR with
autoregressive polynomial
\[
I_{p}-(\Phi_{0}^{-1})^{\trans}\sum_{i=1}^{k}\Phi_{i}^{\trans}\lambda^{i}=(\Phi_{0}^{-1})^{\trans}\left[\Phi_{0}^{\trans}-\sum_{i=1}^{k}\Phi_{i}^{\trans}\lambda^{i}\right]=(\Phi_{0}^{-1})^{\trans}\Phi(\lambda)^{\trans},
\]
the roots of which are exactly the roots of $\Phi(\lambda)$.

It follows that the eigenvalues of $I_{p(k-1)+r}+\b{\beta}^{\trans}\b{\alpha}$
coincide with the inverses of the \emph{non-unit} roots of $\Phi(\lambda)$,
which by assumption lie strictly inside the unit circle, and are thus
(in modulus) bounded above by some $\abv{\rho}<1$, as required.
\end{proof}
The following is used in \subsecref{gjrt}.
\begin{lem}
\label{lem:lin-co-map}Suppose $(c,\{f_{i}\}_{i=0}^{k})$ is a linear
VAR satisfying \ref{enu:lin:homeo}\ass{--3}. Then for $y=(y_{(1)},y_{(2)})\in\reals^{q}\times\reals^{r}$,
\begin{align*}
\co^{-1}(y) & =\beta_{\perp}(\alpha_{\perp}^{\trans}\srp\beta_{\perp})^{-1}y_{(1)}+\{I-\beta_{\perp}(\alpha_{\perp}^{\trans}\srp\beta_{\perp})^{-1}\alpha_{\perp}^{\trans}\srp\}\beta(\beta^{\trans}\beta)^{-1}y_{(2)}.
\end{align*}
\end{lem}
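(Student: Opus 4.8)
The plan is to verify directly that the matrix on the right-hand side of the claimed identity is the (two-sided) inverse of the matrix $\CO$, exploiting the block structure that the orthogonality relations $\beta^{\trans}\beta_{\perp}=0$ and $\alpha_{\perp}^{\trans}\alpha=0$ impose.

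First I would invoke \propref{lin-rep-var} to record that, for a linear VAR satisfying \ref{enu:lin:homeo}\ass{--3}, the map $\co$ is linear with $\co(z)=\CO z$, where $\CO=\bigl[\begin{smallmatrix}\alpha_{\perp}^{\trans}\srp\\\beta^{\trans}\end{smallmatrix}\bigr]$; and by \lemref{co-cts} $\co$ is a homeomorphism, so (being linear) $\CO\in\reals^{p\times p}$ is nonsingular and $\co^{-1}(y)=\CO^{-1}y$. Thus the claim reduces to an identity between $p\times p$ matrices. Next I would note that $\CO\beta_{\perp}=\bigl[\begin{smallmatrix}\alpha_{\perp}^{\trans}\srp\beta_{\perp}\\0_{r\times q}\end{smallmatrix}\bigr]$ by $\beta^{\trans}\beta_{\perp}=0$; since $\CO$ is nonsingular and $\rank\beta_{\perp}=q$, the $q\times q$ block $\alpha_{\perp}^{\trans}\srp\beta_{\perp}$ must itself be nonsingular, so all the inverses appearing in the asserted formula are well defined. (This is the one place a little argument is needed, and it comes for free from the invertibility of $\CO$ rather than having to be imported as the usual $I(1)$ condition from the cointegration literature.)

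The remaining step is a blockwise computation. Setting $M\defeq\beta_{\perp}(\alpha_{\perp}^{\trans}\srp\beta_{\perp})^{-1}$ and $N\defeq\{I_{p}-M\alpha_{\perp}^{\trans}\srp\}\beta(\beta^{\trans}\beta)^{-1}$, the claimed formula reads $\co^{-1}(y)=My_{(1)}+Ny_{(2)}$, i.e.\ $\CO^{-1}=[M\ N]$. Since $\CO$ is square and nonsingular, it suffices to check $\CO[M\ N]=I_{p}$, which follows from
\begin{align*}
\alpha_{\perp}^{\trans}\srp M &= I_{q}, & \beta^{\trans}M &= 0,\\
\alpha_{\perp}^{\trans}\srp N &= 0, & \beta^{\trans}N &= I_{r},
\end{align*}
where the first equality is the definition of $M$, the second uses $\beta^{\trans}\beta_{\perp}=0$, the third uses $\alpha_{\perp}^{\trans}\srp M=I_{q}$ together with the definition of $N$, and the fourth uses $\beta^{\trans}M=0$ and $\beta^{\trans}\beta(\beta^{\trans}\beta)^{-1}=I_{r}$. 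Hence $\CO[M\ N]=\bigl[\begin{smallmatrix}I_{q}&0\\0&I_{r}\end{smallmatrix}\bigr]=I_{p}$, giving the stated expression for $\co^{-1}$.

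There is essentially no obstacle beyond this bookkeeping; the only substantive point is the nonsingularity of $\alpha_{\perp}^{\trans}\srp\beta_{\perp}$, dealt with above, and it is worth remarking that an alternative route — substituting $y=\CO z$ into $My_{(1)}+Ny_{(2)}$ and using the decomposition of $z$ along $\spn\beta$ and $\spn\beta_{\perp}$ to recover $z$ — leads to the same arithmetic, so the block-matrix verification is the cleanest presentation.
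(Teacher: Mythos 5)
Your proof is correct and follows essentially the same route as the paper's: both reduce the claim to inverting the nonsingular matrix $\CO=\bigl[\begin{smallmatrix}\alpha_{\perp}^{\trans}\srp\\ \beta^{\trans}\end{smallmatrix}\bigr]$ (invertibility coming from \propref{lin-rep-var} and \lemref{co-cts}), and both extract the nonsingularity of $\alpha_{\perp}^{\trans}\srp\beta_{\perp}$ from that of $\CO$ acting on $\beta_{\perp}$. The only cosmetic difference is that the paper computes $\CO^{-1}$ constructively by factoring $\CO[\beta_{\perp}\ \beta]$ as a block upper-triangular matrix and inverting, whereas you verify directly that the candidate $[M\ N]$ is a right inverse; the underlying algebra is identical.
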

\begin{proof}
By \propref{lin-rep-var}, we have that $(c,\{f_{i}\}_{i=0}^{k})\in\cvar_{r}$
and 
\[
\co(z)=\begin{bmatrix}\alpha_{\perp}^{\trans}\srp\\
\beta^{\trans}
\end{bmatrix}z.
\]
It therefore follows from \lemref{co-cts} that the matrix on the
r.h.s.\ is invertible. Since
\[
\begin{bmatrix}\alpha_{\perp}^{\trans}\srp\\
\beta^{\trans}
\end{bmatrix}\begin{bmatrix}\beta_{\perp} & \beta\end{bmatrix}=\begin{bmatrix}\alpha_{\perp}^{\trans}\srp\beta_{\perp} & \alpha_{\perp}^{\trans}\srp\beta\\
0 & \beta^{\trans}\beta
\end{bmatrix},
\]
it follows that the r.h.s.\ matrix is also invertible, and hence
\begin{align*}
\begin{bmatrix}\alpha_{\perp}^{\trans}\srp\\
\beta^{\trans}
\end{bmatrix}^{-1} & =\begin{bmatrix}\beta_{\perp} & \beta\end{bmatrix}\begin{bmatrix}(\alpha_{\perp}^{\trans}\srp\beta_{\perp})^{-1} & -(\alpha_{\perp}^{\trans}\srp\beta_{\perp})^{-1}\alpha_{\perp}^{\trans}\srp\beta(\beta^{\trans}\beta)^{-1}\\
0 & (\beta^{\trans}\beta)^{-1}
\end{bmatrix}\\
 & =\begin{bmatrix}\beta_{\perp}(\alpha_{\perp}^{\trans}\srp\beta_{\perp})^{-1} & M\end{bmatrix}
\end{align*}
where
\[
M=\{I-\beta_{\perp}(\alpha_{\perp}^{\trans}\srp\beta_{\perp})^{-1}\alpha_{\perp}^{\trans}\srp\}\beta(\beta^{\trans}\beta)^{-1}.
\]
Thus
\[
\begin{bmatrix}\alpha_{\perp}^{\trans}\srp\\
\beta^{\trans}
\end{bmatrix}^{-1}\begin{bmatrix}y_{(1)}\\
y_{(2)}
\end{bmatrix}=\beta_{\perp}(\alpha_{\perp}^{\trans}\srp\beta_{\perp})^{-1}y_{(1)}+\{I-\beta_{\perp}(\alpha_{\perp}^{\trans}\srp\beta_{\perp})^{-1}\alpha_{\perp}^{\trans}\srp\}\beta(\beta^{\trans}\beta)^{-1}y_{(2)}.\qedhere
\]
\end{proof}

\subsection{Piecewise affine systems}

Recall from \subsecref{piecewise-affine} that $f:\reals^{p}\setmap\reals^{p}$
is piecewise affine if it is a continuous function of the form
\begin{equation}
f(x)=\sum_{\ell=1}^{L}\indic\{x\in\set X^{(\ell)}\}(\phi^{(\ell)}+\Phi^{(\ell)}x),\label{eq:piecewiseaffine}
\end{equation}
where the sets $\{\set X^{(\ell)}\}_{\ell=1}^{L}$ are convex and
partition $\reals^{p}$. We say that such an $f$ is:
\begin{itemize}
\item \emph{piecewise linear} if there exists a basis $\{a_{i}\}_{i=1}^{p}$
for $\reals^{p}$ such that each $\set Z^{(\ell)}$ can be written
as a union of cones of the form
\[
\set C_{{\cal I}}\defeq\{z\in\reals^{p}\mid a_{i}^{\trans}z\geq0\sep\forall i\in{\cal I}\text{ and }a_{i}^{\trans}z<0\sep\forall i\notin{\cal I}\}
\]
where ${\cal I}$ ranges over the subsets of $\{1,\ldots,p\}$, and
$\bar{\phi}_{i}^{(\ell)}=0$ for all $i$ and $\ell$;
\item \emph{threshold affine} if there exists an $a\in\reals^{p}\backslash\{0\}$
and thresholds $\{\tau_{\ell}\}_{\ell=0}^{L}$ with $\tau_{\ell}<\tau_{\ell+1}$,
$\tau_{0}=-\infty$ and $\tau_{L}=+\infty$, such that
\begin{equation}
\set Z^{(\ell)}=\{z\in\reals^{p}\mid a^{\trans}z\in(\tau_{\ell-1},\tau_{\ell}]\}.\label{eq:thresholdaffine}
\end{equation}
\end{itemize}
We now give three auxiliary results relating to functions of these
kinds, each of whose proofs follow immediately.
\begin{lem}
\label{lem:pwa-conhull}Suppose $f:\reals^{p}\setmap\reals^{p}$ is
the piecewise affine function in \eqref{piecewiseaffine}. Then:
\begin{enumerate}
\item \label{enu:cohull}for every $x^{\prime},x^{\prime\prime}\in\reals^{p}$,
there exists a $\Phi\in\ch\{\Phi^{(\ell)}\}_{\ell=1}^{L}$ such that
\[
f(x^{\prime\prime})-f(x^{\prime})=\Phi(x^{\prime\prime}-x^{\prime});
\]
\item \label{enu:inverse}if $f$ is invertible, then $\set Y^{(\ell)}\defeq f(\set X^{(\ell)})$
partition $\reals^{p}$, and 
\[
f^{-1}(y)=\sum_{\ell=1}^{L}\indic\{y\in\set Y^{(\ell)}\}(\Phi^{(\ell)})^{-1}(y-\phi^{(\ell)}).
\]
\end{enumerate}
Suppose further that $f$ is threshold affine function, i.e.\ \eqref{thresholdaffine}
holds. Then 
\begin{enumerate}[resume]
\item \label{enu:inv-ta}if $f$ is invertible,
\[
f^{-1}(y)=\sum_{\ell=1}^{L}\indic\{b^{\trans}y\in(\nu_{\ell-1},\nu_{\ell}]\}(\Phi^{(\ell)})^{-1}(y-\phi^{(\ell)})
\]
where $b^{\trans}\defeq a^{\trans}(\Phi^{(1)})^{-1}$, and for $\ell\in\{0,\ldots,L\}$
\[
\nu_{\ell}\defeq\frac{\det\Phi^{(\ell)}}{\det\Phi^{(1)}}\tau_{\ell}+a^{\trans}(\Phi^{(1)})^{-1}\phi^{(\ell)}
\]
\end{enumerate}
\end{lem}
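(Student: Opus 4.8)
The plan is to prove the three parts in order, each by an elementary argument exploiting, respectively, convexity of the cells, the homeomorphism property, and continuity of $f$ across the thresholds. For part~\enuref{cohull}, I would fix $x',x''\in\reals^{p}$ and restrict $f$ to the segment $x(t)\defeq x'+t(x''-x')$, $t\in[0,1]$. Since each $\set X^{(\ell)}$ is convex, its intersection with the segment corresponds to a subinterval of parameter values, and because the $\{\set X^{(\ell)}\}_{\ell=1}^{L}$ partition $\reals^{p}$, these subintervals partition $[0,1]$; so there are $0=t_{0}<\dots<t_{m}=1$ and indices $\ell_{1},\dots,\ell_{m}$ with $x(t)\in\set X^{(\ell_{j})}$ on $(t_{j-1},t_{j})$. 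On each such piece $f(x(t))=\phi^{(\ell_{j})}+\Phi^{(\ell_{j})}x(t)$, and using continuity of $f$ at the breakpoints one gets $f(x(t_{j}))-f(x(t_{j-1}))=(t_{j}-t_{j-1})\Phi^{(\ell_{j})}(x''-x')$; telescoping over $j$ and setting $\Phi\defeq\sum_{j}(t_{j}-t_{j-1})\Phi^{(\ell_{j})}$ — a convex combination of the $\Phi^{(\ell)}$ — yields the claim. (Equivalently one may write $f(x'')-f(x')=\int_{0}^{1}\Phi^{(\ell(t))}(x''-x')\diff t$, using that $f$ is Lipschitz and hence absolutely continuous along lines.)

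For part~\enuref{inverse}, I would observe that an invertible $f$, being a homeomorphism, carries the partition $\{\set X^{(\ell)}\}$ to the partition $\{\set Y^{(\ell)}\defeq f(\set X^{(\ell)})\}$ of $\reals^{p}$; and that each $\Phi^{(\ell)}$ attached to a cell of nonempty interior is then necessarily invertible, since $x\elmap\phi^{(\ell)}+\Phi^{(\ell)}x$ is injective there (in the piecewise linear and threshold affine cases every cell has nonempty interior, which covers all $\ell$; for a general piecewise affine $f$ this non-singularity is part of the standing invertibility hypothesis). Then for $y\in\set Y^{(\ell)}$, writing $y=\phi^{(\ell)}+\Phi^{(\ell)}x$ with $x\in\set X^{(\ell)}$ gives $f^{-1}(y)=x=(\Phi^{(\ell)})^{-1}(y-\phi^{(\ell)})$, which is the asserted expression.

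Part~\enuref{inv-ta} is where the real work lies. For a threshold affine, invertible $f$, continuity of $f$ across $\{z:a^{\trans}z=\tau_{\ell}\}$ forces $(\Phi^{(\ell)}-\Phi^{(\ell+1)})z$ to be constant on that hyperplane, so $\Phi^{(\ell)}-\Phi^{(\ell+1)}$ has row space contained in $\spn\{a^{\trans}\}$; iterating from $\ell=1$ gives $\Phi^{(\ell)}=\Phi^{(1)}+C_{\ell}a^{\trans}$ for some $C_{\ell}\in\reals^{p}$, with $C_{1}=0$. With $b^{\trans}\defeq a^{\trans}(\Phi^{(1)})^{-1}$ one then computes $b^{\trans}\Phi^{(\ell)}=\lambda_{\ell}a^{\trans}$, where $\lambda_{\ell}\defeq1+a^{\trans}(\Phi^{(1)})^{-1}C_{\ell}$, and — using $\Phi^{(\ell)}=\Phi^{(1)}(I_{p}+(\Phi^{(1)})^{-1}C_{\ell}a^{\trans})$ together with the rank-one determinant identity $\det(I_{p}+uv^{\trans})=1+v^{\trans}u$ — that $\lambda_{\ell}=\det\Phi^{(\ell)}/\det\Phi^{(1)}$. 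Hence, for $z\in\set Z^{(\ell)}$ and $y=f(z)$,
\[
b^{\trans}y=a^{\trans}(\Phi^{(1)})^{-1}\phi^{(\ell)}+\frac{\det\Phi^{(\ell)}}{\det\Phi^{(1)}}\,a^{\trans}z,
\]
which is strictly increasing in $a^{\trans}z$ (the determinant ratio is positive by the invertibility criterion of \citet{GLM80Ecta}) and maps $(\tau_{\ell-1},\tau_{\ell}]$ onto $(\nu_{\ell-1},\nu_{\ell}]$: the right endpoint is $\nu_{\ell}$ by definition, the left endpoint matches $\nu_{\ell-1}$ since at $a^{\trans}z=\tau_{\ell-1}$ the point $z$ lies on the common boundary of $\set Z^{(\ell-1)}$ and $\set Z^{(\ell)}$ and $f$ is continuous there, with the conventions $\nu_{0}=-\infty$, $\nu_{L}=+\infty$. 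This shows $\set Y^{(\ell)}=\{y:b^{\trans}y\in(\nu_{\ell-1},\nu_{\ell}]\}$, and substituting into the formula of part~\enuref{inverse} gives the stated expression.

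The main obstacle is the identification in part~\enuref{inv-ta}: recognising that threshold-continuity forces the successive slope differences $\Phi^{(\ell)}-\Phi^{(\ell+1)}$ to be rank-one along $a^{\trans}$, and then matching the resulting scalar slope $\lambda_{\ell}$ of $a^{\trans}z\mapsto b^{\trans}f(z)$ with the determinant ratio $\det\Phi^{(\ell)}/\det\Phi^{(1)}$ via the rank-one determinant computation. Parts~\enuref{cohull} and~\enuref{inverse} are essentially bookkeeping once convexity of the cells and the homeomorphism property are exploited.
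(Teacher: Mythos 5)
Your proposal is correct and follows essentially the same route as the paper: segment restriction plus continuity at the cell boundaries for part (i), the homeomorphism/partition argument for part (ii), and the rank-one structure $\Phi^{(\ell)}=\Phi^{(1)}+C_{\ell}a^{\trans}$ forced by threshold continuity, combined with the rank-one determinant identity, for part (iii). Your execution is marginally cleaner in places (telescoping rather than summation by parts in (i); computing $b^{\trans}\Phi^{(\ell)}$ forward instead of $a^{\trans}(\Phi^{(\ell)})^{-1}$ via Sherman--Morrison in (iii)), but the key ideas coincide with the paper's.
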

\begin{proof}
\textbf{\ref{enu:cohull}.} Let $\phi(x)\defeq\sum_{\ell=1}^{L}\indic\{x\in\set X^{(\ell)}\}\phi^{(\ell)}$
and $\Phi(x)\defeq\sum_{\ell=1}^{L}\indic\{x\in\set X^{(\ell)}\}\Phi^{(\ell)}$,
so that these are constant on each $\set X^{(\ell)}$, and $f(x)=\phi(x)+\Phi(x)x$.
Now let $x^{\prime},x^{\prime\prime}\in\reals^{p}$; with this notation,
\begin{align}
f(x^{\prime\prime})-f(x^{\prime}) & =[\phi(x^{\prime\prime})-\phi(x^{\prime})]+[\Phi(x^{\prime\prime})x^{\prime\prime}-\Phi(x^{\prime})x^{\prime}]\nonumber \\
 & =[\phi(x^{\prime\prime})-\phi(x^{\prime})]+\Phi(x^{\prime})(x^{\prime\prime}-x^{\prime})+[\Phi(x^{\prime\prime})-\Phi(x^{\prime})]x^{\prime\prime}.\label{eq:dfdecomp}
\end{align}
Define
\[
x(\delta)\defeq(1-\delta)x^{\prime}+\delta x^{\prime\prime}
\]
for $\delta\in[0,1]$. Since $f$ is continuous, so too is $\delta\elmap f[x(\delta)]$.
Because $\phi$ and $\Phi$ are piecewise constant, and $\{\set X^{(\ell)}\}_{\ell=1}^{L}$
is a convex partition of $\reals^{p}$, it follows that $\delta\elmap\phi[x(\delta)]$
and $\delta\elmap\Phi[x(\delta)]$ have $m\in\{0,\ldots,L-1\}$ points
of discontinuity, located at some $\{\delta_{i}\}_{i=1}^{m}$ with
$\delta_{i}<\delta_{i+1}$ for all $i$. Let $\{x_{i}\}_{i=1}^{m-1}$
be chosen such that $x_{i}=x(\delta)$ for some $\delta\in(\delta_{i},\delta_{i+1})$,
and set $x_{0}\defeq x^{\prime}$ and $x_{m}\defeq x^{\prime\prime}$.
By the continuity of $\delta\elmap f[x(\delta)]$ at each $\delta=\delta_{i}$,
we must have
\begin{equation}
0=\lim_{\delta\dto\delta_{i}}f[x(\delta)]-\lim_{\delta\uto\delta_{i}}f[x(\delta)]=[\phi(x_{i})-\phi(x_{i-1})]+[\Phi(x_{i})-\Phi(x_{i-1})]x(\delta_{i}).\label{eq:ctyatdty}
\end{equation}
for $i\in\{1,\ldots,m\}$. Noting also that
\begin{equation}
x^{\prime\prime}-x(\delta_{i})=x^{\prime\prime}-[(1-\delta_{i})x^{\prime}+\delta_{i}x^{\prime\prime}]=(1-\delta_{i})(x^{\prime\prime}-x^{\prime}),\label{eq:xmxd}
\end{equation}
we may write the final term on the r.h.s.\ of \eqref{dfdecomp} as
\begin{align*}
[\Phi(x^{\prime\prime})-\Phi(x^{\prime})]x^{\prime\prime} & =[\Phi(x_{m})-\Phi(x_{0})]x^{\prime\prime}\\
 & =\sum_{i=1}^{m}[\Phi(x_{i})-\Phi(x_{i-1})]x^{\prime\prime}\\
 & =_{(1)}\sum_{i=1}^{m}[\Phi(x_{i})-\Phi(x_{i-1})][(1-\delta_{i})(x^{\prime\prime}-x^{\prime})+x(\delta_{i})]\\
 & =_{(2)}\sum_{i=1}^{m}(1-\delta_{i})[\Phi(x_{i})-\Phi(x_{i-1})](x^{\prime\prime}-x^{\prime})-\sum_{i=1}^{m}[\phi(x_{i})-\phi(x_{i-1})],
\end{align*}
where $=_{(1)}$ follows from \eqref{xmxd}, and $=_{(2)}$ from \eqref{ctyatdty}.
We note that
\[
\sum_{i=1}^{m}[\phi(x_{i})-\phi(x_{i-1})]=\phi(x_{m})-\phi(x_{0})
\]
and that setting $\delta_{0}\defeq0$ and $\delta_{m+1}=1$, we have
\begin{align*}
 & \sum_{i=1}^{m}(1-\delta_{i})[\Phi(x_{i})-\Phi(x_{i-1})]\\
 & \qquad\qquad\qquad=(1-\delta_{m})\Phi(x_{m})+\sum_{i=1}^{m-1}[(1-\delta_{i})-(1-\delta_{i+1})]\Phi(x_{i})-(1-\delta_{1})\Phi(x_{0})\\
 & \qquad\qquad\qquad=\sum_{i=0}^{m}[(1-\delta_{i})-(1-\delta_{i+1})]\Phi(x_{i})-\Phi(x_{0})\\
 & \qquad\qquad\qquad=\sum_{i=0}^{m}(\delta_{i+1}-\delta_{i})\Phi(x_{i})-\Phi(x_{0})
\end{align*}
and whence
\begin{align*}
[\Phi(x^{\prime\prime})-\Phi(x^{\prime})]x^{\prime\prime} & =\left[\sum_{i=0}^{m}(\delta_{i+1}-\delta_{i})\Phi(x_{i})\right](x^{\prime\prime}-x^{\prime})-\Phi(x_{0})(x^{\prime\prime}-x^{\prime})-[\phi(x_{m})-\phi(x_{0})]\\
 & =\left[\sum_{i=0}^{m}\lambda_{i}\Phi(x_{i})\right](x^{\prime\prime}-x^{\prime})-\Phi(x^{\prime})(x^{\prime\prime}-x^{\prime})-[\phi(x^{\prime\prime})-\phi(x^{\prime})]
\end{align*}
where $\lambda_{i}\defeq\delta_{i+1}-\delta_{i}>0$ and so $\sum_{i=0}^{m}\lambda_{i}=\sum_{i=0}^{m}(\delta_{i+1}-\delta_{i})=\delta_{m+1}-\delta_{0}=1$.
It follows from \eqref{dfdecomp} that
\[
f(x^{\prime\prime})-f(x^{\prime})=\left[\sum_{i=0}^{m}\lambda_{i}\Phi(x_{i})\right](x^{\prime\prime}-x^{\prime}).
\]
Finally, noting that for each $i\in\{1,\ldots,m\}$, there exists
an $\ell_{i}\in\{1,\ldots,L\}$ such that $\Phi(x_{i})=\Phi^{(\ell_{i})}$,
we have $\sum_{i=0}^{m}\lambda_{i}\Phi(x_{i})\in\ch\{\Phi^{(\ell)}\}_{\ell=1}^{L}$
as required.

\textbf{\ref{enu:inverse}.} Since $f$ is surjective, and the sets
$\{\set X^{(\ell)}\}$ partition $\reals^{p}$,
\[
\Union_{\ell=1}^{L}\set Y^{(\ell)}=\Union_{\ell=1}^{L}f(\set X^{(\ell)})=f\left(\Union_{\ell=1}^{L}\set X^{(\ell)}\right)=f(\reals^{p})=\reals^{p},
\]
while $\set Y^{(\ell)}\intsect\set Y^{(\ell^{\prime})}\neq\emptyset$
for every $\ell\neq\ell^{\prime}$, since $f$ is injective. The claimed
form for $f^{-1}(y)$ then follows from the linearity of $f:\set X^{(\ell)}\setmap\set Y^{(\ell)}$,
for each $\ell\in\{1,\ldots,L\}$.

\textbf{\ref{enu:inv-ta}.} Because $f$ in invertible, it follows
from Theorem~4 in \citet{GLM80Ecta} that $c_{\ell}\defeq(\det\Phi^{(1)})/(\det\Phi^{(\ell)})>0$
for $\ell\in\{1,\ldots,L\}$. Since $f$ is continuous at the thresholds,
\begin{equation}
\phi^{(\ell-1)}+\Phi^{(\ell-1)}x=\phi^{(\ell)}+\Phi^{(\ell)}x\label{eq:fcty}
\end{equation}
for all $x\in\reals^{p}$ such that $a^{\trans}x=\tau_{\ell-1}$.
Hence, there exists an $M^{(\ell)}\in\reals^{p\times p}$ such that
\[
\Phi^{(\ell)}-\Phi^{(\ell-1)}=M^{(\ell)}P_{a}
\]
where $P_{a}=a(a^{\trans}a)^{-1}a^{\trans}$, whence
\begin{equation}
\Phi^{(\ell)}=\Phi^{(1)}+\sum_{i=2}^{\ell}M^{(i)}P_{a}\eqdef\Phi^{(1)}+n^{(\ell)}a^{\trans},\label{eq:Phiell}
\end{equation}
and so by the Sherman--Morrison--Woodbury formula and Cauchy's
formula for a rank-one perturbation (see (0.7.4.1) and (0.8.5.11)
in \citealp{HJ13book}), that
\begin{equation}
c_{\ell}^{-1}=1+a^{\trans}(\Phi^{(1)})^{-1}n^{(\ell)}\label{eq:c-ell}
\end{equation}
and
\begin{align}
a^{\trans}(\Phi^{(\ell)})^{-1} & =\{1-[1+a^{\trans}(\Phi^{(1)})^{-1}n^{(\ell)}]^{-1}a^{\trans}(\Phi^{(1)})^{-1}n^{(\ell)}\}a^{\trans}(\Phi^{(1)})^{-1}\nonumber \\
 & =\frac{1}{1+a^{\trans}(\Phi^{(1)})^{-1}n^{(\ell)}}a^{\trans}(\Phi^{(1)})^{-1}=c_{\ell}a^{\trans}(\Phi^{(1)})^{-1}.\label{eq:aTPhi}
\end{align}

Now by the invertibility of $f$ and the result of part~\ref{enu:inverse},
it may be verified that for $\ell\in\{1,\ldots,L\}$
\begin{align*}
y\in\set Y^{(\ell)}\iff f^{-1}(y)\in\set X^{(\ell)} & \iff a^{\trans}f^{-1}(y)\in(\tau_{\ell-1},\tau_{\ell}]\\
 & \iff a^{\trans}(\Phi^{(\ell)})^{-1}(y-\phi^{(\ell)})\in(\tau_{\ell-1},\tau_{\ell}].
\end{align*}
By \eqref{aTPhi}, the final condition is equivalent to 
\[
c_{\ell}a^{\trans}(\Phi^{(1)})^{-1}(y-\phi^{(\ell)})\in(\tau_{\ell-1},\tau_{\ell}]\iff b^{\trans}y\in(c_{\ell}^{-1}\tau_{\ell-1}+b^{\trans}\phi^{(\ell)},c_{\ell}^{-1}\tau_{\ell}+b^{\trans}\phi^{(\ell)}]
\]
where $b^{\trans}\defeq a^{\trans}(\Phi^{(1)})^{-1}$. Observe $c_{\ell}^{-1}\tau_{\ell}+b^{\trans}\phi^{(\ell)}=\nu_{\ell}$.
Finally, we note that for $x\in\reals^{p}$ such that $a^{\trans}x=\tau_{\ell-1}$,
\begin{align*}
b^{\trans}\phi^{(\ell)} & =b^{\trans}\phi^{(\ell-1)}+b^{\trans}(\Phi^{(\ell-1)}-\Phi^{(\ell)})x\\
 & =_{(1)}b^{\trans}\phi^{(\ell-1)}+a^{\trans}(\Phi^{(1)})^{-1}(n^{(\ell-1)}-n^{(\ell)})\tau_{\ell-1}\\
 & =_{(2)}b^{\trans}\phi^{(\ell-1)}+(c_{\ell-1}^{-1}-c_{\ell}^{-1})\tau_{\ell-1}
\end{align*}
where $=_{(1)}$ holds by \eqref{fcty} and \eqref{Phiell}, and $=_{(2)}$
by \eqref{c-ell}, whence
\[
c_{\ell}^{-1}\tau_{\ell-1}+b^{\trans}\phi^{(\ell)}=c_{\ell-1}^{-1}\tau_{\ell-1}+b^{\trans}\phi^{(\ell-1)}=\nu_{\ell-1}.\qedhere
\]
\end{proof}
\begin{lem}
\label{lem:pwa-homeo}Suppose
\begin{enumerate}
\item $\fe(x)=\sum_{\ell=1}^{L}\indic\{x\in\set X^{(\ell)}\}(\phi^{(\ell)}+\Phi^{(\ell)}x)$
is either a
\begin{enumerate}[ref=(i)(\alph*)]
\item \label{enu:pwl}piecewise linear function, or
\item \label{enu:ta}threshold affine function; and
\end{enumerate}
\item $\sgn\det\Phi^{(\ell)}=\sgn\det\Phi^{(1)}\neq0$ for all $\ell\in\{1,\ldots,L\}$.
\end{enumerate}
Then $\fe:\reals^{p}\setmap\reals^{p}$ is a homeomorphism.
\end{lem}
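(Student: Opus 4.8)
The strategy is to show that, in each of the two cases~\enuref{pwl} and~\enuref{ta}, the map $\fe:\reals^{p}\setmap\reals^{p}$ is a continuous bijection which is moreover \emph{proper}, whence $\fe$ is a homeomorphism. Continuity of $\fe$ is part of the hypothesis, since in this paper a piecewise affine function is by definition continuous (see \subsecref{piecewise-affine}); and properness is immediate, because each $\Phi^{(\ell)}$ is nonsingular, so on each of the finitely many cells $\set Z^{(\ell)}$ one has $\smlnorm{\fe(x)}\geq\smlnorm{\Phi^{(\ell)}x}-\smlnorm{\phi^{(\ell)}}\to\infty$ as $\smlnorm x\to\infty$, and hence $\smlnorm{\fe(x)}\to\infty$ as $\smlnorm x\to\infty$, so that $\fe^{-1}(K)$ is bounded — and closed, hence compact — for every bounded $K\subset\reals^{p}$. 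A proper continuous bijection into a locally compact Hausdorff space is a closed map and therefore a homeomorphism; alternatively one may invoke Brouwer's invariance of domain, or — once bijectivity is in hand — read off the continuity of $\fe^{-1}$ directly from the explicit formula in \lemref{pwa-conhull}\enuref{inverse}. Thus the entire content of the lemma is the \emph{bijectivity} of $\fe$.

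For bijectivity I would appeal to the global univalence theory for piecewise affine maps with endogenous switching developed by \citet{GLM80Ecta}, for which the hypothesis that the nonzero determinants $\det\Phi^{(\ell)}$ all share a common sign is precisely the required coherency condition. In case~\enuref{ta} this is their Theorem~4, which asserts that a $\fe$ of the threshold affine form is a bijection of $\reals^{p}$ onto itself if and only if $(\det\Phi^{(1)})/(\det\Phi^{(\ell)})>0$ for every $\ell\in\{1,\ldots,L\}$; note that the reverse implication of this equivalence was already used in the proof of \lemref{pwa-conhull}\enuref{inv-ta}. In case~\enuref{pwl}, $\fe$ is linear on each of the $2^{p}$ closed cones $\set C_{{\cal I}}$ determined by the fixed basis $\{a_{i}\}_{i=1}^{p}$, with $\fe(0)=0$ and $\fe(\lambda x)=\lambda\fe(x)$ for all $\lambda\geq0$; the determinant hypothesis says exactly that these linear pieces are coherently oriented, and the corresponding result of \citet{GLM80Ecta} again yields that $\fe$ is a bijection. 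Concretely, nonsingularity of each $\Phi^{(\ell)}$ means $\fe$ carries each $\set C_{{\cal I}}$ homeomorphically onto a closed polyhedral cone; positive homogeneity of $\fe$ together with the common determinant sign force these image cones to have pairwise disjoint interiors and to cover $\reals^{p}$ (a mapping-degree computation on a small sphere about the origin, where $\fe$ is determined up to positive scaling by its pieces); and continuity of $\fe$ forces the pieces of $\fe^{-1}$ to agree on shared facets. Injectivity and surjectivity of $\fe$ follow.

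I expect the only genuinely delicate point to be case~\enuref{pwl}, since coherent orientation is \emph{not}, in general, sufficient for a continuous piecewise affine map to be globally univalent. What makes the argument go through is the rigid combinatorial structure imposed by the definition of a piecewise linear VAR in \subsecref{piecewise-affine}: each region $\set Z^{(\ell)}$ is a union of the cones $\set C_{{\cal I}}$ relative to one and the same basis, so the cells of linearity are unions of coordinate orthants whose faces lie on coordinate hyperplanes — which is exactly the configuration for which \citet{GLM80Ecta} establish that the determinant sign condition is necessary and sufficient for invertibility. The proof therefore reduces to checking that the hypotheses of their theorem match the present setting and then appending the short properness argument above; I do not anticipate any further obstacle.
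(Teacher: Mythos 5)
Your proposal is correct and follows essentially the same route as the paper: bijectivity is delegated to Theorems~1 and~4 of \citet{GLM80Ecta} in the piecewise linear and threshold affine cases respectively, and continuity of $\fe^{-1}$ is then obtained from a coercivity/properness argument ($\smlnorm{\fe(x)}\goesto\infty$ as $\smlnorm x\goesto\infty$, using the nonsingularity of each $\Phi^{(\ell)}$ on each of the finitely many cells), which is exactly the paper's $\delta_{n}\goesto\infty$ step phrased in the language of proper maps.
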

\begin{proof}
By either Theorem\ 1 and 4 in \citet{GLM80Ecta}, which are applicable
in cases \ref{enu:pwl} and \ref{enu:ta} respectively, $\fe$ is
invertible. Being continuous by assumption, its restriction to any
compact subset $A$ of $\reals^{p}$ is accordingly a homeomorphism
onto its image, i.e.\ $\fe^{-1}$ is continuous on $\fe(A)$, for
any $A\subset\reals^{p}$ compact.

Let $B(z,\rho)$ and $\abv B(z,\rho)$ respectively denote the open
and closed balls of radius $\rho$, centred at $z$, in $\reals^{p}$.
We shall show below that, in each of cases \ref{enu:pwl} and \ref{enu:ta},
that
\begin{equation}
\delta_{n}\defeq\inf_{\smlnorm z\geq n}\smlnorm{\fe(z)}\goesto\infty\label{eq:delta-diverge}
\end{equation}
as $n\goesto\infty$. By construction, if $z\in\reals^{p}$ is such
that $\smlnorm{\fe(z)}<\delta_{n}$, then we must have $z\in B(0,n)$:
deduce by the invertibility of $f$ that $B(0,\delta_{n})\subset\fe[B(0,n)]$.
Therefore for every $y\in\reals^{p}$ and $\epsilon>0$, there exists
an $n_{0}\in\naturals$ such that
\[
B(y,\epsilon)\subset B(0,\delta_{n})\subset\fe[B(0,n)]\subset\fe[\abv B(0,n)]
\]
for all $n\ge n_{0}$. But as noted above, $\fe^{-1}$ is continuous
on $\fe[\abv B(0,n)]$; therefore it is continuous at $y$. Since
$y\in\reals^{p}$ was arbitrary, $\fe^{-1}$ is continuous on the
whole of $\reals^{p}$, whence $\fe$ is a homeomorphism.

It remains to prove \eqref{delta-diverge}. Suppose \ref{enu:pwl}
holds. Then $\fe(0)=0$, and $\fe$ is non-negative homogeneous of
degree one: so that for any $\lambda\geq0$, $\fe(\lambda z)=\lambda\fe(z)$.
Hence
\[
\delta_{n}=\inf_{\smlnorm z\geq n}\smlnorm{\fe(z)}=\inf_{\lambda\geq n}\inf_{\smlnorm z=1}\smlnorm{\fe(\lambda z)}=n\inf_{\smlnorm z=1}\smlnorm{\fe(z)}.
\]
Since $\fe^{-1}$ is continuous on $\fe[\abv B(0,1)]$, and $\fe(0)=0$,
we must have $\inf_{\smlnorm z=1}\smlnorm{\fe(z)}>0$. Hence $\delta_{n}\goesto\infty$
as required. Alternately, suppose \ref{enu:ta} holds. Then
\begin{align*}
\delta_{n}=\inf_{\smlnorm x\geq n}\smlnorm{\fe(z)} & =\inf_{\smlnorm x\geq n}\norm{\sum_{\ell=1}^{L}\indic\{a^{\trans}z\in(\tau_{\ell-1},\tau_{\ell}]\}(\bar{\phi}_{0}^{(\ell)}+\Phi_{0}^{(\ell)}z)}\\
 & \geq\min_{\ell\in\{1,\ldots,L\}}\inf_{\smlnorm z\geq n}\smlnorm{\Phi_{0}^{(\ell)}z}-\max_{\ell\in\{1,\ldots,L\}}\smlnorm{\bar{\phi}_{0}^{(\ell)}}\\
 & =n\min_{\ell\in\{1,\ldots,L\}}\inf_{\smlnorm z=1}\smlnorm{\Phi_{0}^{(\ell)}z}-\max_{\ell\in\{1,\ldots,L\}}\smlnorm{\bar{\phi}_{0}^{(\ell)}}
\end{align*}
Because $\det\Phi_{0}^{(\ell)}\neq0$ for all $\ell\in\{1,\ldots,L\}$,
each $\Phi_{0}^{(\ell)}$ has full rank, and so there exists a $C>0$
such that $\inf_{\smlnorm z=1}\smlnorm{\Phi_{0}^{(\ell)}z}\geq C$
for all $\ell\in\{1,\ldots,L\}$. Hence
\[
\delta_{n}\geq nC-\max_{\ell\in\{1,\ldots,L\}}\smlnorm{\bar{\phi}_{0}^{(\ell)}}\goesto\infty.\qedhere
\]
\end{proof}

\begin{proof}[Proof of \lemref{affine-conditions}]
 By \eqref{pi-piecewise} and the maintained assumptions,
\begin{equation}
\pi(z)=\sum_{\ell=1}^{L}\indic^{(\ell)}(z)(\bar{\pi}^{(\ell)}+\Pi^{(\ell)}z)=\alpha\sum_{\ell=1}^{L}\indic^{(\ell)}(z)(\bar{\mu}^{(\ell)}+\beta^{(\ell)\trans}z)=\alpha\eq(z)\label{eq:pi-pwa}
\end{equation}
and for $j\in\{0,\ldots,k-1\}$,
\begin{align}
\ga_{j}(z)=-\sum_{i=j+1}^{k}\fe_{i}(z) & =\sum_{\ell=1}^{L}\indic^{(\ell)}(z)\left[-\sum_{i=j+1}^{k}\bar{\phi}_{i}^{(\ell)}-\sum_{i=j+1}^{k}\Phi_{i}^{(\ell)}z\right]\nonumber \\
 & =\sum_{\ell=1}^{L}\indic^{(\ell)}(z)[\bar{\gamma}_{j}^{(\ell)}+\Gamma_{j}^{(\ell)}z].\label{eq:ga-pwa}
\end{align}
where $\bar{\gamma}_{j}^{(\ell)}\defeq-\sum_{i=j+1}^{k}\bar{\phi}_{i}^{(\ell)}$
and $\Gamma_{j}^{(\ell)}=-\sum_{i=j+1}^{k}\Phi_{i}^{(\ell)}$. Letting
$\bar{\b{\gamma}}^{(\ell)}\defeq[\gamma_{j}^{(\ell)}]_{j=1}^{k-1}$
and $\b{\Gamma}^{(\ell)}\defeq[\Gamma_{j}^{(\ell)}]_{j=1}^{k-1}$,
it follows from the preceding that
\begin{equation}
\b{\eq}(z)=\begin{bmatrix}\eq(z)\\
\b{\ga}(z)
\end{bmatrix}=\sum_{\ell=1}^{L}\indic^{(\ell)}(z)\left(\begin{bmatrix}\bar{\mu}^{(\ell)}\\
\bar{\b{\gamma}}^{(\ell)}
\end{bmatrix}+\begin{bmatrix}\beta^{(\ell)\trans}\\
\b{\Gamma}^{(\ell)}
\end{bmatrix}z\right)\eqdef m(z)+M(z)z,\label{eq:beq-decomp}
\end{equation}
so that $\b{\eq}$ is piecewise affine, with $m(z)$ and $M(z)$ being
constant on each $\set Z^{(\ell)}$. By the assumed continuity of
$\{\fe_{i}\}_{i=0}^{k}$, $\eq$ and $\{\ga_{i}\}_{i=1}^{k-1}$ are
continuous, and hence so too is $\b{\eq}$.

In view of \eqref{gradient}, we need to consider $\b{\eq}\compose\fe_{0}^{-1}$,
where by assumption $\fe_{0}^{-1}$ exists and is continuous. Recalling
that 
\[
\fe_{0}(z)=\sum_{\ell=1}^{L}\indic\{z\in\set Z^{(\ell)}\}(\bar{\phi}_{0}^{(\ell)}+\Phi_{0}^{(\ell)}z),
\]
it follows from \lemref{pwa-conhull}\ref{enu:inverse} that
\[
\fe_{0}^{-1}(y)=\sum_{\ell=1}^{L}\indic\{y\in\set Y^{(\ell)}\}(\Phi_{0}^{(\ell)})^{-1}(y-\bar{\phi}_{0}^{(\ell)})
\]
where $\set Y^{(\ell)}\defeq\fe_{0}(\set Z^{(\ell)})$. Since $m$
and $M$ in \eqref{beq-decomp} are constant on on each $\set Z^{(\ell)}$,
and $\fe_{0}^{-1}(\set Y^{(\ell)})=\set Z^{(\ell)}$, the compositions
$m\compose\fe_{0}^{-1}$ and $M\compose\fe_{0}^{-1}$ must themselves
be constant on each $\set Y^{(\ell)}$. Letting $m^{(\ell)}\defeq m(z)$
and $M^{(\ell)}\defeq M(z)$ for $z\in\set Z^{(\ell)}$, we thus have
\begin{align*}
\b{\eq}\compose\fe_{0}^{-1}(y) & =\sum_{\ell=1}^{L}\indic\{y\in\set Y^{(\ell)}\}[m^{(\ell)}+M^{(\ell)}(\Phi_{0}^{(\ell)})^{-1}(y-\bar{\phi}_{0}^{(\ell)})]\\
 & =\sum_{\ell=1}^{L}\indic\{y\in\set Y^{(\ell)}\}[(m^{(\ell)}-M^{(\ell)}(\Phi_{0}^{(\ell)})^{-1}\bar{\phi}_{0}^{(\ell)})+M^{(\ell)}(\Phi_{0}^{(\ell)})^{-1}y]\\
 & \eqdef\sum_{\ell=1}^{L}\indic\{y\in\set Y^{(\ell)}\}[\vartheta^{(\ell)}+\Theta^{(\ell)}y].
\end{align*}
Letting $\b y\defeq(y^{\trans},\b x^{\trans})$ for $\b x\in\reals^{p(k-1)}$,
it follows that
\[
\b{\eq}\compose\fe_{0}^{-1}(y)+\b D_{0}\b y
\]
is piecewise affine on $\{\set Y^{(\ell)}\times\reals^{p(k-1)}\}_{\ell=1}^{L}$,
which provide a convex partition of $\reals^{kp}$. Therefore by \lemref{pwa-conhull},
for any $\b y,\b y^{\prime}\in\reals^{kp}$,
\[
[\b{\eq}\compose\fe_{0}^{-1}(y)+\b D_{0}\b y]-[\b{\eq}\compose\fe_{0}^{-1}(y^{\prime})+\b D_{0}\b y^{\prime}]=\b{\beta}^{\trans}(\b y-\b y^{\prime}),
\]
where $\b{\beta}^{\trans}$ lies in the convex hull formed from
\[
\Theta^{(\ell)}E^{\trans}+\b D_{0}=M^{(\ell)}(\Phi_{0}^{(\ell)})^{-1}E^{\trans}+\b D_{0}=\begin{bmatrix}\beta^{(\ell)\trans}(\Phi_{0}^{(\ell)})^{-1} & 0\\
\b{\Gamma}^{(\ell)}(\Phi_{0}^{(\ell)})^{-1} & D
\end{bmatrix}=\b{\beta}^{(\ell)\trans}
\]
for $\ell\in\{1,\ldots,L\}$. Deduce that \eqref{gradient} holds
with ${\cal B}=\ch\{\b{\beta}^{(\ell)}\}_{\ell=1}^{L}$, and therefore
\begin{align*}
\rho_{\jsr}(\{I_{p(k-1)+r}+\b{\beta}^{\trans}\b{\alpha}\mid\b{\beta}\in{\cal B}\}) & =\rho_{\jsr}(\ch\{I_{p(k-1)+r}+\b{\beta}^{(\ell)\trans}\b{\alpha}\}_{\ell=1}^{L})\\
 & =\rho_{\jsr}(\{I_{p(k-1)+r}+\b{\beta}^{(\ell)\trans}\b{\alpha}\}_{\ell=1}^{L}),
\end{align*}
where the second equality follows by \citet[Prop.~1.8]{Jungers09}.
\end{proof}
\begin{proof}[Proof of \propref{affine}]
 We verify the conditions required for membership of $\cvar_{r}(\alpha,\abv b,\abv{\rho})$.

\textbf{\ref{enu:cvar:homeo}.} In view of \ref{enu:pwa:homeo}, \lemref{pwa-homeo}
yields that $\fe_{0}$ is a homeomorphism. (Note that the subsequent
arguments only require that $(c,\{\fe_{i}\}_{i=0}^{k})$ be piecewise
affine, so that the conclusions of the theorem also hold when it is
directly assumed that $\fe_{0}$ is a homeomorphism.)

\textbf{\ref{enu:cvar:rank}.} From \eqref{pi-piecewise} and \ref{enu:pwa:rank},
$c=\alpha\mu$ and
\[
\pi(z)=\sum_{\ell=1}^{L}\indic^{(\ell)}(z)(\bar{\pi}^{(\ell)}+\Pi^{(\ell)}z)=\alpha\sum_{\ell=1}^{L}\indic^{(\ell)}(z)(\bar{\mu}^{(\ell)}+\beta^{(\ell)\trans}z)\eqdef\alpha\eq(z).
\]

\textbf{\ref{enu:cvar:jsr}.} \lemref{affine-conditions} yields that
\eqref{gradient} is satisfied for ${\cal B}\subset\reals^{p(k-1)+r}$
such that
\[
\rho_{\jsr}(\{I_{p(k-1)+r}+\b{\beta}^{\trans}\b{\alpha}\mid\b{\beta}\in{\cal B}\})=\rho_{\jsr}(\{I_{p(k-1)+r}+\b{\beta}^{(\ell)\trans}\b{\alpha}\}_{\ell=1}^{L}),
\]
where the r.h.s.\ is bounded by $\abv{\rho}<1$, by \ref{enu:pwa:jsr}.

Finally, the claimed form for $\co$ follows by noting that $\eq(z)$
must be as in \eqref{pi-pwa} in the proof of \lemref{affine-conditions},
while \eqref{ct-def} here implies that
\begin{align*}
\ct(z)=\alpha_{\perp}^{\trans}\left[\fe_{0}(z)-\sum_{i=1}^{k-1}\ga_{i}(z)\right] & =\sum_{\ell=1}^{L}\indic^{(\ell)}(z)\alpha_{\perp}^{\trans}\left\{ \bar{\phi}_{0}^{(\ell)}+\Phi_{0}^{(\ell)}z-\sum_{i=1}^{k}(\bar{\gamma}_{i}^{(\ell)}+\Gamma_{i}^{(\ell)}z)\right\} \\
 & =\sum_{\ell=1}^{L}\indic^{(\ell)}(z)\alpha_{\perp}^{\trans}(\bar{\srfn}^{(\ell)}+\srp^{(\ell)}z),
\end{align*}
we we have used the expression for $\ga_{i}$ given in \eqref{ga-pwa}
in the proof of \lemref{affine-conditions}.
\end{proof}
\begin{proof}[Proof of \propref{smoothed}]
 We verify each of the requirements of \defref{cvar} in turn.

\textbf{\enuref{cvar:homeo}.} $\fe_{0,K}(z)=\Phi_{0}z=\fe_{0}$ is
trivially a homeomorphism, since $\Phi_{0}$ is invertible.

\textbf{\enuref{cvar:rank}.} Recalling \eqref{pi-def}, we have
\begin{multline*}
\pi_{K}(z)=-\fe_{0,K}(z)+\sum_{i=1}^{k}\fe_{i,K}(z)=\int_{\reals^{p}}\left[-\fe_{0}(z^{\prime})+\sum_{i=1}^{k}\fe_{i}(z^{\prime})\right]K(z^{\prime}-z)\diff z^{\prime}\\
=\int_{\reals^{p}}\pi(z^{\prime})K(z^{\prime}-z)\diff z^{\prime}=\alpha\int_{\reals^{p}}\eq(z^{\prime})K(z^{\prime}-z)\diff z^{\prime}\eqdef\alpha\eq_{K}(z)
\end{multline*}

\textbf{\enuref{cvar:jsr}.} Fix $\b z^{\prime}=(z^{\prime\trans},\b{\zeta}^{\prime\trans})^{\trans}$
and $\b z^{\prime\prime}=(z^{\prime\prime\trans},\b{\zeta}^{\prime\prime\trans})^{\trans}$
in $\reals^{p}\times\reals^{p(k-1)}$. Since $\b{\eq}_{K}(z)=\int_{\reals^{p}}\b{\eq}(z^{\prime})K(z^{\prime}-z)\diff z^{\prime}$
and $\fe_{0,K}=\fe_{0}$ is linear, we have
\begin{align*}
\b{\eq}_{K}\compose\fe_{0,K}^{-1}(z^{\prime})-\b{\eq}_{K}\compose\fe_{0,K}^{-1}(z^{\prime\prime}) & =\int_{\reals^{p}}\{\b{\eq}[\fe_{0}^{-1}(z^{\prime})+u]-\b{\eq}[\fe_{0}^{-1}(z^{\prime\prime})+u]\}K(u)\diff u\\
 & =\int_{\reals^{p}}\{(\b{\eq}\compose\fe_{0}^{-1})[z^{\prime}+\fe_{0}(u)]-(\b{\eq}\compose\fe_{0}^{-1})[z^{\prime\prime}+\fe_{0}(u)]\}K(u)\diff u
\end{align*}
whence
\begin{align*}
 & [\b{\eq}_{K}\compose\fe_{0,K}^{-1}(z^{\prime})+\b D_{0}\b z^{\prime}]-[\b{\eq}_{K}\compose\fe_{0,K}^{-1}(z^{\prime\prime})+\b D_{0}\b z^{\prime\prime}]\\
 & \qquad\qquad=\int_{\reals^{p}}[\{(\b{\eq}\compose\fe_{0}^{-1})[z^{\prime}+\fe_{0}(u)]+\b D_{0}\b z^{\prime}\}-\{(\b{\eq}\compose\fe_{0}^{-1})[z^{\prime\prime}+\fe_{0}(u)]+\b D_{0}\b z^{\prime\prime}\}]K(u)\diff u\\
 & \qquad\qquad=\int_{\reals^{p}}\b{\beta}_{u}^{\trans}(\b z^{\prime}-\b z^{\prime\prime})K(u)\diff u=\left[\int_{\reals^{p}}\b{\beta}_{u}K(u)\diff u\right]^{\trans}(\b z^{\prime}-\b z^{\prime\prime})\eqdef\abv{\b{\beta}}^{\trans}(\b z^{\prime}-\b z^{\prime\prime}),
\end{align*}
where $\b{\beta}_{u}\in{\cal B}$ for each $u\in\reals^{p}$, for
${\cal B}$ the set of gradients associated to $(c,\{\fe_{i}\}_{i=0}^{k})$.
Since ${\cal B}$ is convex by \lemref{affine-conditions}, $\abv{\b{\beta}}\in{\cal B}$.
Hence \enuref{cvar:jsr} holds with ${\cal B}_{K}={\cal B}$.
\end{proof}

\end{document}